\newcommand{\R}{\mathbb{R}}
\newcommand{\C}{\mathbb{C}}
\newcommand{\N}{\mathbb{N}}
\newcommand{\Rp}{\mathbb{R}^{+}}
\newcommand{\mrm}[1]{\mathrm{#1}}
\newcommand{\ol}[1]{\overline{#1}}
\newcommand{\co}{\colon}
\newcommand{\vrt}{\,\vert\,}
\newcommand{\lto}{\rightarrow}
\newcommand{\lmap}{\mapsto}
\newcommand{\abs}[1]{\lvert#1\rvert}
\newcommand{\norm}[1]{\lVert#1\rVert}
\newcommand{\ie}{\textit{i.e.}\;}
\newcommand{\eg}{\textit{e.g.}\;}
\newcommand{\cf}{\textit{cf.}\;}
\newcommand{\B}{\mathcal{B}}
\renewcommand{\H}{\mathfrak{H}}
\newcommand{\h}{\mathfrak{h}}
\newcommand{\K}{\mathfrak{K}}
\renewcommand{\L}{\mathfrak{L}}
\newcommand{\Hex}{\H^{\mrm{ex}}}
\newcommand{\HLex}{\H_L^{\mrm{ex}}}
\newcommand{\indx}{\mrm{I}}
\newcommand{\F}{\mathscr{F}}
\newcommand{\bJ}{\mathsf{J}}
\newcommand{\ho}{\hat{o}}
\newcommand{\he}{\hat{e}}
\newcommand{\e}{\mathsf{1}}
\newcommand{\Sym}{\mathfrak{S}}
\newcommand{\su}{\mathfrak{su}_2}
\newcommand{\dl}{\shortmid}
\newcommand{\U}{\mathfrak{U}}
\newcommand{\cgc}[6]{\begin{bmatrix}
#1 & #2 & #3 \\ #4 & #5 & #6\end{bmatrix}}
\newcommand{\indxl}{\mrm{I}_{\lambda}}
\newcommand{\sixj}[6]{\begin{Bmatrix}
#1 & #2 & #3 \\ #4 & #5 & #6\end{Bmatrix}}
\newcommand{\E}{\mathcal{E}}
\newcommand{\dfn}{:=}
\newcommand{\setm}{\smallsetminus}
\newcommand{\mcup}{\cup}
\newcommand{\mcap}{\cap}
\newcommand{\ot}{\otimes}
\newcommand{\hot}{\hat{\otimes}}
\newcommand{\op}{\oplus}
\DeclareMathOperator{\dom}{dom}
\DeclareMathOperator{\End}{End}
\DeclareMathOperator{\supp}{supp}
\DeclareMathOperator{\pv}{\fint}
\theoremstyle{plain}
\newtheorem{thm}{Theorem}[section]
\newtheorem{cor}[thm]{Corollary}
\newtheorem{lem}[thm]{Lemma}
\newtheorem{prop}[thm]{Proposition}
\newtheorem{hypo}[thm]{Hypothesis}
\theoremstyle{definition}
\newtheorem{rem}[thm]{Remark}
\newtheorem{rems}[thm]{Remarks}
\numberwithin{equation}{section}
\begin{document}
\title[Direct integral description of angulon]{Direct integral description of angulon}
\author{Rytis Jur\v{s}\.{e}nas}
\address{Vilnius University, Institute of Theoretical Physics and Astronomy, 
Saul\.{e}tekio ave.~3, 10222 Vilnius, Lithuania}
\email{Rytis.Jursenas@tfai.vu.lt}
\subjclass[2010]{Primary 15A69, 47A10, 47B40; Secondary 47A60}
\date{\today}
\keywords{Angulon operator, decomposable operator, symmetric tensor algebra,
angular momentum.}
\begin{abstract} 
We propose a representation of angulon in which the angulon operator is decomposable
relative to the field of Hilbert spaces over the probability measure space, and the
probability measure corresponds to the total-number operator of phonons.
In this representation we are able to find the system of $N+1$ equations whose solutions 
form the eigenspace of the angulon operator, where $1\leq N<\infty$ is the number of phonon
excitations. Using this result we estimate the infimum of the spectrum. In the special 
case $N=1$, the lowest energy approximates to the value which is already known in the
literature. Our findings indicate that two-phonon excitations ($N=2$) contribute notably to
the energy of a molecule in superfluid $^4$He.
\end{abstract}
\maketitle
\section{Introduction}
In this paper the object of interest is the angulon, first introduced in
\cite{Schm-Lem-a} and later developed in \cite{Schm-Lem-b,Midya16}. Angulon represents a 
rotational analogue of polaron, actively studied in the context of solid state and atomic 
settings \cite{Griesemer,Devreese15}. However, non-commutativity and discrete energy 
spectrum inherent to quantum rotations renders the angulon physics substantially different 
compared to any of the polaron models. The concept of angulon considerably simplifies the 
problems involving an impurity possessing orbital angular momentum immersed into a bath of 
bosons. Thereby it paves the way for understanding cold molecules rotating inside 
superfluid helium nanodroplets \cite{ToenniesAngChem04} and ultracold gases 
\cite{Midya16,LemSchmidtChapter}, Rydberg electrons in Bose--Einstein condensates 
\cite{BalewskiNature13,SchmidtDem2016}, electronic excitations coupled to phonons in 
solids \cite{TowsPRL15}, and several other systems.

The angulon operator describes a linear-rotor molecule dressed by the bosonic field.
The operator consists of the kinetic energies of a molecule and the bosonic
field, as well as the interaction potential. The kinetic energy of a molecule is given
by $c\bJ^{2}$, where $c>0$ is the rotational constant and where the vector-valued operator 
$\bJ$ is called the angular momentum operator of a linear-rotor impurity. The bosonic part 
of the Hamiltonian for a molecular impurity describes bosons with some dispersion relation 
(typically $k^2$, where $k\geq0$) and with contact interactions. Using the Bogoliubov 
transformation which maps bosonic particles to quasi-particles, called phonons, one 
transforms the bosonic part to the Hamiltonian that describes phonons with a dispersion 
relation, say $\omega(k)$. One 
further expresses the creation/annihilation operators of phonons in the angular momentum 
basis so that the kinetic energy of the bosonic field is the free Hamiltonian, 
$\int\omega(k)n(k)dk$, of mass $\omega$. Here $n(k)$ is the sum of occupation numbers 
corresponding to the $\iota$th phonon state; $\iota$ runs over the set $\indx$ of 
pairs $(\lambda,\rho)$, where $\lambda\in\N_{0}$ is the \textit{phonon angular momentum} 
and $\rho\in\indxl\dfn\{-\lambda,-\lambda+1,\ldots,\lambda\}$. The integration is
performed over $\Rp\dfn[0,\infty)$.

In its original form, the interaction potential is given by 
\begin{equation}
2\Re\int \sum_{\lambda,\rho} U_\lambda(k)Y_{\lambda,-\rho}(\ho)\hat{b}_{\lambda\rho}(k)dk.
\label{eq:WSL}
\end{equation}
The angular momentum dependent coupling $U_\lambda(k)$ signifies the strength of the
potential and it depends on the microscopic details of the two-body interaction between
the impurity and the bosons \cite{Midya16}; typically, $U_\lambda(k)\equiv0$ for 
$\lambda>1$. The spherical harmonic $Y_{\iota}(\ho)$ is 
parametrized by the spherical angles $\ho$ of a molecule. The dependence on the molecular 
orientation gives rise to a nonzero commutator $[\bJ,Y_{\iota}(\ho)]$, which is  
calculated in a usual way \cite{Rudzikas,Rudz-Kan}. The bosonic creation and annihilation 
operators, $\hat{b}_\iota(k)$ and $\hat{b}_\iota(k)^*$, respectively, are defined as ordinary
operator fields in the sense of quadratic forms; the reader may refer to \cite{Berezin}, 
\cite[Sec.~5.2]{Bratteli}, and \cite[Sec.~X.7]{Reed} for a classic exposition.

Within the framework of the variational approach applied to the ansatz 
\cite{Schm-Lem-a} for the many-body quantum state with single-phonon excitation,
the eigenvalue $E<0$ of the angulon operator satisfies a Dyson-like 
equation 
\begin{equation}
E=cL(L+1)-\Sigma_L(E)
\label{eq:E-SL}
\end{equation}
where the so-called self-energy is defined as
\begin{equation}
\Sigma_L(E)\dfn\int\sum_{J,\lambda}\frac{2\lambda+1}{4\pi}
\cgc{L}{\lambda}{J}{0}{0}{0}^2
\frac{U_\lambda(k)^2}{cJ(J+1)+\omega(k)-E}dk.
\label{eq:Self-SL}
\end{equation}
and $\bigl[\begin{smallmatrix}L & \lambda & J \\ 0 & 0 & 0\end{smallmatrix}\bigr]$
is the Clebsch--Gordan coefficient for the tensor product $[L]\ot [\lambda]$ of
$SO_3$-irreducible representations \cite{Klimyk,Jucys-a,Jucys-b}.
The eigenvalue $E=E_L$ is labeled by the \textit{total angular momentum} $L$ obtained 
by reducing the tensor product $[J]\ot[\lambda]$; the highest weight $J\in\N_0$ of
the third component of the operator $\bJ=(J_1,J_2,J_3)$ is referred to as the
\textit{angular momentum of a linear-rotor impurity}. We have that $E$ is of multiplicity
$2L+1$.
\subsection{Problem setting}
In order to overcome the problem of adding a large number of angular momenta necessary
for the analysis of higher-order phonon excitations, the authors in \cite{Schm-Lem-b} 
make a one step further by introducing a rotation operator which is generated by the
collective angular momentum operator of the many-body bath. The angular momentum operator
has the highest weight $\Lambda\in\N_0$ and the total angular momentum $L$ is obtained
by reducing $[J]\ot[\Lambda]$ in a standard way. The transformation is useful when
the rotational constant $c\to0$, since in this regime the transformed angulon operator
can be diagonalized. Still, the eigenvalue is calculated by using the variational
ansatz based on \textit{single-phonon} excitations, though on top of the transformed 
operator.

In this paper we propose a scheme for treating higher-order phonon excitations
self-consistently. We do not rely on the limit of a slowly rotating impurity ($c\to0$),
nor we need an auxiliary variational ansatz. In fact, we construct a reference 
Hilbert space so that, to some extent, the ansatz is represented by an element of that 
space (\cf \eqref{eq:psiL} and \cite[Eq.~(3)]{Schm-Lem-a}). On the other hand, in our approach 
the number $N\in\N$ of phonon excitations is arbitrarily large but finite. The angular momentum 
$\Lambda$ is then obtained by reducing 
the tensor product of $n\in\{0,1,\ldots,N\}$ copies of $SO_3$-irreducible representations 
$[\lambda]$. The latter requires some angular momentum algebra, but the exposition
becomes elegant once we introduce certain symmetrization coefficients.
As a result, we come by the system of $N+1$ equations whose solutions form the eigenspace
of the angulon operator. 

To achieve our goals, we reformulate the definition of the angulon operator, which we
now show is equivalent to the original one.
\subsection{Description of the model}
We construct the angulon operator as a decomposable operator, $A=\int^{\op}A(k)\mu(dk)$, 
relative to the field $k\lmap\H(k)$ of Hilbert spaces over the probability measure space
$(\Rp,\F,\mu)$; the reader may refer to \cite[Sec.~12]{Schmudgen-2}, 
\cite[Chap.~II.2]{Dixmier} for basic definitions. Here and elsewhere, the direct 
integral is assumed over $\Rp$. The crucial point is that now we are able to put the sum
$n(k)$ of occupation numbers aside in a certain sense and yet to considerably simplify the 
spectral analysis of the angulon operator independently of $n(k)$. Assuming further that
$\mu\ll dk$, with the Radon--Nikodym derivative $\phi$ supported on the whole
$\ol{\Rp}=\Rp\mcup\{\infty\}$, we show that the greatest lower bound of $A$ does not 
depend on $\phi$, and hence on $k\lmap n(k)$.

$\bullet$ \textsl{Measure v. number operator.}
To see the connection between $n(k)$ and $\phi(k)$, let us consider a measurable field 
$k\lmap V(k)$ of Hilbert spaces over the probability measure space $(\Rp,\F,\mu)$. Let 
$\K=\int^{\op}\K(k)\mu(dk)$ be the direct integral of Hilbert spaces and let $\K(k)$ be 
the symmetric tensor algebra $S(V(k))$ equipped with an appropriate scalar product. Assume 
we are given two operators, $\omega$ and $n$, defined by measurable fields 
$k\lmap\omega(k)I(k)$ and $k\lmap n(k)$, respectively. Here $\omega(k)$ is the dispersion 
relation, as discussed above, $I(k)$ is the identity map in $\K(k)$, and $n(k)$ is the 
multiplication operator by $n\in\N_{0}$, provided that $n(k)$ acts on vectors from the 
$n$th symmetric tensor power $S^n(V(k))$. Given a unit vector $v(k)\in S^n(V(k))$, the 
scalar product $\braket{v,n\omega v}_\K$ in $\K$ reads $n\int\omega(k)\mu(dk)$. Now take 
$\mu\ll dk$ with the Radon--Nikodym derivative $\phi\in L^1(\Rp)$. Then, the scalar 
product reads $\int\omega(k)n_\phi(k)dk$ with $n_\phi(k)\dfn\phi(k)n$. We thus have 
obtained a formal analogue of the free bosonic Hamiltonian of mass $\omega$, but now the 
sum of phonon modes is given by $n_\phi(k)$ for a.e. $k$. The example suggests that the 
absolutely continuous (a.c.) probability measure corresponds to the total-number operator of 
phonons.

$\bullet$ \textsl{Coherent v. incoherent phonons.}
In our approach the creation and annihilation operators\footnote{The symbol 
$b_\iota(k)^*$, rather than $b_\iota(k)$, for denoting the annihilation operator seems to 
us more natural
because, as we shall see, the creation operator $b_\iota(k)$ defines the irreducible
tensor operator in the sense of Fano--Racah, while its adjoint $b_\iota(k)^*$ does not. 
Here we follow the notation of \cite{Rudzikas,Rudz-Kan}.}, $b_\iota$ and $b_\iota^*$, are 
decomposable operators relative to the field $k\lmap\K(k)$, and their commutator is given 
by
\begin{equation}
[b_\iota(k)^*,b_{\iota^\prime}(k)]=\delta_{\iota\iota^\prime}I(k)
\label{eq:0}
\end{equation}
where $\delta_{\iota\iota^\prime}\equiv\delta_{\lambda\lambda^\prime}
\delta_{\rho\rho^\prime}$
is the product of Kronecker symbols. At first glance one could expect from \eqref{eq:0}
that such a formulation of the bosonic field is suitable for the study of coherent phonons
only. However, the following argument shows that this is not the case.

Recall that $\Rp$ is the union of Borel sets $\sigma_d\in\F$ defined by $\{k\vrt F(k)=d\}$ 
for $d\in\N_0$ and some $\mu$-simple $F\co\Rp\lto\C$. That is, for every $k\in \sigma_d$, 
the Hilbert space $\K(k)$ is isomorphic to a separable Hilbert space, say $\h_d$, of dimension 
$d$. Let $j(k)$ be the isomorphism of $\K(k)$ onto $\h_d$ for $k\in\sigma_d$. To the operator 
$b_\iota(k)$ in $\K(k)$ corresponds the operator $b_{\iota,d}\dfn j(k)b_\iota(k)j(k)^*$ in $\h_d$ 
for every $k\in\sigma_d$. By \eqref{eq:0}, the operators $b_{\iota,d}^*$ and 
$b_{\iota^\prime,d^\prime}$ satisfy the commutation relation
\begin{equation}
[b_{\iota,d}^*,b_{\iota^\prime,d^\prime}]=\delta_{\iota\iota^\prime}\delta_{dd^\prime}I_d
\label{eq:0-1}
\end{equation}
where $I_d\dfn j(k)I(k)j(k)^*$, with $k\in\sigma_d$, is the identity map in $\h_d$. We see that
\eqref{eq:0-1} and the vanishing commutators 
\[[b_{\iota,d},b_{\iota^\prime,d^\prime}]=0, \quad 
[b_{\iota,d}^*,b_{\iota^\prime,d^\prime}^*]=0\]
together define an infinite-dimensional Heisenberg algebra. Thus we have that both the 
creation/annihilation operators defined via the operator-valued distributions (as in 
\eqref{eq:WSL}; see also \cite[Example~5.2.1]{Bratteli}) and the creation/annihilation operators 
defined as decomposable operators
relative to $k\lmap\K(k)$ are the representations of a centrally extended Lie algebra 
$\hat{\mathfrak{g}}=\mathfrak{g}\op\C1$, with $\mathfrak{g}$ a commutative Lie algebra. The 
commutation relations in $\hat{\mathfrak{g}}$ are defined by $[1,\hat{\mathfrak{g}}]=
[\hat{\mathfrak{g}},1]=0$ and $[x,y]=\braket{x,y}1$ for $x,y\in\mathfrak{g}$. 
$\hat{\mathfrak{g}}$ is a Lie algebra provided that
$[\cdot,\cdot]\co\hat{\mathfrak{g}}\times\hat{\mathfrak{g}}\lto
\hat{\mathfrak{g}}$ is an alternating bilinear map, and a
bilinear form $\braket{\cdot,\cdot}\co\mathfrak{g}\times\mathfrak{g}\lto\C$ is 
$\mathfrak{g}$-invariant. More on this topic can be found in \cite{Frenkel81}.

Now let $T$ be a representation of $\hat{\mathfrak{g}}$ in $\K$. Since $\K$ and the Fock space
$\mathfrak{F}(L^2(\Rp))$ are infinite-dimensional separable Hilbert spaces, there 
exists an isomorphism $\chi$ mapping $\K$ onto $\mathfrak{F}(L^2(\Rp))$, and hence 
$T_\chi(\hat{\mathfrak{g}})\dfn\chi T(\hat{\mathfrak{g}})\chi^{-1}$ is a 
representation of $\hat{\mathfrak{g}}$ in $\mathfrak{F}(L^2(\Rp))$. It is a 
classic result that, for a suitable choice of the bases, the two equivalent 
representations $T(\hat{\mathfrak{g}})$ and $T_\chi(\hat{\mathfrak{g}})$ are given by 
identical matrices; \ie $T(\hat{\mathfrak{g}})$ in the basis $B$ of $\K$ has the same 
matrix as $T_\chi(\hat{\mathfrak{g}})$ in the basis $\chi B$ of
$\mathfrak{F}(L^2(\Rp))$. In this respect, the direct integral description of angulon 
is equivalent to the original one.
\subsection{Main results}
We now briefly describe our main results, Theorems~\ref{thm:eigen}, \ref{thm:infnumL}.

$\bullet$ \textsl{Eigenspace.}
Since the angulon operator $A$ is defined by a measurable field $k\lmap A(k)$ relative to
$k\lmap\H(k)$, $E$ is an eigenvalue of $A$ iff $E$ is an eigenvalue of $A(k)$ for
$\mu$-a.e. $k$ (recall \eg \cite[Theorem~XIII.85(e)]{Reed-2}). In Sec.~\ref{sec:reduce} we show
that $A$ is the orthogonal sum of its parts $A_L$ acting in reducing subspaces $\H_L\subset
\H$. Thus, $E$ is an eigenvalue of $A_L(k)$ for $\mu$-a.e. $k$ and some $L$. 

An element $\psi$ of $\H$ is a square-integrable vector field $k\lmap\psi(k)$, and $\psi$ is in 
one-to-one correspondence with its coordinates calculated with respect to the field of 
orthonormal bases of $\H(k)$ (Theorem~\ref{thm:corth2} and \eqref{eq:hbasis}). When 
$\psi_{LM_L}(k)$, with $M_L\in\{-L,-L+1,\ldots,L\}$, is an eigenvector of $A_L(k)$ 
belonging to $E=E_{LM_L}$, the coordinates satisfy the relations as given in 
Theorem~\ref{thm:eigen}.

$\bullet$ \textsl{Greatest lower bound.}
Assuming that $\mu(dk)=\phi(k)dk$ with $\supp\phi=\ol{\Rp}$, we examine the infimum 
$\E=\E_{LM_L}$ of the spectrum of $A_L$. We show in Theorem~\ref{thm:infnumL} that 
$\E$ solves \eqref{eq:E-SL} for $N$ phonon excitations, but with different self-energy 
$\Sigma_L(\E)\geq0$; if however $\Sigma_L(\E)\leq0$ then $\E=cL(L+1)$. In particular, when 
$N=1$ (Corollary~\ref{cor:infnumL}), $\E$ approximates to \eqref{eq:E-SL}, 
\eqref{eq:Self-SL}. When $N=2$ and $L=0$ (Corollary~\ref{cor:infnumL2}), $\E\leq0$ solves 
$\E=-\Sigma_0(\E)$, where the self-energy is defined as
\begin{equation}
\Sigma_0(\E)\dfn\pv\sum_{\lambda}\frac{2\lambda+1}{4\pi}
\frac{U_\lambda(k)^2}{c\lambda(\lambda+1)+\omega(k)-\E-\epsilon_\lambda(\E,k)}dk
\label{eq:Self-N2L0}
\end{equation}
and $\pv$ is the Cauchy principal value of the integral over $\Rp$. Comparing  
\eqref{eq:Self-N2L0} with \eqref{eq:Self-SL} for $L=0$, we see that now the denominator 
contains an additional $\epsilon_\lambda(\E,k)$, which is defined as
\begin{equation}
\epsilon_\lambda(\E,k)\dfn\frac{1}{2\pi}(-1)^\lambda(2\lambda+1)U_\lambda(k)^2
\sum_\Lambda\frac{\cgc{\lambda}{\lambda}{\Lambda}{0}{0}{0}^2}{c\Lambda(\Lambda+1)+
2\omega(k)-\E}
\label{eq:Self-N2L0b}
\end{equation}
for a.e. $k$, and the sum runs over even numbers $\Lambda\in\{0,2,\ldots,2\lambda\}$. 
The lowest energy $\E\leq0$ for a molecule in superfluid $^4$He is shown in 
Fig.~\ref{fig:E012}. If we compared the energy with the curve $0_{0,0}$ in 
\cite[Fig.~2(a)]{Schm-Lem-a} for $N=1$, we would see that adding the two-phonon
excitations reflects in a significant increase of the lowest energy.
\begin{figure}[h!]
\centering
\includegraphics[width=.6\textwidth]{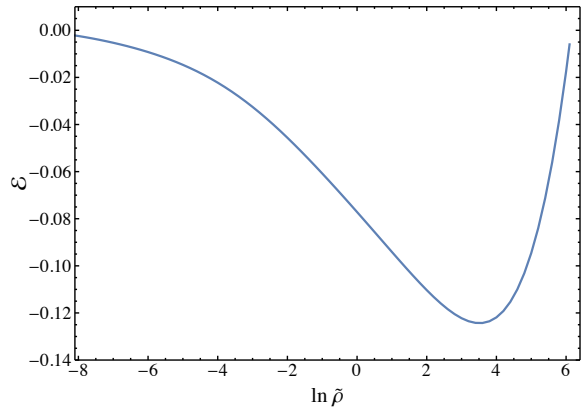}
\caption{The lowest energy $\E\leq0$ (in units of $c$) of the part $A_0$ of the angulon 
operator $A$ acting in reducing subspace labeled by $L=0$, when only two-phonon 
excitations are considered. The angulon operator describes a molecule in superfluid 
$^4$He. The energy is regarded as a function of the logarithmic superfluid density 
$\tilde{\rho}$. The functions $k\lmap\omega(k)$ and $k\lmap U_\lambda(k)$ and the 
parameters are adapted from \cite{Schm-Lem-a}.}
\label{fig:E012}
\end{figure}

We would like to point out that $\E$ might not belong to the numerical range $\Theta_L$ 
of $A_L$. Yet $\E$ lies at the bottom of the closure $\ol{\Theta_L}$.
\subsection{Outline of the paper}
Sec.~\ref{sec:nota} is of preliminary character. Here we sum up basic definitions that we 
use throughout the paper. In Sec.~\ref{sec:main} we give a precise definition of the 
angulon operator within the framework of the direct integral approach. Our definition 
relies on the hypothesis that the functions $k\lmap\omega(k)$ and 
$k\lmap\sum_\lambda(2\lambda+1)^{3/2}U_\lambda(k)$ are $\mu$-essentially bounded. The 
hypothesis considerably simplifies the presentation, since in this case the angulon 
operator is a self-adjoint decomposable operator defined on the domain $\dom\bJ^2\ot\K$, 
where $\dom\bJ^2$ is the maximal domain of $\bJ^2$. In Sec.~\ref{sec:ratation} we show 
that the angulon operator $A$ is unitarily equivalent to the $SO_3$-scalar tensor operator 
$A^\prime$, and we further identify $A$ with $A^\prime$. In Sec.~\ref{sec:scfps} we 
introduce the symmetric coefficients of fractional parentage (SCFPs) and describe their 
properties. With the help of the SCFPs we construct the field of orthonormal bases
that transform under rotations in $\K(k)$ as an irreducible tensor operator of rank
$\Lambda$ (Theorem~\ref{thm:corth2}). Then, reducing the tensor product
$[J]\ot[\Lambda]\lto[L]$, we find reducing subspaces $\H_L$ for the angulon operator
(Secs.~\ref{sec:orth-bases} and \ref{sec:reduce}). We study the eigenvalues in
Sec.~\ref{sec:eigen}. Using Theorem~\ref{thm:eigen} we examine the numerical range
in Sec.~\ref{sec:num}, and we summarize our findings in Theorem~\ref{thm:infnumL}
and the subsequent corollaries. 
\section{Preliminaries and notation}\label{sec:nota}
Here and elsewhere, $V$ is the direct integral $\int^{\op}V(k)\mu(k)$ of a measurable
field $k\lmap V(k)$ of Hilbert spaces over the probability measure space $(\Rp,\F,\mu)$.
Measurable fields are assumed to be $\mu$-measurable. The integral $\int^{\op}$ ($\int$) 
means the integral over $\Rp$ unless specified otherwise.
The scalar product is indexed by the Hilbert space in which it is defined; \eg 
$\braket{\cdot,\cdot}_V$ refers to the scalar product in $V$. 
Fix $(e_\iota(k))_{\iota\in\indx}$ as an orthonormal basis of 
$V(k)$; the sequence $(e_\iota)_{\iota\in\indx}$ of measurable vector fields
$k\lmap e_\iota(k)$ is a measurable field of orthonormal bases 
\cite[Proposition~II.1.4.1]{Dixmier}.

Let $\iota=(\iota_1,\ldots,\iota_n)\in\indx^n$. To the basis vector 
$e_\iota(k)\dfn \ot_{i=1}^n e_{\iota_i}(k)$ of the $n$th tensor power $T^n(V(k))$ 
corresponds the basis vector $\he_\iota(k)\dfn e_\iota(k)+I(V(k))$ of the $n$th symmetric 
tensor power $S^n(V(k))$. The ideal $I(V(k))$ of the tensor algebra $T(V(k))$ is 
generated by the elements of the form $w(k)\ot v(k)-v(k)\ot w(k)$ with 
$w(k),v(k)\in V(k)$. When $n=0$, $\iota=\iota_0$ is empty and 
$e_{\iota_0}(k)\dfn\e(k)\in T^0(V(k))$ is the unit element. For notational convenience, 
we usually write $\he_\iota(k)$ in the form $\hot_{i=1}^n e_{\iota_i}(k)$, where the
symbol $\hot$ alludes to the symmetrized tensor product. With this notation
$\he_{\iota\iota^\prime}(k)$, with $\iota^\prime\in\indx^{n^\prime}$, implies
$\he_\iota(k)\hot\he_{\iota^\prime}(k)$, and vice verse. According to
\cite[Proposition~II.1.8.10]{Dixmier}, the field $k\lmap e_\iota(k)$ is measurable,
hence so is the field $k\lmap \he_\iota(k)$.

The action of the symmetric group $\Sym_n$ on a sequence $\iota\in\indx^n$ of length $n$
is defined as $\pi\cdot\iota=(\iota_{\pi(1)},\ldots,\iota_{\pi(n)})$ with $\pi\in\Sym_n$
and $n\in\N$; when $n=0$, $\pi\cdot\iota_0=\iota_0$. We have an important, although 
obvious, relation $\he_{\pi\cdot\iota}(k)=\he_\iota(k)$.

Assume that $\hot_{i=1}^nv_i(k),\hot_{i=1}^nu_i(k)\in S^n(V(k))$. The symmetric tensor 
algebra $S(V(k))$ completed with the norm that is induced by the scalar product
\begin{equation}
\braket{\hot_{i=1}^nv_i(k),\hot_{i=1}^nu_i(k)}_{\K(k)}\dfn
\frac{1}{n!}\sum_{\pi\in \Sym_n}\prod_{i=1}^n\braket{v_i(k),u_{\pi(i)}(k)}_{V(k)}
\label{eq:scalarSV}
\end{equation}
is the Hilbert space, denoted by $\K(k)$ \cite[Secs.~IV.2.4, V.3]{BourbakiTVS};
$\K=\int^{\op}\K(k)\mu(dk)$ is the direct integral of a measurable field 
$k\lmap\K(k)$ of Hilbert spaces. 

An element $v$ of $\K$ is a square-integrable vector field $k\lmap v(k)$ with the value
\begin{equation}
v(k)=\sum_{n\in\N_0}\sum_{\iota\in\indx^n}c_\iota(v(k))\he_\iota(k), \quad
c_\iota(v(k))\dfn\braket{\he_\iota(k),v(k)}_{\K(k)}
\label{eq:v}
\end{equation}
defined for $\mu$-a.e. $k$. Indeed, \eqref{eq:v} implies 
\[\braket{\he_{\iota^\prime}(k),v(k)}_{\K(k)}=
\sum_{n\in\N_0}\sum_{\iota\in\indx^n}c_\iota(v(k))\braket{\he_{\iota^\prime}(k),
\he_\iota(k)}_{\K(k)}\]
with $\iota^\prime\in\indx^{n^\prime}$. But, using \eqref{eq:scalarSV}
\[\braket{\he_{\iota^\prime}(k),\he_{\iota}(k)}_{\K(k)}=
\frac{\delta_{n^\prime n}}{n^\prime!}\sum_{\pi\in\Sym_{n^\prime}}
\delta_{\iota^\prime,\pi\cdot\iota}\]
and $\delta_{\iota^\prime\iota}$ is the product of 
Kronecker symbols $\delta_{\lambda_1^\prime\lambda_1}\delta_{\rho_1^\prime
\rho_1}\cdots\delta_{\lambda_{n^\prime}^\prime\lambda_{n^\prime}}
\delta_{\rho_{n^\prime}^\prime\rho_{n^\prime}}$, abbreviated as
$\delta_{\iota_1^\prime\iota_1}\cdots\delta_{\iota_{n^\prime}^\prime\iota_{n^\prime}}$.
Now use the fact that the coordinate $c_\iota(v(k))$ of $v(k)\in\K(k)$ is invariant under 
the action of $\Sym_n$.

The function $k\lmap(c_\iota(v(k)))_{\iota\in\indx^{\N_0}}$ is of class 
$L^2(\Rp,\mu;\ell^2(\indx^{\N_0}))$. Thus, \eqref{eq:v} defines a one-to-one 
correspondence between the latter Hilbert space and $\K$.

The algebra $S(V(k))$ is canonically isomorphic to the commutative polynomial algebra 
which, by definition, consists of elements indexed by finite sequences 
\cite[Sec.~II.1, Theorem~1.1]{Chevalley}. Therefore, there exists a finite 
$N\in\N_0$ such that $c_\iota(v(k))$ vanishes identically for $n>N$:
\begin{equation}
(c_\iota(v(k))\vrt \iota\in\indx^n)_{n\in\N_0}\in c_{00}(\N_0).
\label{eq:c00}
\end{equation}
Let us recall that $c_{00}$ is the space of eventually vanishing sequences.
\section{Angulon operator. Main definition}\label{sec:main}
\subsection{Kinetic energy of phonons}\label{sec:creatanih}
For every $k$, the \textit{creation operator} in the Hilbert space $\K(k)$ is defined by 
\begin{equation}
b_{\iota_1}(k)v(k)=\sum_{n\in\N_0}\sqrt{n+1}e_{\iota_1}(k)\hot v_n(k) 
\label{eq:b} 
\end{equation}
where $\iota_1\in\indx$ and $v(k)=\sum_nv_n(k)\in\K(k)$; $v_n(k)$ is found from 
\eqref{eq:v}. Using definition \eqref{eq:scalarSV}, 
$\norm{e_{\iota_1}(k)\hot v_n(k)}_{\K(k)}\leq\norm{v_n(k)}_{\K(k)}$. 
Using \eqref{eq:c00}, 
$b_{\iota_1}(k)\in\B(\K(k))$ is bounded in $\K(k)$:
\begin{equation}
\norm{b_{\iota_1}(k)v(k)}_{\K(k)}\leq C_v\norm{v(k)}_{\K(k)}
\label{eq:bnorm}
\end{equation}
where the constant $C_v\geq1$ depends only on a field $v$.

The formal adjoint of $b_{\iota_1}(k)$ coincides with the adjoint 
$b_{\iota_1}(k)^*$, which is called the \textit{annihilation operator}. 
Using \eqref{eq:scalarSV} and \eqref{eq:b}
\begin{equation}
b_{\iota_1}(k)^*v(k)=\sum_{n\in\N}\sqrt{n}\sum_{\iota\in\indx^{n-1}}
c_{\iota_1\iota}(v(k))\he_\iota(k).
\label{eq:b+}
\end{equation}
Using \eqref{eq:scalarSV} and \eqref{eq:b+}, the definition of the coordinate
$c_{\iota_1\iota}(v(k))$ in \eqref{eq:v}, and then applying the Cauchy--Schwarz
inequality and relation \eqref{eq:c00}
\begin{equation}
\norm{b_{\iota_1}(k)^*v(k)}_{\K(k)}\leq C_v^\prime\norm{v(k)}_{\K(k)}
\label{eq:b+norm}
\end{equation}
where the constant $C_v^\prime\geq0$ depends only on $v$.

Using \eqref{eq:b} and \eqref{eq:b+}
\begin{equation}
\sum_{\iota_1\in\indx}b_{\iota_1}(k)b_{\iota_1}(k)^*=n(k) \quad \text{where} \quad 
n(k)v(k)=\sum_{n\in\N_0}nv_n(k)
\label{eq:n}
\end{equation}
and the sum over $\iota_1$ is understood as a strong limit of
partial sums; the creation/ahhinilation operators satisfy the commutation 
relations
\begin{equation}
[b_{\iota_1}(k)^\#,b_{\iota_2}(k)^\#]=0, \quad
[b_{\iota_1}(k)^*,b_{\iota_2}(k)]=\delta_{\iota_1\iota_2}I(k)
\label{eq:ccr}
\end{equation}
for $\iota_1,\iota_2\in\indx$. Here $b_{\iota_1}(k)^\#$ denotes either $b_{\iota_1}(k)$ 
or $b_{\iota_1}(k)^*$.

By definition, for every $v\in\K$, the field $k\lmap b_{\iota_1}(k)^\#v(k)$ is
measurable. Thus the field $k\lmap b_{\iota_1}(k)^\#$ of bounded operators is
measurable, and it defines a bounded decomposable operator 
$b_{\iota_1}^\#=\int^{\op}b_{\iota_1}(k)^\#\mu(dk)$ \cite[Definition~II.2.3.2]{Dixmier},
called the creation/annihilation operator in $\K$. Let $n(k)\in\B(\K(k))$ be as in
\eqref{eq:n}. A bounded self-adjoint operator $n=\int^{\op}n(k)\mu(dk)$ is called the 
\textit{number operator}. Interpreting $b_{\iota_1}^\#$ as the creation/annihilation 
operator of the $\iota_1$th phonon state, one concludes that the number operator $n$ 
is the sum of occupation numbers $b_{\iota_1}b_{\iota_1}^*$ of phonon states, \ie
$n$ is the total-number operator.

Let $k\lmap\omega(k)$ be a continuous nonnegative function such that the field
$k\lmap\omega(k)I(k)$ is measurable. A diagonalizable operator
\[\omega=\int^{\op}\omega(k)I(k)\mu(dk)\] 
defined on the domain 
\[\dom\omega=\{v\in\K\vrt \omega(k)v(k)\in\K(k)\;\text{$\mu$-a.e.,}
\int\norm{\omega(k)v(k)}_{\K(k)}^2\mu(dk)<\infty\}\]
is self-adjoint. We call the operator $n\omega$ on $\dom\omega$ the 
\textit{kinetic energy of phonons}.

For future reference, let us remark the following.
\begin{prop}\label{rem:1}
When $k\lmap\omega(k)$ is of class $L^\infty(\Rp,\mu)$, $\dom\omega=\K$.
\end{prop}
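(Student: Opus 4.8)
The plan is to verify directly that every $v\in\K$ satisfies the two defining conditions of $\dom\omega$; since $\dom\omega\subseteq\K$ holds by definition, establishing the reverse inclusion $\K\subseteq\dom\omega$ yields the asserted equality. The essential boundedness hypothesis enters only through a single uniform a.e.\ bound, which is what makes the second (square-integrability) condition automatic.

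First I would fix $M\dfn\norm{\omega}_{L^\infty(\Rp,\mu)}<\infty$, so that $\abs{\omega(k)}\leq M$ for $\mu$-a.e.\ $k$. Given an arbitrary $v\in\K$, recall that $v$ is a square-integrable vector field $k\lmap v(k)$, meaning $v(k)\in\K(k)$ for $\mu$-a.e.\ $k$ and $\int\norm{v(k)}_{\K(k)}^2\mu(dk)=\norm{v}_\K^2<\infty$. For the first defining condition, observe that $\omega(k)I(k)v(k)=\omega(k)v(k)$ is merely a scalar multiple of $v(k)$, hence lies in $\K(k)$ at every $k$ for which $v(k)\in\K(k)$, i.e.\ for $\mu$-a.e.\ $k$. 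For the second condition I would estimate
\[
\int\norm{\omega(k)v(k)}_{\K(k)}^2\mu(dk)=\int\abs{\omega(k)}^2\norm{v(k)}_{\K(k)}^2\mu(dk)\leq M^2\norm{v}_\K^2<\infty,
\]
where the last inequality uses the a.e.\ bound $\abs{\omega(k)}\leq M$. Thus $v\in\dom\omega$, which proves $\K\subseteq\dom\omega$.

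There is no genuine obstacle here; the single point meriting a word of care is measurability, namely that $k\lmap\omega(k)v(k)$ be a measurable field so that it qualifies as an element of $\K$. This is immediate, since $k\lmap\omega(k)I(k)$ is a measurable field of operators by hypothesis and $k\lmap v(k)$ is a measurable vector field, whence their pointwise product is again measurable. With measurability and square-integrability both secured, $v$ belongs to $\K$ and in fact to $\dom\omega$; combining $\K\subseteq\dom\omega$ with the trivial inclusion $\dom\omega\subseteq\K$ gives $\dom\omega=\K$.
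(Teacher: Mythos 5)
Your argument is correct and is essentially the paper's own proof: the paper simply cites H\"{o}lder's inequality, and your explicit estimate $\int\abs{\omega(k)}^2\norm{v(k)}_{\K(k)}^2\,\mu(dk)\leq M^2\norm{v}_{\K}^2$ is precisely the $(\infty,1)$-H\"{o}lder bound spelled out. The added remark on measurability of $k\lmap\omega(k)v(k)$ is a reasonable bit of extra care but does not change the substance.
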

This is merely due to H\"{o}lder inequality \cite[Theorem~2.4 and Corollary~2.5]{Adams03}.
\subsection{Kinetic energy of a molecule}
Let $\L_0$ be a separable complex Hilbert space with an orthonormal basis $(w_{JM})$;
$M\in\indx_J\dfn \{-J,\ldots,J\}$, $J\in\N_0$. $\L_0$ admits a decomposition, 
$\op_J\L_{0,J}$, where 
each Hilbert space $\L_{0,J}$ has the basis $(w_{JM})_M$. The kinetic energy of a 
molecule is described as follows. Let $J_x$, with $x\in\{1,2,3\}$, be the 
$\su$-algebra representation on $\L_{0,J}$ corresponding to the $x$th generator of $\su$.
The action of $J_x$ on $\L_{0,J}$ is defined as in \cite[Eq.~(14.5)]{Rudzikas};
so $J_x$ is simple (or else irreducible). Considering $\L_{0,J}$ as a vector space,
one extends $J_x\in\End \L_{0,J}$ to $\L_0$ by linearity. The extensions form
the vector-valued operator $\bJ=(J_1,J_2,J_3)$ referred to as the angular momentum
operator of a linear-rotor impurity. The operator $\bJ^2\equiv\bJ\bJ$ defined on its 
maximal domain reads
\begin{equation}
\bJ^2=\sum_{J\in\N_{0}}\sum_{M\in\indx_J}J(J+1)w_{JM}^{\dl}\ot w_{JM}, \quad
\dom\bJ^2=\{w\in\L_0\vrt\bJ^2w\in\L_0\}
\label{eq:J2}
\end{equation}
where $(w_{JM}^\dl)$ is the adjoint basis. One identifies
$w_{JM}^\dl\ot w_{JM}\co\L_0\lto\L_0$ with a rank one projection
$\braket{w_{JM},\cdot}_{\L_0}w_{JM}$. The operator $\bJ^2$ is written in its spectral 
representation, hence it is self-adjoint. For $c>0$, $c\bJ^2$ is called the 
\textit{kinetic energy of a molecule}.
\subsection{Phonon excitations}
Let $\H\dfn\L_0\ot\K$ be the Hilbert tensor product. Consider the constant field 
$k\lmap \L(k)$ of complex Hilbert spaces corresponding to $\L_{0}$. That is, 
$\L=\int^{\op}\L(k)\mu(dk)$ coincides with $L^{2}(\Rp,\mu;\L_{0})$. Then 
$\int^{\op}\H(k)\mu(dk)$, with $\H(k)\dfn\L(k)\ot\K(k)$, is isomorphic to $\H$ 
\cite[Sec.~II.1.8]{Dixmier} and therefore will be identified with $\H$ in what follows.

Let $\lambda\in\N_0$ and let $U_\lambda\co k\lmap U_{\lambda}(k)$ be a nonnegative 
measurable function. Let us define
\begin{subequations}\label{eq:User}
\begin{align}
U&\co k\lmap U(k)\dfn\sum_{\lambda\in\N_0}(2\lambda+1)^{3/2}U_{\lambda}(k), \\
\U&\dfn\{v\in\K\vrt U(k)v(k)\in\K(k)\;\text{$\mu$-a.e.,}
\int\norm{U(k)v(k)}_{\K(k)}^2\mu(dk)<\infty\}.
\end{align}
\end{subequations}
\begin{rem}
Similar to Proposition~\ref{rem:1}, if $U\in L^\infty(\Rp,\mu)$ then $\U=\K$.
\end{rem}
Let $I_0$ be the identity map in $\L_0$ and consider the operator in $\H$ defined by 
\begin{subequations}\label{eq:Qoper}
\begin{align}
Q\dfn&\sum_{\lambda\in\N_0}\sum_{\rho\in\indxl}
Y_{\lambda,-\rho}(\ho)I_0\ot U_{\lambda\rho}, \quad \dom Q=\{\psi\in\H\vrt Q\psi\in\H\}
\\
U_{\lambda\rho}\dfn&\int^{\op} U_{\lambda\rho}(k)\mu(dk), \quad
U_{\lambda\rho}(k)\dfn U_{\lambda}(k)b_{\lambda\rho}(k).
\end{align}
\end{subequations}
The infinite sum over $\lambda$ is understood as a strong limit of partial sums.
The spherical harmonic $Y_{\lambda\rho}(\ho)$ depends on the molecular orientation given 
in terms of the spherical angles $\ho$; \ie the commutator $[J_x,Y_{\lambda\rho}(\ho)]$ 
is calculated using \cite[Eq.~(14.13)]{Rudzikas}, \cite[Eq.~(3.53)]{Rudz-Kan}.
\begin{lem}\label{lem:Q}
$\L_0\ot \U\subseteq\dom Q$.
\end{lem}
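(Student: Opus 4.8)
The plan is to reduce to elementary tensors: it suffices to treat $\psi=w\ot v$ with $w\in\L_0$ and $v\in\U$, since membership in $\dom Q$ for an arbitrary element of $\L_0\ot\U$ then follows by linearity and the triangle inequality in $\H$. Evaluating the decomposable part of $Q$ from \eqref{eq:Qoper} fibrewise, and using that $Y_{\lambda,-\rho}(\ho)I_0$ acts only on the ($k$-independent) molecular factor,
\[
(Q\psi)(k)=\sum_{\lambda\in\N_0}\sum_{\rho\in\indxl}\bigl(Y_{\lambda,-\rho}(\ho)w\bigr)\ot\bigl(U_\lambda(k)\,b_{\lambda\rho}(k)v(k)\bigr)\in\L(k)\ot\K(k)=\H(k).
\]
By the triangle inequality it is then enough to bound $\sum_{\lambda,\rho}\norm{Y_{\lambda,-\rho}(\ho)w}_{\L_0}\,U_\lambda(k)\,\norm{b_{\lambda\rho}(k)v(k)}_{\K(k)}$ and to show that the resulting scalar field lies in $L^2(\Rp,\mu)$.

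First I would estimate the two norm factors uniformly in the summation indices. For the phonon factor, \eqref{eq:c00} furnishes a finite $N$, independent of $k$, with $v(k)=\sum_{n=0}^{N}v_n(k)$; since the vectors $e_{\lambda\rho}(k)\hot v_n(k)$ lie in distinct symmetric powers, \eqref{eq:b} together with the contraction bound $\norm{e_{\lambda\rho}(k)\hot v_n(k)}_{\K(k)}\le\norm{v_n(k)}_{\K(k)}$ noted after \eqref{eq:b} gives
\[
\norm{b_{\lambda\rho}(k)v(k)}_{\K(k)}^2=\sum_{n=0}^{N}(n+1)\norm{e_{\lambda\rho}(k)\hot v_n(k)}_{\K(k)}^2\le(N+1)\norm{v(k)}_{\K(k)}^2,
\]
a bound independent of $\lambda$ and $\rho$. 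For the molecular factor I would use that $Y_{\lambda,-\rho}(\ho)$ acts on $\L_0$ as multiplication by a spherical harmonic, so that the addition theorem $\sum_\rho\abs{Y_{\lambda\rho}(\ho)}^2=(2\lambda+1)/(4\pi)$ yields the pointwise bound $\abs{Y_{\lambda,-\rho}(\ho)}\le\bigl((2\lambda+1)/(4\pi)\bigr)^{1/2}$, hence $\norm{Y_{\lambda,-\rho}(\ho)}_{\B(\L_0)}\le\bigl((2\lambda+1)/(4\pi)\bigr)^{1/2}$. Summing over the $2\lambda+1$ values of $\rho$,
\[
\sum_{\rho\in\indxl}\norm{Y_{\lambda,-\rho}(\ho)w}_{\L_0}\le(2\lambda+1)^{3/2}(4\pi)^{-1/2}\norm{w}_{\L_0},
\]
which is exactly the weight entering the definition of $U$ in \eqref{eq:User}.

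Combining the two bounds gives $\norm{(Q\psi)(k)}_{\H(k)}\le(N+1)^{1/2}(4\pi)^{-1/2}\norm{w}_{\L_0}\,U(k)\,\norm{v(k)}_{\K(k)}$, and since $U(k)\ge0$ we have $U(k)\norm{v(k)}_{\K(k)}=\norm{U(k)v(k)}_{\K(k)}$. Squaring and integrating,
\[
\norm{Q\psi}_\H^2\le\frac{N+1}{4\pi}\norm{w}_{\L_0}^2\int\norm{U(k)v(k)}_{\K(k)}^2\mu(dk)<\infty
\]
by the defining condition of $\U$ in \eqref{eq:User}. The same dominating function $k\lmap U(k)\norm{v(k)}_{\K(k)}$, which lies in $L^2(\Rp,\mu)$, shows via dominated convergence that the partial sums over $\lambda$ are Cauchy in $\H$; hence the strong-limit definition of $Q$ is legitimate on this domain, $Q\psi\in\H$, and $\psi\in\dom Q$.

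The two norm bounds are routine; the step I would watch most carefully is the matching of the $(2\lambda+1)^{3/2}$ growth of the spherical-harmonic operator norms summed over $\rho$ against the weight built into $U$ — without that weight the $\lambda$-sum need not converge — together with the fact that \eqref{eq:c00} supplies a truncation $N$ that is uniform in $k$, which is what makes the creation-operator estimate independent of the summation indices and hence summable.
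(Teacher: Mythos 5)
Your proof is correct and follows essentially the same route as the paper's: bound $\norm{Y_{\lambda,-\rho}(\ho)w}_{\L_0}$ by $\sqrt{(2\lambda+1)/(4\pi)}\,\norm{w}_{\L_0}$, bound the creation operator by a $v$-dependent constant via \eqref{eq:bnorm}/\eqref{eq:c00}, and match the resulting $(2\lambda+1)^{3/2}$ growth against the weight in \eqref{eq:User} before integrating. The only differences are cosmetic refinements — you make the constant $C_v$ explicit as $(N+1)^{1/2}$ and spell out the convergence of the partial sums over $\lambda$, both of which the paper leaves implicit.
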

\begin{proof}
Let $w\ot v\in\H$. By \eqref{eq:Qoper}, and then using the Cauchy--Schwarz inequality
\[\norm{Q(w\ot v)}_{\H}^2\leq \int(\sum_{\lambda,\rho}
\norm{Y_{\lambda,-\rho}w}_{\L_0}\norm{U_{\lambda\rho}(k)v(k)}_{\K(k)})^2\mu(dk). \]
Using $\norm{Y_{\lambda\rho}w}_{\L_0}\leq\sqrt{(2\lambda+1)/(4\pi)}\norm{w}_{\L_0}$, 
it follows from \eqref{eq:bnorm} and \eqref{eq:User} that
\[\norm{Q(w\ot v)}_{\H}\leq(4\pi)^{-1/2}C_v\norm{w}_{\L_0}
(\int\norm{U(k)v(k)}_{\K(k)}^2\mu(dk))^{1/2}.\]
Therefore $w\ot v\in\L_0\ot\U\Longrightarrow w\ot v\in\dom Q$.
\end{proof}
Lemma~\ref{lem:Q} suggests that our discussion considerably simplifies provided that
\begin{hypo}\label{hypo:1}
$k\lmap\omega(k)$ and $k\lmap U(k)$ are of class $L^\infty(\Rp,\mu)$.
\end{hypo}
Yet Hypothesis~\ref{hypo:1} encloses the physically interesting cases.
We thus have that $Q\in\B(\H)$ is a bounded operator in $\H$. Then, the operator
\begin{equation}
W\dfn Q+Q^*
\label{eq:W}
\end{equation}
is bounded, self-adjoint, and decomposable relative to the field $k\lmap\H(k)$. The 
operator $W$ describes phonon excitations and is referred to as the 
\textit{impurity-boson interaction potential} (\cf \eqref{eq:WSL}).
\subsection{Angulon operator}
Using Hypothesis~\ref{hypo:1}, the operator
\begin{equation}
A^0\dfn c\bJ^2\ot I+I_0\ot n\omega
\label{eq:A0}
\end{equation}
is a self-adjoint operator defined on $\dom A^0=\dom\bJ^2\ot\K$. The operator
\begin{equation}
A\dfn A^0+W, \quad \dom A=\dom A^0
\label{eq:A}
\end{equation}
is called the \textit{angulon operator}. Since
$W=\int^{\op}W(k)\mu(dk)$ is bounded decomposable and $A^0=\int^{\op}A^0(k)\mu(dk)$ is 
self-adjoint decomposable, the angulon operator 
$A=\int^{\op}A(k)\mu(dk)$ is decomposable \cite{Lance} with $A(k)\dfn A^0(k)+W(k)$. Since 
$\dom(A^0+W)=\dom A^0$, the operator sum $A$ of two self-adjoint operators $A^0$ and $W$ 
is self-adjoint.
\section{Rotated angulon operator}\label{sec:ratation}
By definition~\eqref{eq:A}, the angulon operator $A=A(\ho)$ depends on the 
molecular orientation, as defined by the spherical angles $\ho=(\vartheta,\varphi)$.
Here we show that, with a suitable choice of $\ho^\prime$, the operator
$A^\prime=A(\ho^\prime)$ is unitarily equivalent to $A$ and it commutes with rotations in 
$\H$. It is this result that eventually implies that the angulon is an eigenstate of
the total angular momentum $L$ of the system.

By construction, the vector space $V(k)$ admits a decomposition 
$\op_\lambda V_\lambda(k)$. Similar to $J_x$ on $\L_{0,J}$, we let $\Lambda_x$,
with $x\in\{1,2,3\}$, be a simple $\su$-algebra representation on $V_\lambda(k)$.
We denote by $M(R)$ be the rotation operator generated by $\{\Lambda_1,\Lambda_3\}$
\cite[Eq.~(3.8)]{Rudz-Kan}.
Here $R=R(\Phi,\Theta,\Psi)$ is a rotation matrix of $SO_3$ parametrized by Euler angles.

Using \eqref{eq:b}, $e_\iota(k)=b_\iota(k)\e(k)$ for $\iota\in \indx$. Therefore 
$b_{\lambda\rho}(k)$ is the $\rho$th component of a rank-$\lambda$ irreducible tensor 
operator $b_\lambda(k)$ in the sense of Fano--Racah \cite{Rudz-Kan,Rudzikas}. Unlike
$b_\lambda(k)$, the adjoint operator $b_{\lambda\rho}(k)^*$ does not transform as the 
irreducible tensor operator, since there arises an additional phase factor in
the complex conjugate of the Wigner $D$-function $D_{\rho\rho^\prime}^\lambda(R)$, \ie
$\ol{D_{\rho\rho^\prime}^\lambda(R)}=
(-1)^{\rho-\rho^\prime}D_{-\rho,-\rho^\prime}^\lambda(R)$ for $\rho,\rho^\prime\in\indxl$
(see \eg \cite[Eqs.~(2.19) and (2.21)]{Rudzikas}). However, the operator 
\begin{equation}
b_{\lambda\rho}(k)^{\sim}\dfn(-1)^{\lambda-\rho}b_{\lambda,-\rho}(k)^*
\label{eq:btilde}
\end{equation}
does define the $\rho$th component of a rank-$\lambda$ irreducible tensor operator 
$b_{\lambda}(k)^{\sim}$. Here the extra phase $(-1)^\lambda$ is introduced for 
convenience.

The tensor product module $\L_{0,J}\ot V_\lambda(k)$ is the tensor product vector space
with the action of $\su$ determined by the relation
\begin{equation}
L_x(w\ot v(k))=J_x w\ot v(k)+w\ot \Lambda_xv(k)
\label{eq:Lh}
\end{equation}
for $w\ot v(k)\in \L_{0,J}\ot V_\lambda(k)$. Similar to $M(R)$, let $L(R)$ and $K(R)$ 
be the rotation operators generated by $\{J_1,J_3\}$ and $\{L_1,L_3\}$, respectively.
Using \eqref{eq:Lh}
\begin{equation}
K(R)=L(R)\ot M(R).
\label{eq:KLM}
\end{equation}
By linearity, a unitary $K(R)$ extends to the operator in $\L_0\ot V(k)$.

Given two simple modules, $V_\lambda(k)$ and $V_{\lambda^\prime}(k)$, 
the tensor product module $V_\lambda(k)\ot V_{\lambda^\prime}(k)$ is the tensor product
vector space with the action of $\su$ determined by the condition
\[\Lambda_x(v(k)\ot u(k))=\Lambda_xv(k)\ot u(k)+v(k)\ot\Lambda_xu(k)\]
for $v(k)\ot u(k)\in V_\lambda(k)\ot V_{\lambda^\prime}(k)$. The tensor product of
finitely many $\su$-modules is defined analogously. Using the latter, $K(R)$ extends to
the unitary operator in $\L_0\ot \K(k)$. There exists a unique isomorphism mapping 
$w\ot v\in\L_0\ot\K$ into the field $k\lmap w \ot v(k)$; so the operator 
\[\mathcal{K}(R)\dfn\int^{\op}K(R)\mu(dk)\]
is a unitary decomposable operator in $\H$.

Let $R_{\ho}\dfn R(\varphi+\pi/2,\vartheta,0)$. According to \cite{Rudzikas},
this particular $R_{\ho}$ is useful in that it gives
\begin{equation}
Y_{\lambda\rho}(\ho)=\sqrt{(2\lambda+1)/(4\pi)}D_{\rho0}^\lambda(R_{\ho}) \quad
\text{and} \quad
\ol{Y_{\lambda\rho}(\ho)}=Y_{\lambda,-\rho}(\ho^\prime)
\label{eq:YD}
\end{equation}
for $\ho^\prime\equiv \ho\pm(0,\pi)\!\!\mod (0,2\pi)$. Let us define
\begin{equation}
A^\prime=A^\prime(\ho)\dfn\mathcal{K}(R_{\ho})A(\ho)\mathcal{K}(R_{\ho})^*
\label{eq:arotate}
\end{equation}
and call $A^\prime$ the \textit{rotated angulon operator}. We have that
\begin{thm}\label{thm:KR}
$A^\prime(\ho)=A(\ho^\prime)$.
\end{thm}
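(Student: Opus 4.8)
The plan is to use the additive structure $A(\ho) = A^0 + W(\ho)$ coming from \eqref{eq:A}, \eqref{eq:A0} and \eqref{eq:W}, and to conjugate the two summands by the decomposable unitary $\mathcal{K}(R_{\ho})$ separately. First I would dispose of the free part. The operator $A^0 = c\bJ^2\ot I + I_0\ot n\omega$ carries no dependence on the molecular orientation, so it is literally the same summand in both $A^\prime(\ho)$ and $A(\ho^\prime)$; it therefore suffices to check that it is left invariant, \ie $\mathcal{K}(R_{\ho})A^0\mathcal{K}(R_{\ho})^* = A^0$. By \eqref{eq:KLM} this reduces fiberwise to $K(R) = L(R)\ot M(R)$ commuting with $c\bJ^2\ot I + I_0\ot n\omega$: the Casimir $\bJ^2$ commutes with the rotation $L(R)$ generated by $\{J_1,J_3\}$, while $M(R)$, generated by $\{\Lambda_1,\Lambda_3\}$, preserves the grading of $\K(k)$ by the number $n$ and hence commutes with both $n(k)$ and the scalar field $\omega(k)I(k)$. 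Thus the whole question collapses onto the interaction term.

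Second, since $W = Q + Q^*$ by \eqref{eq:W} and $\mathcal{K}(R_{\ho})$ is unitary, one has $\mathcal{K}(R_{\ho})W\mathcal{K}(R_{\ho})^* = \mathcal{K}(R_{\ho})Q\mathcal{K}(R_{\ho})^* + (\mathcal{K}(R_{\ho})Q\mathcal{K}(R_{\ho})^*)^*$, so it is enough to prove $\mathcal{K}(R_{\ho})Q(\ho)\mathcal{K}(R_{\ho})^* = Q(\ho^\prime)$ and then pass to adjoints. Working fiberwise and using \eqref{eq:Qoper}, I would transform the two tensor factors of each summand $Y_{\lambda,-\rho}(\ho)I_0\ot U_\lambda(k)b_{\lambda\rho}(k)$ independently. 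On the bath side $b_\lambda(k)$ is a rank-$\lambda$ irreducible tensor operator (this is exactly the content of $e_\iota(k) = b_\iota(k)\e(k)$ and the discussion around \eqref{eq:btilde}), so $M(R_{\ho})b_{\lambda\rho}(k)M(R_{\ho})^*$ is the $D^\lambda(R_{\ho})$-combination of the $b_{\lambda\rho'}(k)$; on the molecular side $Y_\lambda(\ho)$ is likewise a rank-$\lambda$ tensor operator on $\L_0$ (this is where the nonvanishing commutator $[J_x,Y_{\lambda\rho}(\ho)]$ enters), so $L(R_{\ho})$ acts on its components through the same Wigner matrices.

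Third, and this is the computational heart, I would collapse the resulting double sum of Wigner $D$-functions to a single harmonic evaluated at $\ho^\prime$. Here the first half of \eqref{eq:YD}, $Y_{\lambda\rho}(\ho) = \sqrt{(2\lambda+1)/(4\pi)}\,D^\lambda_{\rho0}(R_{\ho})$, converts every spherical harmonic into a matrix element of the very rotation $R_{\ho}$ by which we conjugate; after this the complex-conjugation relation $\ol{D^\lambda_{\rho\rho'}(R)} = (-1)^{\rho-\rho'}D^\lambda_{-\rho,-\rho'}(R)$, the unitarity of $D^\lambda(R_{\ho})$, and the group law for Wigner matrices reduce the contraction to $\ol{Y_{\lambda\rho'}(\ho)}$. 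The second half of \eqref{eq:YD}, $\ol{Y_{\lambda\rho'}(\ho)} = Y_{\lambda,-\rho'}(\ho^\prime)$, then identifies the coefficient of $b_{\lambda\rho'}(k)$ as exactly the one appearing in $Q(\ho^\prime)$, which is the asserted equality.

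I expect the main obstacle to be the phase bookkeeping in this last step. The pairing of $Y_{\lambda,-\rho}(\ho)$ with $b_{\lambda\rho}(k)$ in \eqref{eq:Qoper} carries the opposite-index signs $(-1)^{\rho}$ that distinguish a genuine rotational scalar from the present contraction; it is precisely these signs, together with the particular first Euler angle $\varphi+\pi/2$ in $R_{\ho} = R(\varphi+\pi/2,\vartheta,0)$, that produce the azimuthal reflection $\ho\mapsto\ho^\prime = \ho\pm(0,\pi)$ rather than a trivial identification or an alignment to the pole. Keeping the molecular ($L$) and bath ($M$) transformations mutually consistent, so that the $\ho$-dependence carried by $R_{\ho}$ and by $Y_\lambda(\ho)$ is treated coherently, is the delicate point; once the Wigner-$D$ identities are assembled in the conventions of \cite{Rudzikas,Rudz-Kan}, the equality $A^\prime(\ho) = A(\ho^\prime)$ follows termwise in $\lambda$ and $\rho'$.
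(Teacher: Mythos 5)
Your proposal is correct and follows essentially the same route as the paper: conjugate fiberwise by $K(R_{\ho})$, transform both tensor factors through Wigner $D^\lambda$-matrices, and use \eqref{eq:YD} together with the group law to collapse the contraction onto $Y_{\lambda,-\rho'}(\ho^\prime)$ --- this is precisely the computation in the paper's central displayed equation before it chooses $R=R_{\ho}^{-1}$. The only organizational difference is that the paper first recouples $W(k)$ into rank-$L$ tensor operators $\{Y_\lambda\ot b_\lambda\}_{L0}$ and closes with a Clebsch--Gordan sum identity (a wrapper that also sets up the subsequent scalar-operator observation), whereas you stay in the uncoupled $\rho$-basis and handle $Q^*$ by passing to adjoints, which is equally valid.
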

An immediate consequence of Theorem~\ref{thm:KR} is that $A^\prime$ is the 
$SO_3$-scalar tensor operator. To see this, put $Y_{\lambda\rho}(\ho^\prime)=(-1)^\rho
Y_{\lambda\rho}(\ho)$ in $A(\ho^\prime)$ and sum over $\rho\in\indxl$. Applying the rules 
for reducing the tensor product $[\lambda]\ot[\lambda]$ of $SO_3$-irreducible 
representations
\[\sum_{\rho\in\indxl}Y_{\lambda,-\rho}(\ho^\prime)I_0\ot b_{\lambda\rho}(k)=
(-1)^\lambda\sqrt{2\lambda+1}\{Y_\lambda(\ho)I_0\ot b_\lambda(k)\}_0\]
where $\{Y_\lambda(\ho)I_0\ot b_\lambda(k)\}_0$ is a rank-0 $SO_3$-irreducible tensor 
operator. With the help of \eqref{eq:btilde}, the same is done for the adjoint operator. 
By \eqref{eq:Qoper}--\eqref{eq:A}
\begin{subequations}\label{eq:a1prime}
\begin{align}
A^\prime=&A^0+W^\prime, \quad W^\prime=\int^{\op}W^\prime(k)\mu(dk) \quad
\text{with} \quad \\
W^{\prime}(k)\dfn&\sum_{\lambda\in\N_0}\sqrt{2\lambda+1}U_{\lambda}(k)
((-1)^\lambda\{Y_\lambda \ot b_\lambda(k)\}_{0}
+\{Y_\lambda \ot b_\lambda(k)^\sim\}_{0})
\end{align}
\end{subequations}
where $Y_\lambda=Y_\lambda(\ho)$ and, for simplicity, we omit $I_0$.
\begin{proof}[Proof of Theorem~\ref{thm:KR}]
Since $A^0$ is invariant under rotations in $\H$, it suffices to examine 
$W$. Using the tensor structure of measurable fields $k\lmap b_{\lambda\rho}(k)$ and 
$k\lmap b_{\lambda\rho}(k)^{\sim}$, one writes $W(k)$ as a sum of rank-$L$ tensor 
operators
\begin{equation}
W(k)=\sum_{\lambda,L}U_\lambda(k)x_{\lambda L}
(\{Y_\lambda(\ho)\ot b_\lambda(k)\}_{L0}+(-1)^\lambda
\{Y_\lambda(\ho)\ot b_\lambda(k)^{\sim}\}_{L0})
\label{eq:a1L}
\end{equation}
where $L\in\{0,2,\ldots,2\lambda\}$ and $x_{\lambda L}$ is the sum of
Clebsch--Gordan coefficients:
\[x_{\lambda L}\dfn\sum_\rho\cgc{\lambda}{\lambda}{L}{-\rho}{\rho}{0}.\]
Let $b_{\lambda\rho}(k)^\#$ denote either $b_{\lambda\rho}(k)$ or
$b_{\lambda\rho}(k)^{\sim}$. Using \eqref{eq:KLM}
\begin{align*}
K(R)^*(Y_{\lambda,-\rho}(\ho)&I_0\ot b_{\lambda\rho}(k)^\#)K(R) \\
&=\sum_{\rho^\prime}D_{-\rho,\rho^\prime}^\lambda(R)
Y_{\lambda\rho^\prime}(\ho)I_0\ot 
\sum_{\rho^{\prime\prime}}D_{\rho\rho^{\prime\prime}}^\lambda(R)
b_{\lambda\rho^{\prime\prime}}(k)^\#.
\end{align*}
Now choose $R=R_{\ho}^{-1}$. Using \eqref{eq:YD}, the sum over 
$\rho^\prime\in\indxl$ gives a factor $\delta_{\rho0}$ and 
\begin{align}
K(R_{\ho})\{Y_\lambda(\ho)I_0&\ot b_\lambda(k)^\#\}_{L0}K(R_{\ho})^* \nonumber \\
=&\cgc{\lambda}{\lambda}{L}{0}{0}{0}
\sum_{L^\prime}x_{\lambda L^\prime}
\{Y_\lambda(\ho^\prime)I_0\ot b_\lambda(k)^\#\}_{L^\prime0}.
\label{eq:krbb}
\end{align}
Apply \eqref{eq:krbb} to \eqref{eq:a1L} and calculate the sum over $L$, which is 
$1$. Using \eqref{eq:arotate}, conclude that $A^\prime(\ho)=A(\ho^\prime)$.
\end{proof}
Since the operators $A^\prime$ and $A$ are unitarily equivalent, hereafter we identify 
$A^\prime$ (resp. $W^\prime$) with $A$ (resp. $W$).
\section{Field of orthonormal bases}\label{sec:bases}
Here we construct the field of orthonormal bases of $\K(k)$ that transform under 
rotations as an irreducible tensor operator.

By construction, $T(V(k))=\op_\lambda T(V_\lambda(k))$ as a vector space. The ideal 
$I(V(k))$ is homogeneous in the sense that it obeys the form 
$\op_\lambda I(V_\lambda(k))$, where $I(V_\lambda(k))$ denotes
$I(V(k))\mcap T(V_\lambda(k))$. Thus 
\begin{equation}
S(V(k))=\op_\lambda S(V_\lambda(k)), \quad  
S(V_\lambda(k))\dfn T(V_\lambda(k))/I(V_\lambda(k)).
\label{eq:SVlambda}
\end{equation}
We explore this fact below.

For $\iota=((\lambda,\rho_1),\ldots,(\lambda,\rho_n))$,
$\rho=(\rho_1,\ldots,\rho_n)\in\indxl^n$, put 
\[\he_{\lambda^n\rho}(k)\dfn\he_\iota(k).\]
When $n=0$, $\rho=\rho_0$ is empty and $\he_{\lambda^0\rho_0}(k)\dfn\e(k)$. Thus
\begin{equation}
(\he_{\lambda^n\rho}(k)\vrt \rho\in\indxl^n)_{n\in\N_0}
\label{eq:lambdabasis}
\end{equation}
is the basis of $S(V_\lambda(k))$.

Further, for $\iota^\prime=((\lambda,\rho_1^\prime),
\ldots,(\lambda,\rho_{n^\prime}^\prime))$, $\rho^\prime=(\rho_1^\prime,\ldots,
\rho_{n^\prime}^\prime)\in\indxl^{n^\prime}$, put 
\[\he_{\lambda^{n+n^\prime}\rho\rho^\prime}(k)\dfn\he_{\iota\iota^\prime}(k).\]
The action of $\Sym_n$ on $\rho\in\indxl^n$ is defined similar to 
that of $\Sym_n$ on $\iota\in\indx^n$.
\subsection{SCFP}\label{sec:scfps}
Let $\Gamma_{\lambda^n}$ be the set of pairs
\[\gamma\dfn(\Lambda,M), \quad 
M\in\indx_\Lambda\dfn\{-\Lambda,-\Lambda+1,\ldots,\Lambda\}\] 
where $\Lambda$ is obtained by reducing the tensor product of $n$ copies of 
$SO_3$-irreducible representations $[\lambda]$. We look for a transformation---as a 
collection of coefficients $c_\rho(\lambda^n\gamma)\in\C$---that maps a measurable field
$k\lmap v_{\lambda^n\gamma}(k)$ of orthonormal bases onto the field of bases in
\eqref{eq:lambdabasis}:
\begin{equation}
\he_{\lambda^n\rho}(k)=\sum_{\gamma\in \Gamma_{\lambda^n}} 
c_\rho(\lambda^n\gamma)v_{\lambda^n\gamma}(k).
\label{eq:c}
\end{equation}
By \eqref{eq:c}, $c_\rho(\lambda^n\gamma)$ is invariant under the action of $\Sym_n$.
When $n=0$ and $n=1$, \eqref{eq:c} is trivial
\begin{subequations}\label{eq:cv01}
\begin{align}
c_{\rho_0}(\lambda^0\Lambda M)\dfn&\delta_{\Lambda 0}\delta_{M0}, \quad  
v_{\lambda^0\Lambda M}(k)\dfn\delta_{\Lambda 0}\delta_{M0}\e(k), \label{eq:cv01-a} \\
c_\rho(\lambda^1\Lambda M)\dfn&\delta_{\Lambda \lambda}\delta_{M\rho}, \quad 
v_{\lambda^1\Lambda M}(k)\dfn\delta_{\Lambda \lambda}e_{\lambda M}(k).
\end{align}
\end{subequations}
We require $(v_{\lambda^n\gamma}(k))$ be an orthonormal basis of $\K(k)$. By
\eqref{eq:cv01-a}, $v_{\lambda^0\gamma}(k)$ is an element of $T^0(V(k))$ and it does not 
depend on $\lambda$. Thus we have 
\begin{subequations}\label{eq:vorth}
\begin{align}
\braket{v_{\lambda^n\gamma}(k),v_{\lambda^{\prime\,n^\prime}\gamma^\prime}(k)}_{\K(k)}
=&\delta_{nn^\prime}\delta_{\gamma\gamma^\prime}\delta_{\lambda\lambda^\prime},
\quad n,n^\prime\in\N, \label{eq:vorth-a} \\
\braket{v_{\lambda^n\gamma}(k),\e(k)}_{\K(k)}
=&\delta_{n0}\delta_{\gamma0}, \quad n\in\N_0. \label{eq:vorth-b}
\end{align}
\end{subequations}
Here and elsewhere, $\delta_{\gamma\gamma^\prime}$ reads $\delta_{\Lambda\Lambda^\prime}
\delta_{MM^\prime}$, and $\delta_{\gamma0}$ reads $\delta_{\Lambda0}\delta_{M0}$.

Now assume that $n=2$; $\rho=(\rho_1,\rho_2)\in\indxl^2$. We have that $[\lambda]\ot
[\lambda]$ reduces to $[\Lambda]$ for $\Lambda\in\{0,1,\ldots,2\lambda\}$. However,
using the first commutation relation in \eqref{eq:ccr}, 
the basis vector $\{\he_{\lambda^2}(k)\}_{\Lambda M}$ of the space of
the representation labeled by $[\Lambda]$ vanishes identically for $\Lambda$ odd.
That is
\[\{e_\lambda(k)\hot e_{\lambda^\prime}(k)\}_{\Lambda M}=
(-1)^{\lambda+\lambda^\prime-\Lambda}
\{e_{\lambda^\prime}(k)\hot e_\lambda(k)\}_{\Lambda M}.\]
Thus, for $\lambda=\lambda^\prime$, $\Lambda\in\{0,2,\ldots,
2\lambda\}$. Using the latter, \eqref{eq:c} holds for  
\begin{subequations}\label{eq:cm1m2}
\begin{align}
c_{\rho_1\rho_2}(\lambda^2\Lambda M)\dfn&(\lambda\lambda\Vert\lambda^2\Lambda)
\cgc{\lambda}{\lambda}{\Lambda}{\rho_1}{\rho_2}{M}, \label{eq:cm1m2-a}\\
v_{\lambda^2\Lambda M}(k)\dfn&(\lambda\lambda\Vert\lambda^2\Lambda)
\{\he_{\lambda^2}(k)\}_{\Lambda M}, \quad
(\lambda\lambda\Vert\lambda^2\Lambda)\dfn\frac{1+(-1)^\Lambda}{2}.
\label{eq:cm1m2-b}
\end{align}
\end{subequations}

A one-to-one correspondence between $(v_{\lambda^n\gamma}(k)\vrt \gamma\in
\Gamma_{\lambda^n})_{n\in\N_0}$ and \eqref{eq:lambdabasis} 
requires the transformation in \eqref{eq:c} to have an inverse. We look for the inverse
transformation by generalizing \eqref{eq:cm1m2-b} by induction
\begin{equation}
v_{\lambda^n\Lambda M}(k)=\sum_{\Lambda^\prime}
\{v_{\lambda^{n-1}\Lambda^\prime}(k)\hot e_{\lambda}(k)\}_{\Lambda M}
(\lambda^{n-1}(\Lambda^\prime)\lambda\Vert\lambda^n\Lambda)
\label{eq:vjn}
\end{equation}
and $\Lambda^\prime$ is obtained by reducing the tensor product of $n-1$ copies of 
$SO_3$-irreducible representations $[\lambda]$. One assumes that the coefficient
satisfies
\begin{subequations}
\begin{align}
(\lambda^0(\Lambda^\prime)\lambda\Vert\lambda^1\Lambda)\dfn&
\delta_{\Lambda^\prime0}\delta_{\Lambda\lambda}, \\
(\lambda^1(\Lambda^\prime)\lambda\Vert\lambda^2\Lambda)\dfn&
\delta_{\Lambda^\prime\lambda}(\lambda\lambda\Vert\lambda^2\Lambda), \\
(\lambda^{n-1}(\Lambda^\prime)\lambda\Vert\lambda^n\Lambda)\equiv& 
\ol{(\lambda^n\Lambda\Vert\lambda^{n-1}(\Lambda^\prime)\lambda )}, \quad n\in\N.
\end{align}
\end{subequations}
Then, relation \eqref{eq:vjn} holds for all $n\in\N$, and it is compatible with
\eqref{eq:cv01} and \eqref{eq:cm1m2-b}.

The coefficient formally plays the same role as the coefficient of fractional parentage 
(CFP) in an antisymmetric case of fermionic particles (when describing electron-shells of 
an atom). Thus we call $(\lambda^{n-1}(\Lambda^\prime)\lambda\Vert\lambda^n\Lambda)$ the 
\textit{symmetric coefficient of fractional parentage} (SCFP).
\subsection{Coefficient identities}\label{sec:scfps-prop}
The properties of the SCFPs are somewhat similar to those of regular CFPs. For our
purposes we need only two of them. By \eqref{eq:vorth}, \eqref{eq:vjn}
\begin{equation}
1=\sum_{\Lambda^\prime}
\abs{(\lambda^{n-1}(\Lambda^\prime)\lambda\Vert \lambda^n\Lambda)}^2, \quad n\in\N
\label{eq:cfp-orth}
\end{equation}
(\cf \cite[Eq.~(9.13)]{Rudzikas}, \cite[Eq.~(11.9)]{Jucys-c}). Applying relation 
\eqref{eq:vjn} twice
\begin{align}
0=&\sum_{\Lambda_1}
\sqrt{2\Lambda_1+1}
\sixj{\lambda}{\lambda}{\Lambda^\prime}{\Lambda_2}{\Lambda}{\Lambda_1} \nonumber \\
&\cdot(\lambda^{n-2}(\Lambda_2)\lambda\Vert\lambda^{n-1}\Lambda_1)
(\lambda^{n-1}(\Lambda_1)\lambda\Vert\lambda^n\Lambda), \quad n\in\N_{\geq2}
\label{eq:recSCFP}
\end{align}
for $\Lambda^\prime$ odd (\cf \cite[Eq.~(9.12)]{Rudzikas}, \cite[Eq.~(11.8)]{Jucys-c}).
The coefficient $\{\cdots\}$ is the $6j$-symbol. Using \eqref{eq:cfp-orth} and
\eqref{eq:recSCFP}, and assuming that the SCFPs are real numbers, some numerical
values are shown in Tab.~\ref{tab:t1}. There, the SCFPs with missing numbers $\Lambda$, 
$\Lambda^\prime$ allowed by the rules of angular reduction vanish identically.
\begin{table}[h!]
\centering
\caption{Numerical values of some 
$(\lambda^{n-1}(\Lambda^\prime)\lambda\Vert\lambda^n\Lambda)$.}\label{tab:t1}
\begin{tabular}{ c  c  c | c  c  c 
}
$\lambda^{n-1}(\Lambda^\prime)$ & $\Lambda$ & 
$(\lambda^{n-1}(\Lambda^\prime)\lambda\Vert\lambda^n\Lambda)$ &
$\lambda^{n-1}(\Lambda^\prime)$ & $\Lambda$ & 
$(\lambda^{n-1}(\Lambda^\prime)\lambda\Vert\lambda^n\Lambda)$ \\ 
\hline\hline \\ [-2.ex]
$1^2(0)$ & $1$ & $\sqrt{5/3}$ & 
$2^2(2)$ & $4$ & $\sqrt{11/21}$
\\ [.5ex]
$1^2(2)$ & $1$ & $2/3$ & 
$2^2(4)$ & $2$ & $2\sqrt{3/35}$
\\ [.5ex]
$2^2(0)$ & $2$ & $\sqrt{7/15}$ &
& $3$ & $-\sqrt{2/7}$
\\ [.5ex]
$2^2(2)$ & $2$ & $\sqrt{2/21}$ &
& $4$ & $\sqrt{10/21}$
\\ [.5ex]
 & $3$ & $\sqrt{5/7}$ &
&&
\\ [.5ex]
\hline\hline
\end{tabular}
\end{table}

By analogy to the recurrence relation \eqref{eq:recSCFP} for the SCFPs, one finds the 
recurrence relation for the coefficients in \eqref{eq:c}.
\begin{lem}\label{lem:rec1}
Let $\rho_1\in\indxl$ and $\rho\in\indxl^n$ and
$(\Lambda,M)\in\Gamma_{\lambda^{n+1}}$ and $n\in\N_0$. Then 
\[c_{\rho\rho_1}(\lambda^{n+1}\Lambda M)=\sum_{(\Lambda^\prime,M^\prime)\in
\Gamma_{\lambda^n}}
c_{\rho}(\lambda^n\Lambda^\prime M^\prime)
(\lambda^{n+1}\Lambda\Vert\lambda^n(\Lambda^\prime)\lambda) 
\cgc{\Lambda^\prime}{\lambda}{\Lambda}{M^\prime}{\rho_1}{M}.\]
\end{lem}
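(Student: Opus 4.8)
The plan is to extract $c_{\rho\rho_1}(\lambda^{n+1}\Lambda M)$ as a scalar product and then decouple it into the $n$-copy data. Since $(v_{\lambda^n\gamma}(k))_\gamma$ is an orthonormal basis of $\K(k)$, relation \eqref{eq:c} inverts to $c_\rho(\lambda^n\gamma)=\braket{v_{\lambda^n\gamma}(k),\he_{\lambda^n\rho}(k)}_{\K(k)}$. Using the notational identity $\he_{\lambda^{n+1}\rho\rho_1}(k)=\he_{\lambda^n\rho}(k)\hot e_{\lambda\rho_1}(k)$ this gives
\[c_{\rho\rho_1}(\lambda^{n+1}\Lambda M)=\braket{v_{\lambda^{n+1}\Lambda M}(k),\he_{\lambda^n\rho}(k)\hot e_{\lambda\rho_1}(k)}_{\K(k)}.\]
Substituting \eqref{eq:c} for $\he_{\lambda^n\rho}(k)$ rewrites the second argument as $\sum_{(\Lambda^\prime,M^\prime)}c_\rho(\lambda^n\Lambda^\prime M^\prime)\,v_{\lambda^n\Lambda^\prime M^\prime}(k)\hot e_{\lambda\rho_1}(k)$, so the whole coefficient is a sum of overlaps of $v_{\lambda^{n+1}\Lambda M}(k)$ with single-factor symmetric products.

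Next I would decouple each product into irreducible components. Since $k\lmap v_{\lambda^n\Lambda^\prime}(k)$ and $k\lmap e_\lambda(k)$ transform as irreducible tensor operators of ranks $\Lambda^\prime$ and $\lambda$ (Theorem~\ref{thm:corth2} together with \eqref{eq:b}), the inverse Clebsch--Gordan relation gives
\[v_{\lambda^n\Lambda^\prime M^\prime}(k)\hot e_{\lambda\rho_1}(k)=\sum_{\tilde\Lambda,\tilde M}\cgc{\Lambda^\prime}{\lambda}{\tilde\Lambda}{M^\prime}{\rho_1}{\tilde M}\{v_{\lambda^n\Lambda^\prime}(k)\hot e_\lambda(k)\}_{\tilde\Lambda\tilde M}.\]
The computation then reduces entirely to the overlap $\braket{v_{\lambda^{n+1}\Lambda M}(k),\{v_{\lambda^n\Lambda^\prime}(k)\hot e_\lambda(k)\}_{\tilde\Lambda\tilde M}}_{\K(k)}$.

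I claim this overlap equals $\delta_{\Lambda\tilde\Lambda}\delta_{M\tilde M}\,(\lambda^{n+1}\Lambda\Vert\lambda^n(\Lambda^\prime)\lambda)$. The two Kronecker factors are forced by $SO_3$-covariance: the bra is a rank-$\Lambda$ tensor and the ket a rank-$\tilde\Lambda$ tensor, and the scalar product in $\K(k)$ is invariant under the unitary $K(R)$, so by Schur's lemma the overlap matrix is proportional to $\delta_{\Lambda\tilde\Lambda}\delta_{M\tilde M}$. For the reduced value I set $\tilde\Lambda=\Lambda,\tilde M=M$ and insert \eqref{eq:vjn} for $v_{\lambda^{n+1}\Lambda M}(k)$, turning the overlap into $\sum_{\Lambda^{\prime\prime}}\ol{(\lambda^n(\Lambda^{\prime\prime})\lambda\Vert\lambda^{n+1}\Lambda)}\braket{\{v_{\lambda^n\Lambda^{\prime\prime}}\hot e_\lambda\}_{\Lambda M},\{v_{\lambda^n\Lambda^\prime}\hot e_\lambda\}_{\Lambda M}}_{\K(k)}$, which collapses to $\ol{(\lambda^n(\Lambda^\prime)\lambda\Vert\lambda^{n+1}\Lambda)}=(\lambda^{n+1}\Lambda\Vert\lambda^n(\Lambda^\prime)\lambda)$ as soon as one knows that the coupled symmetric tensors $\{v_{\lambda^n\Lambda^{\prime\prime}}(k)\hot e_\lambda(k)\}_{\Lambda M}$ are orthonormal in $\Lambda^{\prime\prime}$. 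Feeding the claim back and collapsing the $\delta$'s reproduces the asserted recurrence, and the case $n=0$ is checked directly against \eqref{eq:cv01}.

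The single nontrivial input, and the step I expect to be the main obstacle, is that orthonormality $\braket{\{v_{\lambda^n\Lambda^{\prime\prime}}\hot e_\lambda\}_{\Lambda M},\{v_{\lambda^n\Lambda^\prime}\hot e_\lambda\}_{\Lambda M}}_{\K(k)}=\delta_{\Lambda^\prime\Lambda^{\prime\prime}}$, where the symmetric (rather than full) tensor product is felt: writing $a\hot e_{\lambda\sigma}=(n+1)^{-1/2}b_{\lambda\sigma}(k)a$ from \eqref{eq:b}, moving $b_{\lambda\sigma}(k)$ across as its adjoint and applying the commutation relation \eqref{eq:ccr} splits the overlap into a diagonal piece $(n+1)^{-1}\delta_{\Lambda^\prime\Lambda^{\prime\prime}}$ and a cross piece carrying $b_{\lambda\rho_1}(k)b_{\lambda\sigma}(k)^*$, and the cross piece must furnish the missing $\tfrac{n}{n+1}\delta_{\Lambda^\prime\Lambda^{\prime\prime}}$ through a $6j$-recoupling --- the very mechanism behind \eqref{eq:recSCFP}. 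Since this orthonormality is exactly what underlies the normalization \eqref{eq:cfp-orth}, I would invoke it there rather than redo the recoupling; only a fully self-contained treatment would need to carry out that cross-term computation explicitly.
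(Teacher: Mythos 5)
Your skeleton coincides with the paper's: reduce $c_{\rho\rho_1}(\lambda^{n+1}\Lambda M)$ to the overlaps $\braket{v_{\lambda^{n+1}\Lambda M}(k),\{v_{\lambda^n\Lambda^\prime}(k)\hot e_\lambda(k)\}_{\tilde\Lambda\tilde M}}_{\K(k)}$ (the paper reaches the same point via the creation operator, the factors $\sqrt{n+1}$ cancelling between \eqref{eq:1} and \eqref{eq:2}), and then identify each overlap with $\delta_{\Lambda\tilde\Lambda}\delta_{M\tilde M}(\lambda^{n+1}\Lambda\Vert\lambda^n(\Lambda^\prime)\lambda)$, which is exactly the paper's \eqref{eq:4}. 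The gap is in how you justify that identification. You expand the bra by \eqref{eq:vjn} and then invoke orthonormality in $\Lambda^{\prime\prime}$ of the coupled parent vectors $\{v_{\lambda^n\Lambda^{\prime\prime}}(k)\hot e_\lambda(k)\}_{\Lambda M}$. That orthonormality is false for $n\geq2$: when $[\Lambda]$ occurs with multiplicity one in $S^{n+1}(V_\lambda(k))$ (which is what the paper's labelling by $(\lambda^{n+1},\Lambda,M)$ presupposes), every one of these vectors lies on the single line $\C\, v_{\lambda^{n+1}\Lambda M}(k)$, so any two of them that are nonzero cannot be orthogonal. Concretely, for $\lambda=1$, $n=2$, $\Lambda=1$, both SCFPs $(1^2(0)1\Vert1^31)$ and $(1^2(2)1\Vert1^31)$ in Tab.~\ref{tab:t1} are nonzero, hence both parent vectors are nonzero multiples of $v_{1^31M}(k)$ and their overlap cannot vanish. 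Your own heuristic shows where this breaks: the permutation (cross) piece of the scalar product \eqref{eq:scalarSV} contributes $\tfrac{n}{n+1}$ times a $6j$-recoupling matrix contracted with SCFPs, which is generically \emph{not} proportional to $\delta_{\Lambda^\prime\Lambda^{\prime\prime}}$ --- indeed \eqref{eq:recSCFP} records precisely the nontrivial linear dependences that this non-diagonality produces. Falling back on \eqref{eq:cfp-orth} does not rescue the step either: that identity is only the diagonal normalization $\sum_{\Lambda^\prime}\abs{(\cdot)}^2=1$ and neither states nor implies pairwise orthogonality of the parent vectors.

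The paper avoids the Gram matrix of the parent vectors altogether: it expands only \emph{one} argument of $\delta_{\gamma_1\gamma_3}=\braket{v_{\lambda^{n+1}\gamma_1}(k),v_{\lambda^{n+1}\gamma_3}(k)}_{\K(k)}$ by \eqref{eq:vjn}, obtaining a single linear relation among the unknown overlaps, and then pins them down using \eqref{eq:cfp-orth} together with the Wigner--Eckart/Schur diagonality that you also invoke. If you replace your orthonormality step by this one-sided expansion, the remainder of your argument goes through and reproduces the paper's proof.
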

\begin{proof}
By \eqref{eq:cv01} and \eqref{eq:cm1m2-a}, it suffices to examine the case $n\geq2$.
Using \eqref{eq:b} and then \eqref{eq:c}
\begin{equation}
b_{\lambda\rho_1}(k)\he_{\lambda^n\rho}(k) 
=\sqrt{n+1}\sum_{\gamma_1\in\Gamma_{\lambda^{n+1}}}
c_{\rho\rho_1}(\lambda^{n+1}\gamma_1)
v_{\lambda^{n+1}\gamma_1}(k). 
\label{eq:1}
\end{equation}
On the other hand, using \eqref{eq:c} and then \eqref{eq:b}
\begin{equation}
b_{\lambda\rho_1}(k)\he_{\lambda^n\rho}(k)
=\sqrt{n+1}\sum_{\gamma_2\in\Gamma_{\lambda^n}}
c_{\rho}(\lambda^n\gamma_2)
v_{\lambda^n\gamma_2}(k)\hot e_{\lambda\rho_1}(k). 
\label{eq:2}
\end{equation}
Projecting \eqref{eq:1} $=$ \eqref{eq:2} on $v_{\lambda^{n+1}\gamma_1}(k)$ and 
then using \eqref{eq:vorth-a}
\begin{align}
c_{\rho\rho_1}(\lambda^{n+1}\gamma_1)=&
\sum_{\gamma_2\in\Gamma_{\lambda^n}}\sum_{\gamma_3\in\Gamma_{\lambda^{n+1}}}
c_{\rho}(\lambda^n\gamma_2)
\cgc{\Lambda_2}{\lambda}{\Lambda_3}{M_2}{\rho_1}{M_3} \nonumber \\
&\cdot\braket{v_{\lambda^{n+1}\gamma_1}(k), 
\{v_{\lambda^n\Lambda_2}(k)\hot e_\lambda(k)\}_{\gamma_3}}_{\K(k)}. 
\label{eq:3}
\end{align}

Next, using \eqref{eq:vorth-a} and \eqref{eq:vjn}
\[\delta_{\gamma_1\gamma_3}=
\sum_{\Lambda_2}(\lambda^n(\Lambda_2)\lambda\Vert\lambda^{n+1}\Lambda_3)
\braket{v_{\lambda^{n+1}\gamma_1}(k), 
\{v_{\lambda^n\Lambda_2}(k)\hot e_\lambda(k)\}_{\gamma_3}}_{\K(k)}.\]
The latter together with \eqref{eq:cfp-orth} implies
\begin{equation}
\braket{v_{\lambda^{n+1}\gamma_1}(k),
\{v_{\lambda^n\Lambda_2}(k)\hot e_\lambda(k)\}_{\gamma_3}}_{\K(k)}
=\delta_{\gamma_1\gamma_3} 
(\lambda^{n+1}\Lambda_1\Vert\lambda^n(\Lambda_2)\lambda).
\label{eq:4}
\end{equation}
Substitute \eqref{eq:4} in \eqref{eq:3} and deduce the relation as claimed.
\end{proof}
\begin{cor}\label{cor:corth}
Let $n\in\N_0$.
For $\gamma\in\Gamma_{\lambda^n}$ such that $c_\rho(\lambda^n\gamma)$ is not 
identically zero 
\begin{equation}
\sum_{\rho\in\indxl^n}\ol{c_\rho(\lambda^n\gamma)}c_\rho(\lambda^n\gamma^\prime)
=\delta_{\gamma\gamma^\prime}.
\label{eq:jht1}
\end{equation}
\end{cor}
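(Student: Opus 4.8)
The plan is to prove \eqref{eq:jht1} by induction on $n$, reading it as the assertion that the coefficient array $\gamma\lmap(c_\rho(\lambda^n\gamma))_{\rho\in\indxl^n}$ has orthonormal columns, where the sum runs over \emph{all} sequences $\rho\in\indxl^n$ and not over $\Sym_n$-orbits. The base case $n=0$ is immediate from \eqref{eq:cv01}: the only $\gamma$ with $c_{\rho_0}(\lambda^0\gamma)\not\equiv0$ is $\gamma=(0,0)$, and the single empty sequence gives $\sum_{\rho_0}\ol{c_{\rho_0}(\lambda^0\gamma)}c_{\rho_0}(\lambda^0\gamma^\prime)=\delta_{\gamma0}\delta_{\gamma^\prime0}=\delta_{\gamma\gamma^\prime}$ on the admissible indices.

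For the inductive step I would pass from $n$ to $n+1$ using Lemma~\ref{lem:rec1}. Parametrizing $\indxl^{n+1}\cong\indxl^n\times\indxl$ by the concatenation $\rho\rho_1$, I substitute the recurrence for $c_{\rho\rho_1}(\lambda^{n+1}\gamma)$ and $c_{\rho\rho_1}(\lambda^{n+1}\gamma^\prime)$ into $\sum_{\rho,\rho_1}\ol{c_{\rho\rho_1}(\lambda^{n+1}\gamma)}\,c_{\rho\rho_1}(\lambda^{n+1}\gamma^\prime)$, writing $\gamma_1=(\Lambda_1,M_1)$ for the parent label in the first factor and $\gamma_1^\prime=(\Lambda_1^\prime,M_1^\prime)$ in the second. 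I would first carry out the sum over $\rho\in\indxl^n$, which by the induction hypothesis contracts the two parent labels:
\[
\sum_{\rho\in\indxl^n}\ol{c_\rho(\lambda^n\Lambda_1M_1)}\,c_\rho(\lambda^n\Lambda_1^\prime M_1^\prime)=\delta_{\Lambda_1\Lambda_1^\prime}\delta_{M_1M_1^\prime}.
\]
Here the terms with $c_\rho(\lambda^n\gamma_1)\equiv0$ drop out automatically, so only admissible parents survive; this is precisely the set of $\Lambda_1$ over which \eqref{eq:cfp-orth} is summed, so nothing is lost downstream.

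After this contraction, and using that the SCFPs and Clebsch--Gordan coefficients are real, what remains is
\begin{align*}
\sum_{\Lambda_1}&(\lambda^{n+1}\Lambda\Vert\lambda^n(\Lambda_1)\lambda)(\lambda^{n+1}\Lambda^\prime\Vert\lambda^n(\Lambda_1)\lambda) \\
&\cdot\sum_{M_1,\rho_1}\cgc{\Lambda_1}{\lambda}{\Lambda}{M_1}{\rho_1}{M}\cgc{\Lambda_1}{\lambda}{\Lambda^\prime}{M_1}{\rho_1}{M^\prime}.
\end{align*}
Summing over $M_1\in\indx_{\Lambda_1}$ and $\rho_1\in\indxl$ by the orthogonality of Clebsch--Gordan coefficients yields $\delta_{\Lambda\Lambda^\prime}\delta_{MM^\prime}$ for each fixed $\Lambda_1$, and the surviving factor $\sum_{\Lambda_1}\abs{(\lambda^{n+1}\Lambda\Vert\lambda^n(\Lambda_1)\lambda)}^2$ equals $1$ by \eqref{eq:cfp-orth} together with the conjugation property $(\lambda^{n+1}\Lambda\Vert\lambda^n(\Lambda_1)\lambda)=\ol{(\lambda^n(\Lambda_1)\lambda\Vert\lambda^{n+1}\Lambda)}$. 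This leaves $\delta_{\Lambda\Lambda^\prime}\delta_{MM^\prime}=\delta_{\gamma\gamma^\prime}$, as required for admissible $\gamma$; for non-admissible $\gamma$ the left-hand side of \eqref{eq:jht1} vanishes trivially.

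I expect the main obstacle to be organizational rather than analytic: keeping the four index layers ($\rho$, $\rho_1$, the parent $\gamma_1$, and the target $\gamma$) straight, and in particular verifying that the vanishing coefficients $c_\rho(\lambda^n\gamma_1)\equiv0$ excised by the induction hypothesis coincide exactly with the parents absent from \eqref{eq:cfp-orth}, so that the collapse via the induction hypothesis and the subsequent normalization via the SCFP identity are compatible. Since all sums are finite for fixed $n$ and $\lambda$, the reordering of summations is unproblematic, and the realness of the SCFPs and Clebsch--Gordan coefficients lets me dispense with the complex conjugates throughout.
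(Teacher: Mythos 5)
Your proposal is correct and follows essentially the same route as the paper: induction via Lemma~\ref{lem:rec1}, contraction of the parent labels by the induction hypothesis, then Clebsch--Gordan orthogonality and the SCFP normalization \eqref{eq:cfp-orth}. The only cosmetic difference is that you anchor the induction at $n=0$ while the paper verifies $n=2$ explicitly from \eqref{eq:cm1m2-a} and treats $n=0,1$ as trivial; your extra care about the non-vanishing (admissibility) hypothesis is a welcome refinement but not a different argument.
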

\begin{proof}
Let $\rho_1\in\indxl$ and $\rho\in\indxl^n$ and $\gamma_1,\gamma_2
\in\Gamma_{\lambda^{n+1}}$. By Lemma~\ref{lem:rec1}
\begin{align}
\sum_{\rho,\rho_1}&\ol{c_{\rho\rho_1}(\lambda^{n+1}\gamma_1)}
c_{\rho\rho_1}(\lambda^{n+1}\gamma_2) \nonumber \\
=&\sum_{\gamma_1^\prime,\gamma_2^\prime\in\Gamma_{\lambda^n}}
(\lambda^{n+1}\Lambda_2\Vert\lambda^n(\Lambda_2^\prime)\lambda)
(\lambda^n(\Lambda_1^\prime)\lambda\Vert\lambda^{n+1}\Lambda_1) \nonumber \\
&\cdot\sum_{\rho_1}
\cgc{\Lambda_1^\prime}{\lambda}{\Lambda_1}{M_1^\prime}{\rho_1}{M_1}
\cgc{\Lambda_2^\prime}{\lambda}{\Lambda_2}{M_2^\prime}{\rho_1}{M_2} 
\sum_\rho\ol{c_\rho(\lambda^n\gamma_1^\prime)}
c_\rho(\lambda^n\gamma_2^\prime). 
\label{eq:eerrtt}
\end{align}
We argue by induction.
Assume that the sum over $\rho$ on the right-hand side is
$\delta_{\gamma_1^\prime\gamma_2^\prime}$. Then, using \eqref{eq:cfp-orth}, the 
right-hand side of \eqref{eq:eerrtt} is $\delta_{\gamma_1\gamma_2}$. That is, if 
\eqref{eq:jht1} holds for $c_\rho(\lambda^n\gamma)$ with $\rho\in\indxl^n$,
$\gamma\in\Gamma_{\lambda^n}$, then \eqref{eq:jht1} holds for 
$c_\rho(\lambda^{n+1}\gamma)$ with $\rho\in\indxl^{n+1}$,
$\gamma\in\Gamma_{\lambda^{n+1}}$. Now, we know from \eqref{eq:cm1m2-a} that 
\eqref{eq:jht1} is valid for $n=2$. Thus, \eqref{eq:jht1} holds for all $n\geq2$. 
For $n=0$ and $n=1$, \eqref{eq:jht1} applies trivially; see \eqref{eq:cv01}.
\end{proof}
\subsection{Hilbert space decomposition}\label{sec:orth-bases}
With the help of the previously obtained results one deduces the following.
\begin{thm}\label{thm:corth2}
The sequence
\[(\e, 
(v_{\lambda^n\gamma}\vrt\gamma\in\Gamma_{\lambda^n})_{(\lambda,n)\in\N_0\times\N})
\quad \text{with} \quad
v_{\lambda^n\gamma}=
\sum_{\rho\in\indxl^n}\ol{c_{\rho}(\lambda^n\gamma)}\he_{\lambda^n\rho}\]
is a measurable field of orthonormal bases of $\K(k)$. 
\end{thm}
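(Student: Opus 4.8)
The plan is to check, for the explicit fields $v_{\lambda^n\gamma}=\sum_{\rho\in\indxl^n}\ol{c_\rho(\lambda^n\gamma)}\he_{\lambda^n\rho}$, the three defining features of a measurable field of orthonormal bases---measurability of each $k\lmap v_{\lambda^n\gamma}(k)$, orthonormality of distinct members, and completeness in $\K(k)$---having first confirmed that this closed form is genuinely the inverse of \eqref{eq:c}. For the latter, substituting \eqref{eq:c} and interchanging the finite sums gives
\[\sum_{\rho}\ol{c_\rho(\lambda^n\gamma)}\he_{\lambda^n\rho}(k)=\sum_{\gamma^\prime\in\Gamma_{\lambda^n}}\Bigl(\sum_{\rho\in\indxl^n}\ol{c_\rho(\lambda^n\gamma)}c_\rho(\lambda^n\gamma^\prime)\Bigr)v_{\lambda^n\gamma^\prime}(k)=v_{\lambda^n\gamma}(k),\]
the last equality by Corollary~\ref{cor:corth}. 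Measurability is then automatic: $k\lmap\he_{\lambda^n\rho}(k)$ is measurable (as recorded after \eqref{eq:lambdabasis}), $\indxl^n$ is finite, and the coefficients $\ol{c_\rho(\lambda^n\gamma)}$ are constants, so $v_{\lambda^n\gamma}$ is a finite constant-coefficient combination of measurable fields.

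For orthonormality, distinct grades cause no trouble: by \eqref{eq:scalarSV} and \eqref{eq:SVlambda} the subspaces $S^n(V_\lambda(k))$ are mutually orthogonal for distinct $(\lambda,n)$ and all are orthogonal to $\e(k)\in T^0(V(k))$, so only a fixed block must be examined. The crucial observation is that the Gram matrix $G_{\rho\sigma}=\braket{\he_{\lambda^n\rho}(k),\he_{\lambda^n\sigma}(k)}_{\K(k)}=\tfrac1{n!}\sum_{\pi\in\Sym_n}\delta_{\rho,\pi\cdot\sigma}$ is precisely the orthogonal projection onto $\Sym_n$-invariant coordinate sequences. Because $c_\rho(\lambda^n\gamma)$ is $\Sym_n$-invariant, this projection fixes $(\ol{c_\sigma(\lambda^n\gamma^\prime)})_\sigma$, whence
\[\braket{v_{\lambda^n\gamma}(k),v_{\lambda^n\gamma^\prime}(k)}_{\K(k)}=\sum_{\rho,\sigma}c_\rho(\lambda^n\gamma)\ol{c_\sigma(\lambda^n\gamma^\prime)}G_{\rho\sigma}=\sum_{\rho}c_\rho(\lambda^n\gamma)\ol{c_\rho(\lambda^n\gamma^\prime)}=\delta_{\gamma\gamma^\prime},\]
again by Corollary~\ref{cor:corth}; alternatively one may simply invoke the orthonormality \eqref{eq:vorth} imposed during the construction.

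Completeness reduces to matching spans block by block. The fields $\he_{\lambda^n\rho}(k)$ together with $\e(k)$ span $\K(k)$ by \eqref{eq:v} and \eqref{eq:SVlambda}; within a fixed block, \eqref{eq:c} expresses each $\he_{\lambda^n\rho}$ through the $v_{\lambda^n\gamma}$ and the closed form expresses each $v_{\lambda^n\gamma}$ through the $\he_{\lambda^n\rho}$, so both families span the same subspace $S^n(V_\lambda(k))$. Discarding those $\gamma$ for which $c_\rho(\lambda^n\gamma)\equiv0$ (forbidden by the angular-reduction rules, these yield $v_{\lambda^n\gamma}=0$), the surviving vectors are orthonormal and spanning, hence an orthonormal basis of the block; assembling over $(\lambda,n)$ and adjoining $\e$ produces an orthonormal basis of $\K(k)$ for every $k$, which with the measurability above is a measurable field of orthonormal bases in the sense of \cite[Proposition~II.1.4.1]{Dixmier}.

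I expect the orthonormality inside a single block to be the main obstacle, since the symmetric scalar product \eqref{eq:scalarSV} has a non-diagonal Gram matrix that must be reconciled with the Clebsch--Gordan and SCFP data; the identification of this Gram matrix with the symmetrization projector is what collapses the computation onto Corollary~\ref{cor:corth}. A subsidiary point demanding care is the bookkeeping of which $\gamma\in\Gamma_{\lambda^n}$ survive, so that the number of nonzero $v_{\lambda^n\gamma}$ matches $\dim S^n(V_\lambda(k))$ and no spanning direction is lost.
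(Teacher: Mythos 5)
Your proof is correct and follows essentially the same route as the paper: both establish the mutual invertibility of \eqref{eq:c} and the closed form $v_{\lambda^n\gamma}=\sum_{\rho}\ol{c_{\rho}(\lambda^n\gamma)}\he_{\lambda^n\rho}$ via Corollary~\ref{cor:corth}, reduce completeness to a block-by-block span comparison using \eqref{eq:SVlambda}, and rely on the $\Sym_n$-invariance of the coefficients together with the Gram matrix $\frac{1}{n!}\sum_{\pi}\delta_{\rho^\prime,\pi\cdot\rho}$. Your direct verification of orthonormality through the symmetrization projector is a slightly more self-contained variant of the paper's appeal to \eqref{eq:vorth}, but the underlying computation is the same.
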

\begin{proof}
We show that the field of orthonormal bases
\begin{subequations}\label{eq:Bk}
\begin{align}
B_\lambda(k)=&(\e(k), 
(v_{\lambda^n\gamma}(k)\vrt\gamma\in\Gamma_{\lambda^n})_{n\in\N})
\quad \text{with} \label{eq:Bk-a} \\
v_{\lambda^n\gamma}(k)=&
\sum_{\rho\in\indxl^n}\ol{c_{\rho}(\lambda^n\gamma)}\he_{\lambda^n\rho}(k)
\quad \text{all $k$} \label{eq:Bk-b}
\end{align}
\end{subequations}
is in one-to-one correspondence with a measurable field of bases in 
\eqref{eq:lambdabasis}. Then the proof is accomplished by using \eqref{eq:SVlambda},
since $(B_\lambda(k))_{\lambda\in\N_0}$ is a sequence of orthonormal vectors
by \eqref{eq:vorth}.

Relation \eqref{eq:Bk-b} follows from \eqref{eq:c} and Corollary~\ref{cor:corth}.
We show the converse. Multiply \eqref{eq:Bk-b} by $c_{\rho}(\lambda^n\gamma)$ and
sum over $\gamma\in\Gamma_{\lambda^n}$. We have 
\[\sum_{\gamma}c_{\rho}(\lambda^n\gamma)v_{\lambda^n\gamma}(k)
=\sum_{\rho^\prime}C_{\lambda^n\rho\rho^\prime}
\he_{\lambda^n\rho^\prime}(k)\]
with
\[C_{\lambda^n\rho\rho^\prime}\dfn\sum_{\gamma}
c_{\rho}(\lambda^n\gamma)\ol{c_{\rho^\prime}(\lambda^n\gamma)}.\]
On the other hand, projecting \eqref{eq:Bk-b} on $\he_{\lambda^n\rho}(k)$ and then
using \eqref{eq:scalarSV} and that $c_{\rho}(\lambda^n\gamma)$ is invariant under
the action of $\Sym_n$
\[c_\rho(\lambda^n\gamma)=\braket{v_{\lambda^n\gamma}(k),
\he_{\lambda^n\rho}(k)}_{\K(k)}.\]
By assumption, $B_\lambda(k)$ is complete; hence
\begin{align*}
C_{\lambda^n\rho\rho^\prime}=&\sum_\gamma
\braket{\he_{\lambda^n\rho^\prime}(k),v_{\lambda^n\gamma}(k)}_{\K(k)}
\braket{v_{\lambda^n\gamma}(k),\he_{\lambda^n\rho}(k)}_{\K(k)} \\
=&\braket{\he_{\lambda^n\rho^\prime}(k),\he_{\lambda^n\rho}(k)}_{\K(k)}
=\frac{1}{n!}\sum_{\pi\in\Sym_n}\delta_{\rho^\prime,\pi\cdot\rho}.
\end{align*}
\ie 
\[\sum_{\rho^\prime}C_{\lambda^n\rho\rho^\prime}
\he_{\lambda^n\rho^\prime}(k)=\he_{\lambda^n\rho}(k).\]
Thus, \eqref{eq:Bk-a} is in one-to-one correspondence with \eqref{eq:lambdabasis}.
\end{proof}
The key conclusion following from Theorem~\ref{thm:corth2} is that $\K$ is the orthogonal 
sum $\op_\Gamma\K_\Gamma$ of irreducible invariant subspaces $\K_\Gamma$ spanned by 
square-integrable vector fields 
\[\{\e,v_{\Gamma M}\vrt M\in\indx_\Lambda\}, \quad 
\Gamma\equiv(\lambda^n,\Lambda), \quad n\in\N.\] 
Further, using the rules for reducing the tensor product $[J]\ot[\Lambda]\lto[L]$, the 
space $\L_{0,J}\ot\K_\Gamma$ is the orthogonal sum $\op_L\H_{J\Gamma L}$ of irreducible 
invariant subspaces spanned by square-integrable vector fields 
\begin{equation}
\{w_{LM_L}\ot\e,h_{J\Gamma LM_L}\vrt M_L\in\indx_L\}, \quad
h_{J\Gamma LM_L}\dfn\{w_J\ot v_\Gamma\}_{LM_L}.
\label{eq:hbasis}
\end{equation}
Therefore
\begin{subequations}\label{eq:Hdecom}
\begin{equation}
\H=\H^0\op\Hex
\end{equation}
where 
\begin{align}
\H^0\dfn&\bigoplus_L\H_L^0, \quad
\H_L^0\dfn\L_{0,L}\ot\C\e, \\
\Hex\dfn&\bigoplus_L\HLex, \quad 
\HLex\dfn
\bigoplus_{\substack{J,\Gamma \\ (n\geq1)}} \H_{J\Gamma L}.
\end{align}
Physically, $\H^0$ is the Hilbert space of vector fields without phonons ($n=0$); 
$\Hex$ is the Hilbert space of vector fields that account for phonon 
excitations $(n\geq1)$. 

The spaces $\H_L^0$ and $\HLex$ are direct integrals of measurable fields
$k\lmap\L_{0,L}\ot\C\e(k)$ and $k\lmap \HLex(k)$, respectively, of Hilbert spaces over
$(\Rp,\F,\mu)$. The space $\H_L^0(k)$ has an orthonormal basis 
$(w_{LM_L}\ot\e(k))_{M_L}$; $\HLex(k)$ is the
orthogonal sum of Hilbert spaces $\H_{J\Gamma L}(k)$ with orthonormal bases 
$(h_{J\Gamma LM_L}(k))_{M_L}$.

We have that
\begin{equation}
\H=\bigoplus_L\H_L, \quad \H_L\dfn\H_L^0\op\HLex.
\end{equation}
\end{subequations}
Using \eqref{eq:Hdecom}, an element $\psi_{LM_L}$ of $\H_L$ is a vector
field $k\lmap\psi_{LM_L}(k)$, with the value
\begin{subequations}\label{eq:psiLabc}
\begin{equation}
\psi_{LM_L}(k)=c_{LM_L}(k)w_{LM_L}\ot\e(k)+\sum_{J,\Gamma}
c_{J\Gamma LM_L}(k)h_{J\Gamma LM_L}(k)
\label{eq:psiL}
\end{equation}
belonging to $\H_L(k)$ for $\mu$-a.e. $k$. The sum over $(J,\Gamma)\in Z$ (an index set)
is a vector norm limit of partial sums. The coordinates
\begin{align}
c_{LM_L}(k)\equiv &c_{L\lambda^00LM_L}(k)\dfn
\braket{w_{LM_L}\ot\e(k),\psi_{LM_L}(k)}_{\K(k)}, \label{eq:c0} \\
c_{J\lambda^n\Lambda LM_L}(k)\dfn&
\braket{h_{J\lambda^n\Lambda LM_L}(k),\psi_{LM_L}(k)}_{\K(k)}, \quad n\in\N
\label{eq:cn}
\end{align}
are such that
\begin{align}
&k\lmap(c_{LM_L}(k))_{M_L\in\indx_L} \;\text{is of class}\;
L^2(\Rp,\mu;\ell^2(\indx_L)), \\
&k\lmap (c_{J\Gamma LM_L}(k))_{((J,\Gamma),M_L)\in Z\times\indx_L}
\;\text{is of class}\; \nonumber \\
&\qquad\qquad\qquad\qquad\qquad\qquad\quad L^2(\Rp,\mu;\ell^2(Z\times\indx_L)).
\end{align}
\end{subequations}
According to \eqref{eq:c00}, there exists a finite $N\in\N_0$ such that
$c_{J\Gamma L M_L}(k)$ vanishes identically for $n>N$.
Thus, the sum over $n\in\N$ in \eqref{eq:psiL} is actually the sum over
$n\in\{1,2,\ldots,N\}$ for some finite $N$.

An element $\psi$ of $\H$ is a vector field $k\lmap\psi(k)$, with the value
given by the linear span of \eqref{eq:psiL}.
\section{Decomposition of angulon operator}\label{sec:reduce}
Here we calculate the matrix elements of the (rotated) angulon operator 
\eqref{eq:a1prime} with respect to the field \eqref{eq:hbasis} of orthonormal bases of
$\H_L^0(k)\op\H_{J\Gamma L}(k)$; here and elsewhere, we assume that
$n\in\N$ when we write $\H_{J\Gamma L}(k)$. The results have direct influence on the 
spectral analysis of angulon.

The orthonormal bases represent the $SO_3$-irreducible tensor operator of rank $L$.
Since the angulon operator $A$ is the $SO_3$-scalar tensor operator, the
Wigner--Eckart theorem \cite[Eq.~(5.15)]{Rudzikas}, \cite[Eq.~(3.41)]{Rudz-Kan}, 
\cite[Sec.~I.2]{Jucys-c} implies that the matrix elements of $A$ are diagonal with
respect to $(L,M_L)$. Thus $A$ maps $\dom A\mcap\H_L$ into $\H_L$, \ie
$\H_L$ is an invariant subspace for $A$. Then, the operator
\begin{equation}
A_L\dfn A\vert\dom A_L, \quad \dom A_L\dfn\dom A\mcap\H_L
\label{eq:AL}
\end{equation}
is the part of $A$ in $\H_L$. The domain $\dom A_L$ consists of measurable vector
fields $k\lmap\psi_{LM_L}(k)$, with $\psi_{LM_L}(k)$ as in \eqref{eq:psiLabc}, and with 
the coordinates satisfying in addition the property that
\begin{subequations}\label{eq:coord}
\begin{align}
&k\lmap(L(L+1)c_{LM_L}(k))_{M_L\in\indx_L} \;\text{is of class}\;
L^2(\Rp,\mu;\ell^2(\indx_L)), \\
&k\lmap (J(J+1)c_{J\Gamma LM_L}(k))_{((J,\Gamma),M_L)\in Z\times\indx_L}
\;\text{is of class}\; \nonumber \\
&\qquad\qquad\qquad\qquad\qquad\qquad\qquad\qquad\quad 
L^2(\Rp,\mu;\ell^2(Z\times\indx_L)).
\end{align}
\end{subequations}
Thus $\dom A_L\subseteq \H_L$ densely.

The results below imply that $A_L$ is self-adjoint and decomposable. 
Indeed, let $P_L(k)$ be the projection of $\H(k)$ onto $\H_L(k)$. 
Let $B_L(k)\dfn P_L(k)A(k)$. Since $P_L(k)$ is orthogonal and $A(k)$ is self-adjoint, we 
have $B_L(k)^*=A(k)P_L(k)$. By Lemma~\ref{lem:matrix} below, $B_L(k)$ is symmetric. 
Thus $B_L(k)\subseteq B_L(k)^*$. On the other hand, 
$\dom B_L(k)^*\subseteq\H_L(k)$; so 
for $\psi_L(k)\in\dom B_L(k)^*$ and $\phi_L(k)\in\H_L(k)$ 
\begin{align*}
B_L(k)P_L(k)\psi_L(k)=&B_L(k)\psi_L(k), \\
\braket{B_L(k)P_L(k)\psi_L(k),\phi_L(k)}_{\H(k)}=&
\braket{A(k)P_L(k)\psi_L(k),\phi_L(k)}_{\H(k)}
\end{align*}
Thus $B_L(k)^*\subseteq B_L(k)$ and we have that $B_L(k)$ is a self-adjoint operator in 
$\H_L(k)$. Since the projection $P_L(k)$ commutes with $A(k)$, $\H_L(k)$ is a reducing 
(and hence invariant) 
subspace for $A(k)$ and we have $B_L(k)=A_L(k)$. To a measurable field $k\lmap A_L(k)$ of 
self-adjoint operators corresponds a self-adjoint decomposable operator 
$A_L=\int^{\op}A_L(k)\mu(dk)$. Using \eqref{eq:Hdecom}, $A$ is the orthogonal 
sum 
\begin{equation}
A=\bigoplus_L A_L
\label{eq:AL2}
\end{equation}
of self-adjoint decomposable operators \eqref{eq:AL}, and the sum over $L$ is
understood as a strong limit of partial sums.

Let us define
\begin{equation}
U_{\lambda L}(J^\prime\Lambda^\prime J\Lambda k)\dfn
(-1)^{J+L}U_\lambda(k)(J^\prime\Vert Y_\lambda \Vert J)
\sixj{J^\prime}{\Lambda^\prime}{L}{\Lambda}{J}{\lambda}
\label{eq:UUU}
\end{equation}
and
\begin{subequations}\label{eq:tauupsilon}
\begin{align}
\tau(\lambda^n\Lambda^\prime\Lambda)\dfn&(-1)^{\Lambda^\prime}
\sqrt{(n+1)(2\Lambda^\prime+1)}(\lambda^{n+1}\Lambda^\prime \Vert 
\lambda^n(\Lambda)\lambda), \quad n\in\N_0, \\
\upsilon(\lambda^n\Lambda^\prime\Lambda)\dfn&
\ol{\tau(\lambda^{n-1}\Lambda\Lambda^\prime)}, \quad n\in\N.
\end{align}
\end{subequations}
Here $(J^\prime\Vert Y_\lambda \Vert J)$ is the \textit{reduced} matrix element for
$\braket{w_{J^\prime M^\prime},Y_{\lambda\rho}w_{JM}}_{\L_0}$.
\begin{rems}\label{rems:rems1}
(1) $(J^\prime\Vert Y_\lambda\Vert J)=(-1)^{J^\prime-J}
\ol{(J\Vert Y_\lambda\Vert J^\prime)}$, which follows from the Wigner--Eckart theorem. 
For example, let $\L_0=\mathbb{S}^2$ be a unit sphere and $w_J=Y_J$; then 
\begin{equation}
(J^\prime\Vert Y_\lambda\Vert J)=\sqrt{\frac{(2\lambda+1)(2J+1)}{4\pi}}
\cgc{J}{\lambda}{J^\prime}{0}{0}{0}.
\label{eq:Ysubm}
\end{equation}
The reduced matrix element in \eqref{eq:Ysubm} is nonzero iff the integer 
$J+\lambda+J^\prime$ is even.

(2) $U_{\lambda L}(J^\prime\Lambda^\prime J\Lambda k)=
\ol{U_{\lambda L}(J\Lambda J^\prime\Lambda^\prime k)}$, which is due to 
\eqref{eq:UUU} and the above remark.
\end{rems}
\begin{lem}\label{lem:matrix}
The matrix elements of the angulon operator \eqref{eq:a1prime} with respect to the field 
\eqref{eq:hbasis} of orthonormal bases of $\H_L(k)$ are given by 
\begin{subequations}\label{eq:matrices}
\begin{align}
[A(k)]_{LM_L,LM_L}=&cL(L+1), \label{eq:matrices-a} \\
[A(k)]_{LM_L,J\lambda^n\Lambda LM_L}=&
\delta_{n1}\delta_{\Lambda\lambda}(-1)^\lambda\sqrt{2\lambda+1}
U_{\lambda L}(L0J\lambda k) \label{eq:matrices-b}
\intertext{for $n\in\N$, and}
[A(k)]_{J^\prime\lambda^{\prime\,n^\prime}\Lambda^\prime LM_L,J\lambda^n\Lambda LM_L}=&
\delta_{J^\prime J}\delta_{\Gamma^\prime\Gamma}(cJ(J+1)+n\omega(k)) \nonumber \\
&+\delta_{\lambda^\prime\lambda}U_{\lambda L}(J^\prime\Lambda^\prime J\Lambda k)
\nonumber \\
&\cdot(\delta_{n^\prime,n+1}\tau(\lambda^n\Lambda^\prime\Lambda)+
\delta_{n,n^\prime+1}\upsilon(\lambda^n\Lambda^\prime\Lambda)) \label{eq:matrices-c}
\end{align}
\end{subequations}
for $n,n^\prime\in\N$.
\end{lem}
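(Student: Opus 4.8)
The plan is to split the rotated operator as $A(k)=A^{0}(k)+W^\prime(k)$ and evaluate each piece on the coupled field \eqref{eq:hbasis}. The free part $A^{0}=c\bJ^2\ot I+I_0\ot n\omega$ acts diagonally: on $w_{LM_L}\ot\e(k)$ it returns $cL(L+1)$, since $n(k)\e(k)=0$ annihilates the $n\omega$ contribution, which is \eqref{eq:matrices-a}; on $h_{J\lambda^n\Lambda LM_L}(k)=\{w_J\ot v_{\lambda^n\Lambda}(k)\}_{LM_L}$ it returns $cJ(J+1)+n\omega(k)$, because $w_J$ is an eigenvector of $\bJ^2$ and $v_{\lambda^n\Lambda}(k)$ lies in $S^n(V_\lambda(k))$, so $n(k)$ acts as multiplication by $n$. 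By orthonormality of \eqref{eq:hbasis} this gives the first term $\delta_{J^\prime J}\delta_{\Gamma^\prime\Gamma}(cJ(J+1)+n\omega(k))$ of \eqref{eq:matrices-c}, and $A^0$ contributes nothing to the no-phonon/phonon block since it preserves $n$ and states with different $n$ are orthogonal.

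All the remaining work is in $W^\prime$, which by \eqref{eq:a1prime} is a sum of $SO_3$-scalars $\{Y_\lambda\ot b_\lambda(k)\}_0$ and $\{Y_\lambda\ot b_\lambda(k)^\sim\}_0$, with $Y_\lambda=Y_\lambda(\ho)$ acting on $\L_0$ and $b_\lambda(k)^\#$ acting on $\K(k)$. The first ingredient I would establish is the reduced matrix element of the creation tensor operator $b_\lambda(k)$ on the field $v_{\lambda^n\Lambda}(k)$. Starting from \eqref{eq:b}, one has $b_{\lambda\rho}(k)v_{\lambda^n\Lambda M}(k)=\sqrt{n+1}\,e_{\lambda\rho}(k)\hot v_{\lambda^n\Lambda M}(k)$; coupling $e_\lambda$ (rank $\lambda$) with $v_{\lambda^n\Lambda}$ (rank $\Lambda$) by Clebsch--Gordan coefficients and then re-expressing $\{v_{\lambda^n\Lambda}\hot e_\lambda\}_{\Lambda^\prime}$ through the SCFP recurrence \eqref{eq:vjn} (using the orthogonality \eqref{eq:cfp-orth}) identifies the component along $v_{\lambda^{n+1}\Lambda^\prime}$. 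Reading off the Wigner--Eckart reduced element then produces exactly $\tau(\lambda^n\Lambda^\prime\Lambda)$ of \eqref{eq:tauupsilon}, with the $\sqrt{n+1}$ from \eqref{eq:b}, the SCFP from \eqref{eq:vjn}, and the $(-1)^{\Lambda^\prime}\sqrt{2\Lambda^\prime+1}$ from the normalization and the symmetric-product ordering. Since $b_\lambda(k)^\sim$ is the adjoint up to the phase built into \eqref{eq:btilde}, its reduced element is $\upsilon(\lambda^n\Lambda^\prime\Lambda)=\ol{\tau(\lambda^{n-1}\Lambda\Lambda^\prime)}$; these are the only nonzero phonon blocks, which forces $n^\prime=n+1$ for creation, $n=n^\prime+1$ for annihilation, and $\lambda^\prime=\lambda$.

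The second ingredient is the matrix element of an $SO_3$-scalar assembled from two commuting rank-$\lambda$ tensor operators, one on $\L_0$ and one on $\K(k)$, between coupled states $\{w_{J^\prime}\ot v_{\Lambda^\prime}\}_{L^\prime M_{L^\prime}}$ and $\{w_J\ot v_\Lambda\}_{LM_L}$. The Racah scalar-product formula (\cf \cite[Sec.~I.2]{Jucys-c}) gives diagonality in $(L,M_L)$, as the Wigner--Eckart theorem for a rank-$0$ operator demands, together with a single $6j$-symbol times the product of the two reduced elements. Combining the molecular reduced element $(J^\prime\Vert Y_\lambda\Vert J)$, the symbol $\sixj{J^\prime}{\Lambda^\prime}{L}{\Lambda}{J}{\lambda}$, and the phase $(-1)^{J+L}$ reproduces the definition \eqref{eq:UUU} of $U_{\lambda L}(J^\prime\Lambda^\prime J\Lambda k)$, while the phonon reduced element supplies $\tau$ or $\upsilon$. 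Assembling the creation block ($n^\prime=n+1$, prefactor $(-1)^\lambda\sqrt{2\lambda+1}U_\lambda$) and the annihilation block ($n=n^\prime+1$, prefactor $\sqrt{2\lambda+1}U_\lambda$) then yields \eqref{eq:matrices-c}. For the no-phonon/phonon block one has $v_{\lambda^0}=\e$, hence $J^\prime=L$, $\Lambda^\prime=0$, and partner $\Lambda=\lambda$; here it is cleanest to invoke the self-adjointness of $W^\prime$ and read the entry off the creation block acting on the vacuum, which carries the $(-1)^\lambda\sqrt{2\lambda+1}$ prefactor of \eqref{eq:a1prime}, delivering \eqref{eq:matrices-b}.

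The hard part will be the phase bookkeeping in the two reduction steps. On the phonon side I must verify that the symmetric product pushes the order-reversal phase into the SCFP so that only the admissible $\Lambda$ survive, while still producing the $(-1)^{\Lambda^\prime}\sqrt{2\Lambda^\prime+1}$ normalization of $\tau$; on the molecular side I must confirm that the Racah scalar-product formula delivers precisely the phase $(-1)^{J+L}$ and the entry ordering of the $6j$-symbol in \eqref{eq:UUU}. Reconciling these against the Fano--Racah conventions of \cite{Rudzikas,Rudz-Kan}, and ensuring that the extra phase in \eqref{eq:btilde} propagates consistently into $\upsilon$ (so that the direct and Hermitian-conjugate computations of the no-phonon block agree), is where the calculation is most error-prone. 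Everything else is direct substitution using orthonormality of \eqref{eq:hbasis} and the identities \eqref{eq:cfp-orth} and \eqref{eq:recSCFP}.
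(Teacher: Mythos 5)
Your proposal follows the paper's own proof essentially step for step: the split $A(k)=A^0(k)+W'(k)$ with the free part acting diagonally, the Racah scalar-product formula producing the $6j$-symbol and the product of reduced matrix elements that assemble into $U_{\lambda L}(J'\Lambda'J\Lambda k)$, the reduced element of the creation tensor $b_\lambda(k)$ computed from \eqref{eq:b} together with the SCFP recurrence \eqref{eq:vjn} and orthogonality \eqref{eq:cfp-orth} (this is exactly the paper's key intermediate claim \eqref{eq:breduced}), and the adjoint relation carrying the phase of \eqref{eq:btilde} to get $\upsilon$. The only cosmetic difference is the vacuum-to-one-phonon block, which you read off the creation block by self-adjointness while the paper computes $(0k\Vert b_\Lambda(k)^\sim\Vert\Gamma k)$ directly by comparing Wigner--Eckart with the explicit annihilation action; the two are equivalent.
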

\begin{proof}
Using \eqref{eq:J2} and \eqref{eq:n},
$\bJ^2 w_{LM_L}=L(L+1)w_{LM_L}$ and $n(k)\e(k)=0$ (all $k$). Using
\eqref{eq:b} and \eqref{eq:b+}, \eqref{eq:btilde}, $[W(k)]_{LM_L,LM_L}=0$ (all $k$);
hence \eqref{eq:matrices-a}.

By the same arguments, $[A^0(k)]_{LM_L,J\Gamma LM_L}=0$ (all $k$; $n\geq1$).
Using in addition \cite[Eq.~(3.47)]{Rudz-Kan}, for
$n\geq1$
\[[W(k)]_{LM_L,J\Gamma LM_L}=\frac{U_\Lambda(k)(L\Vert Y_\Lambda\Vert J)}{
\sqrt{(2\Lambda+1)(2L+1)}}(0 k\Vert b_\Lambda(k)^{\sim}\Vert \Gamma k).\]
On the one hand, the Wigner--Eckart theorem reads
\[\braket{1(k),b_{\Lambda,-M}(k)^{\sim}v_{\Gamma M}(k)}_{\K(k)}
=\frac{(-1)^{\Lambda+M}}{\sqrt{2\Lambda+1}}
(0 k\Vert b_\Lambda(k)^{\sim}\Vert \Gamma k).\]
On the other hand, by the definition of the annihilation operator
\[\braket{1(k),b_{\Lambda,-M}(k)^{\sim}v_{\Gamma M}(k)}_{\K(k)}
=\delta_{n1}\delta_{\Lambda\lambda}(-1)^{\lambda+M}.\]
This shows \eqref{eq:matrices-b} written in terms of \eqref{eq:UUU}.

Yet again by the same arguments
\begin{align*}
[A^0(k)]_{J^\prime\Gamma^\prime LM_L,J\Gamma LM_L}=&
\delta_{J^\prime J}\delta_{\Gamma^\prime\Gamma}(cJ(J+1)+n\omega(k)), \\
[W(k)]_{J^\prime\Gamma^\prime LM_L,J\Gamma LM_L}=&
(-1)^{\Lambda^\prime}\sum_{\lambda^{\prime\prime}}
U_{\lambda^{\prime\prime}L}(J^\prime\Lambda^\prime J\Lambda k) \\
&\cdot((\Gamma^\prime k\Vert b_{\lambda^{\prime\prime}}(k)\Vert\Gamma k)
+(-1)^{\lambda^{\prime\prime}}
(\Gamma^\prime k\Vert b_{\lambda^{\prime\prime}}(k)^{\sim}\Vert\Gamma k)).
\end{align*}
We claim that, for $n\geq1$, the reduced matrix element
\begin{equation}
(\Gamma^\prime k\Vert b_{\lambda^{\prime\prime}}(k)\Vert\Gamma k)
=\delta_{\lambda^\prime\lambda}\delta_{\lambda^{\prime\prime}\lambda}
\delta_{n^\prime,n+1}\sqrt{n+1}(\lambda^{n+1}\Lambda^\prime\Vert\lambda^n(\Lambda)
\lambda).
\label{eq:breduced}
\end{equation}
Then, using
\[(\lambda^{n-1}\Lambda^\prime k\Vert b_{\lambda}(k)^{\sim}\Vert
\lambda^n\Lambda k)=(-1)^{\Lambda+\Lambda^\prime+\lambda}
\ol{(\lambda^n\Lambda k\Vert b_\lambda(k)\Vert \lambda^{n-1}\Lambda^\prime k)}\]
one deduces \eqref{eq:matrices-c}.

By the Wigner--Eckart theorem, for $\gamma\in\Gamma_{\lambda^n}$ and
$\gamma^\prime\in\Gamma_{\lambda^{\prime\,n^\prime}}$ and
$\rho^{\prime\prime}\in\indx_{\lambda^{\prime\prime}}$
\[\braket{v_{\lambda^{\prime\,n^\prime}\gamma^\prime}(k),
b_{\lambda^{\prime\prime}\rho^{\prime\prime}}(k)v_{\lambda^n\gamma}(k)}_{\K(k)}
=\cgc{\Lambda}{\lambda^{\prime\prime}}{\Lambda^\prime}{M}{\rho^{\prime\prime}}{
M^\prime}\frac{(\Gamma^\prime k\Vert b_{\lambda^{\prime\prime}}(k)\Vert
\Gamma k)}{\sqrt{2\Lambda^\prime+1}}.\]
On the other hand, using \eqref{eq:b}
\begin{align*}
&\braket{v_{\lambda^{\prime\,n^\prime}\gamma^\prime}(k),
b_{\lambda^{\prime\prime}\rho^{\prime\prime}}(k)v_{\lambda^n\gamma}(k)}_{\K(k)}
=\delta_{n^\prime,n+1}\sqrt{n+1}\sum_{\Lambda^{\prime\prime},M^{\prime\prime}}
\cgc{\Lambda}{\lambda^{\prime\prime}}{\Lambda^{\prime\prime}}{M}{\rho^{\prime\prime}}{
M^{\prime\prime}} \\
&\qquad\qquad\qquad\qquad\qquad\qquad \cdot
\braket{v_{\lambda^{\prime\,n+1}\gamma^\prime}(k),
\{v_{\lambda^n\Lambda}(k)\hot e_{\lambda^{\prime\prime}}(k)\}_{\Lambda^{\prime\prime}
M^{\prime\prime}}}_{\K(k)}.
\end{align*}
By \eqref{eq:Bk-b} and then using \eqref{eq:4}, for $n\geq1$ 
\begin{align*}
&\braket{v_{\lambda^{\prime\,n+1}\gamma^\prime}(k),
\{v_{\lambda^n\Lambda}(k)\hot e_{\lambda^{\prime\prime}}(k)\}_{\Lambda^{\prime\prime}
M^{\prime\prime}}}_{\K(k)} \\
&\qquad =
\delta_{\lambda^\prime\lambda}\delta_{\lambda^{\prime\prime}\lambda}
\braket{v_{\lambda^{n+1}\gamma^\prime}(k),
\{v_{\lambda^n\Lambda}(k)\hot e_{\lambda}(k)\}_{\Lambda^{\prime\prime}
M^{\prime\prime}}}_{\K(k)} \\
&\qquad =
\delta_{\lambda^\prime\lambda}\delta_{\lambda^{\prime\prime}\lambda}
\delta_{\Lambda^{\prime\prime}\Lambda^\prime}
\delta_{M^{\prime\prime}M^\prime}(\lambda^{n+1}\Lambda^\prime\Vert
\lambda^n(\Lambda)\lambda)
\end{align*}
from which \eqref{eq:breduced} follows. This completes the proof of the lemma.
\end{proof}
\section{Eigenspace of angulon}\label{sec:eigen}
Let $E$ be an eigenvalue of the angulon operator $A$. According to the decomposition
\eqref{eq:AL2}, $E=E_{LM_L}$ is an eigenvalue of the part $A_L$ of $A$ in $\H_L$
for some $L$. Since $A_L$ is decomposable relative to $k\lmap\H_L(k)$,
$E$ is an eigenvalue of $A_L(k)$ for $\mu$-a.e. $k$.
\begin{thm}\label{thm:eigen}
Let $E=E_{LM_L}$ be the eigenvalue of $A_L$ for some $L$. For $\mu$-a.e. $k$,
the coordinates of the eigenvector satisfy \eqref{eq:psiLabc}, \eqref{eq:coord}, and
\begin{subequations}\label{eq:ceigen}
\begin{align}
0=&(cL(L+1)-E)c_{LM_L}(k) \nonumber \\
&+\sum_{J,\lambda}(-1)^\lambda\sqrt{2\lambda+1}
U_{\lambda L}(L0J\lambda k)c_{J\lambda^1\lambda LM_L}(k) \label{eq:ceigen-a} 
\intertext{and}
0=&\delta_{n1}\delta_{\Lambda\lambda}(-1)^\lambda\sqrt{2\lambda+1}
U_{\lambda L}(J\lambda L0k)c_{LM_L}(k) \nonumber \\
&+(cJ(J+1)+n\omega(k)-E)c_{J\lambda^n\Lambda LM_L}(k) \nonumber \\
&+\sum_{J^\prime,\Lambda^\prime}U_{\lambda L}(J\Lambda J^\prime\Lambda^\prime k)
[H(n-2)
\tau(\lambda^{n-1}\Lambda\Lambda^\prime)
c_{J^\prime\lambda^{n-1}\Lambda^\prime LM_L}(k) \nonumber \\
&+\upsilon(\lambda^{n+1}\Lambda\Lambda^\prime)
c_{J^\prime\lambda^{n+1}\Lambda^\prime LM_L}(k)]
\label{eq:ceigen-b}
\end{align}
\end{subequations}
for $n\in\N$. The step function $H(x)=1$ for $x\geq0$, and $H(x)=0$ for $x<0$.
\end{thm}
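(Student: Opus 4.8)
The plan is to convert the abstract spectral statement for the decomposable operator $A_L$ into the coordinate system \eqref{eq:ceigen} by expanding the fiber eigenvalue equation in the orthonormal basis \eqref{eq:hbasis} and reading off the matrix elements supplied by Lemma~\ref{lem:matrix}. Since $A_L=\int^{\op}A_L(k)\mu(dk)$ is self-adjoint decomposable and $E=E_{LM_L}$ is an eigenvalue of $A_L$, the fiber equation $(A_L(k)-E)\psi_{LM_L}(k)=0$ holds in $\H_L(k)$ for $\mu$-a.e.\ $k$ (as already noted before the theorem, \cf \cite[Theorem~XIII.85(e)]{Reed-2}), with $\psi_{LM_L}(k)$ of the form \eqref{eq:psiL} and coordinates of the regularity \eqref{eq:coord}. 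The whole statement then reduces to evaluating $\braket{v,(A_L(k)-E)\psi_{LM_L}(k)}_{\H_L(k)}=0$ for each basis vector $v$ of $\H_L(k)$.

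First I would project onto the no-phonon vector $w_{LM_L}\ot\e(k)$. The diagonal entry \eqref{eq:matrices-a} contributes $(cL(L+1)-E)c_{LM_L}(k)$, while the only surviving off-diagonal entries couple to the one-phonon sector through \eqref{eq:matrices-b}, which carries $\delta_{n1}\delta_{\Lambda\lambda}$. Pairing these against $c_{J\lambda^1\lambda LM_L}(k)$ and summing over $J,\lambda$ yields \eqref{eq:ceigen-a} verbatim.

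Next I would project onto a phonon vector $h_{J\lambda^n\Lambda LM_L}(k)$ with $n\in\N$, where three kinds of entries contribute. The coupling back to the no-phonon sector is the adjoint of \eqref{eq:matrices-b}; invoking self-adjointness of $A_L(k)$ together with Remarks~\ref{rems:rems1}(2) to rewrite $\ol{U_{\lambda L}(L0J\lambda k)}$ as $U_{\lambda L}(J\lambda L0k)$ gives the term $\delta_{n1}\delta_{\Lambda\lambda}(-1)^\lambda\sqrt{2\lambda+1}\,U_{\lambda L}(J\lambda L0k)c_{LM_L}(k)$. The diagonal part of \eqref{eq:matrices-c} gives $(cJ(J+1)+n\omega(k)-E)c_{J\lambda^n\Lambda LM_L}(k)$. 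For the off-diagonal part of \eqref{eq:matrices-c} I would fix the row labels to the chosen state $(J,\lambda^n,\Lambda)$, whereupon $\delta_{\lambda^\prime\lambda}$ forces the summed column label to equal $\lambda$ and the two Kronecker deltas in the phonon number select the adjacent sectors: the descending neighbor $n-1$ carrying $\tau(\lambda^{n-1}\Lambda\Lambda^\prime)$ and the ascending neighbor $n+1$ carrying $\upsilon(\lambda^{n+1}\Lambda\Lambda^\prime)$, summed against $c_{J^\prime\lambda^{n-1}\Lambda^\prime LM_L}(k)$ and $c_{J^\prime\lambda^{n+1}\Lambda^\prime LM_L}(k)$ over $J^\prime,\Lambda^\prime$. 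Collecting these reproduces \eqref{eq:ceigen-b}.

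The one point requiring genuine care---and where I expect the bookkeeping to be most delicate---is the step function $H(n-2)$ multiplying the $\tau$-term. Lemma~\ref{lem:matrix} supplies the phonon-phonon block \eqref{eq:matrices-c} only for $n,n^\prime\in\N$, so a column with $n^\prime=n-1$ is legitimate only when $n-1\geq1$, that is $n\geq2$; the descent from $n=1$ to the empty sector $n=0$ is governed not by \eqref{eq:matrices-c} but by the explicit one-phonon coupling \eqref{eq:matrices-b}. Since $H(n-2)=0$ at $n=1$ and $H(n-2)=1$ for $n\geq2$, inserting this factor in front of $\tau(\lambda^{n-1}\Lambda\Lambda^\prime)$ suppresses the spurious $n=1$ contribution there and guarantees that the coupling of the one-phonon sector to the vacuum is counted once, through the first term of \eqref{eq:ceigen-b}. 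The ascending $\upsilon$-term needs no such guard, since $n+1\geq2$ always lies in the admissible range.
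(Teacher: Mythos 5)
Your proposal is correct and follows essentially the same route as the paper: both reduce to the fiber eigenvalue equation $(A_L(k)-E)\psi_{LM_L}(k)=0$ for $\mu$-a.e.\ $k$ and then read off \eqref{eq:ceigen} by pairing against the basis \eqref{eq:hbasis} using the matrix elements of Lemma~\ref{lem:matrix}, with self-adjointness and Remarks~\ref{rems:rems1}(2) supplying the vacuum--one-phonon coupling and the range restriction $n,n^\prime\in\N$ in \eqref{eq:matrices-c} accounting for the factor $H(n-2)$. The only (immaterial) difference is that the paper justifies the term-by-term expansion by passing to a bounded decomposable extension $B_L\supseteq A_L$ written as a strong limit of finite-rank operators, whereas you project the eigenvalue equation directly onto the basis vectors.
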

Let $\Omega_{LM_L}(k)$ be the set of solutions $(E_{LM_L},\psi_{LM_L}(k)\not\equiv0)$
satisfying \eqref{eq:psiLabc}, \eqref{eq:coord}, \eqref{eq:ceigen}.
$\Omega_{LM_L}(k)$ is defined for $\mu$-a.e. $k$ and is called
the \textit{eigenspace of $A_L$}.
\begin{rem}
Recall that $n\geq1$ in \eqref{eq:ceigen-b} is finite, \ie \eqref{eq:ceigen-b}
splits into $N$ equations for some finite $N$. When $n=N$, one puts
$c_{J^\prime\lambda^{N+1}\Lambda^\prime LM_L}(k)\equiv0$ in \eqref{eq:ceigen-b}.
\end{rem}
When $N=1$ we have Corollary~\ref{cor:ceigen}; when $N=2$ and $L=0$ we have 
Corollary~\ref{cor:ceigen2}.

Let us define
\begin{equation}
\Sigma_L^{(1)}(z,k)\dfn\sum_{J,\lambda}\frac{(2\lambda+1)
\abs{U_{\lambda L}(J\lambda L0k)}^2}{cJ(J+1)+\omega(k)-z}
\label{eq:selfk}
\end{equation}
for $\mu$-a.e. $k$ and some $z\in\C$; \ie if $O_\mu$ is the union of all $\mu$-null sets
then 
\begin{subequations}\label{eq:znull}
\begin{align}
z\in&\C\setm \sigma_1, \quad \text{where}\quad
\sigma_1\dfn\ol{\{\sigma_1(k)\vrt k\in\Rp\setm O_\mu\}}\quad\text{and} 
\label{eq:znull-1} \\
\sigma_1(k)\dfn&\ol{\{cJ(J+1)+\omega(k)\vrt J\in\N_0\}}\quad\text{for
$\mu$-a.e. $k$.} \label{eq:znull-2}
\end{align}
\end{subequations}
In particular, \eqref{eq:znull-1} holds for $z<0$. Note that $\sigma_1$ is the subset
of the essential spectrum of $A^0$. Indeed, $A^0$ is viewed as 
$A$ with $U_\lambda(k)\equiv0$ (all $\lambda$), and in this case 
Theorem~\ref{thm:eigen} says that $cJ(J+1)+n\omega(k)$ is an eigenvalue of infinite
multiplicity of $A^0(k)$ for $\mu$-a.e. $k$. Now, $z\in\sigma_1$ implies 
$k\in \{k\in\Rp\setm O_\mu\vrt z\in\sigma_1(k)\}$, which means that $z$ is such that
$\mu(\{k\in\Rp\vrt z\in\sigma_1(k)\})>0$. 

When $\omega(k)$ in \eqref{eq:znull} is replaced by
$2\omega(k)$, we write $\sigma_2$ instead of $\sigma_1$.
\begin{cor}\label{cor:ceigen}
For $N=1$, the eigenvalue $E=E_{LM_L}\in\R\setm \sigma_1$ of $A_L$ satisfies 
\begin{equation}
E=cL(L+1)-\Sigma_L^{(1)}(E,k)
\end{equation}
for $\mu$-a.e. $k$, and the coordinates satisfy  
\begin{equation}
c_{J\lambda^1\lambda LM_L}(k)=\frac{(-1)^{\lambda+1}\sqrt{2\lambda+1}
U_{\lambda L}(J\lambda L0k)}{cJ(J+1)+\omega(k)-E}c_{LM_L}(k)
\end{equation}
for $\mu$-a.e. $k$.
\end{cor}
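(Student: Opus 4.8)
The plan is to specialize the system of Theorem~\ref{thm:eigen} to $N=1$ and then eliminate the single excitation coordinate. First I would observe that when $N=1$ the only admissible value in \eqref{eq:ceigen-b} is $n=1$, for which $\Lambda=\lambda$ by \eqref{eq:cv01}. In this equation the prefactor $H(n-2)=H(-1)=0$ annihilates the $\tau$-term, while the $\upsilon$-term carries the coordinate $c_{J^\prime\lambda^2\Lambda^\prime LM_L}(k)$, which vanishes identically for $N=1$ by the remark following Theorem~\ref{thm:eigen}. Hence \eqref{eq:ceigen-b} collapses to
\[
0=(-1)^\lambda\sqrt{2\lambda+1}\,U_{\lambda L}(J\lambda L0k)c_{LM_L}(k)
+(cJ(J+1)+\omega(k)-E)c_{J\lambda^1\lambda LM_L}(k).
\]
Since $E\in\R\setm\sigma_1$, the denominator $cJ(J+1)+\omega(k)-E$ is nonzero for $\mu$-a.e.\ $k$ by \eqref{eq:znull}, so I can solve for $c_{J\lambda^1\lambda LM_L}(k)$; using $-(-1)^\lambda=(-1)^{\lambda+1}$ this is exactly the second displayed identity of the corollary.

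Next I would substitute this expression for $c_{J\lambda^1\lambda LM_L}(k)$ back into \eqref{eq:ceigen-a}. The two coupling factors combine cleanly: the phase gives $(-1)^\lambda(-1)^{\lambda+1}=-1$, the square roots give $(2\lambda+1)$, and, invoking Remark~\ref{rems:rems1}(2) in the form $U_{\lambda L}(L0J\lambda k)=\ol{U_{\lambda L}(J\lambda L0k)}$, the product $U_{\lambda L}(L0J\lambda k)U_{\lambda L}(J\lambda L0k)$ becomes $\abs{U_{\lambda L}(J\lambda L0k)}^2$. The resulting sum over $(J,\lambda)$ is then precisely $-\Sigma_L^{(1)}(E,k)c_{LM_L}(k)$ by the definition \eqref{eq:selfk}, so \eqref{eq:ceigen-a} reduces to $0=(cL(L+1)-E-\Sigma_L^{(1)}(E,k))c_{LM_L}(k)$ for $\mu$-a.e.\ $k$.

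Finally I would justify dividing by $c_{LM_L}(k)$. If $c_{LM_L}(k)$ vanished on a set of positive measure, the coordinate formula just derived would force $c_{J\lambda^1\lambda LM_L}(k)=0$ there as well, so $\psi_{LM_L}(k)\equiv0$ on that set, contradicting that $(E_{LM_L},\psi_{LM_L}(k)\not\equiv0)$ is a genuine element of the eigenspace $\Omega_{LM_L}(k)$. Dividing by $c_{LM_L}(k)$ on the set where the eigenvector is nonzero then yields the Dyson equation $E=cL(L+1)-\Sigma_L^{(1)}(E,k)$. The only step demanding care is the bookkeeping of the phase factors and the conjugation identity of Remark~\ref{rems:rems1}(2), so that the summand assembles into the manifestly real self-energy kernel of \eqref{eq:selfk}; everything else is a direct specialization of Theorem~\ref{thm:eigen}.
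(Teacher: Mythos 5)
Your proposal is correct and follows essentially the same route as the paper: set $n=1$ in \eqref{eq:ceigen-b}, use $H(-1)=0$ and the vanishing of the $n=2$ coordinates together with $\Lambda=\lambda$ to solve for $c_{J\lambda^1\lambda LM_L}(k)$, and substitute back into \eqref{eq:ceigen-a}. The paper states this in one line and leaves the phase bookkeeping and the use of Remark~\ref{rems:rems1}(2) implicit; your write-up merely makes those steps, and the nonvanishing of $c_{LM_L}(k)$, explicit.
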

Corollary~\ref{cor:ceigen} corresponds to the case when the many-body quantum state 
accounts for single bath excitations only \cite{Schm-Lem-a,Schm-Lem-b}. We see
that $E$ is of multiplicity $2L+1$. The coordinate $c_{LM_L}(k)$ is found from the
normalization condition.

When two-phonon excitations contribute notably, one solves \eqref{eq:ceigen} for
$N=2$. The simplest example is when $L=0$. To state our results we find it convenient 
to define
\begin{subequations}\label{eq:znull2}
\begin{align}
\sigma_*\dfn&\ol{\{\sigma_*(k)\vrt k\in\Rp\setm O_\mu\}}
\quad\text{with} \label{eq:znull2-1} \\
\sigma_*(k)\dfn&\ol{\bigcup_{\lambda\in\N_0}\{z\in\R\setm\sigma_2(k)\vrt
c\lambda(\lambda+1)+\omega(k)-z=\epsilon_\lambda(z,k)\}}
\end{align}
\end{subequations}
for $\mu$-a.e. $k$, where one puts
\begin{equation}
\epsilon_\lambda(z,k)\dfn2\frac{(-1)^\lambda U_\lambda(k)^2}{2\lambda+1}
\sum_\Lambda\frac{(\lambda\lambda\Vert\lambda^2\Lambda)
(\Lambda\Vert Y_\lambda\Vert\lambda)^2}{c\Lambda(\Lambda+1)+2\omega(k)-z}
\label{eq:vareps}
\end{equation}
for $\mu$-a.e. $k$ and $z\in\C\setm\sigma_2$. 

Let us further define
\begin{subequations}\label{eq:selfk2}
\begin{equation}
\Sigma_0^{(1,2)}(z,k)\dfn
\sum_\lambda 
\frac{U_\lambda(k)^2\abs{(\lambda\Vert Y_\lambda\Vert0)}^2}{
c\lambda(\lambda+1)+\omega(k)-z-\epsilon_\lambda(z,k)},
\quad z\in\C\setm\sigma_*
\label{eq:selfk2-1}
\end{equation}
and
\begin{align}
\Sigma_0^{(2)}(z,k)\dfn&\sum_\lambda
\frac{U_\lambda(k)^2\epsilon_\lambda(z,k)
\abs{(\lambda\Vert Y_\lambda\Vert0)}^2}{(c\lambda(\lambda+1)+\omega(k)-z)
(c\lambda(\lambda+1)+\omega(k)-z-\epsilon_\lambda(z,k))}, \nonumber \\
&\qquad\qquad\qquad\qquad\qquad\qquad\qquad\quad\quad
z\in\C\setm(\sigma_1\mcup\sigma_2\mcup\sigma_*)
\label{eq:selfk2-2}
\end{align}
\end{subequations}
for $\mu$-a.e. $k$.
\begin{rem}
Note that, when the reduced matrix element $(\Lambda\Vert Y_\lambda\Vert\lambda)$
is as in \eqref{eq:Ysubm}, \eqref{eq:vareps} formally coincides with 
\eqref{eq:Self-N2L0b}. We mean "formally", since the exact agreement between 
\eqref{eq:vareps} and \eqref{eq:Self-N2L0b} requires an additional assumption imposed
on the probability measure $\mu$. The same applies to \eqref{eq:selfk2-1} versus the
integrand in \eqref{eq:Self-N2L0}.
\end{rem}
With the above definitions, our result for $N=2$ is the following.
\begin{cor}\label{cor:ceigen2}
For $N=2$, the eigenvalue $E=E_{00}$ of $A_0$ satisfies 
\begin{subequations}\label{eq:E02}
\begin{align}
E=&-\Sigma_0^{(1,2)}(E,k), \quad E\in\R\setm\sigma_* \label{eq:E02-1} \\
=&-\Sigma_0^{(1)}(E,k)-\Sigma_0^{(2)}(E,k), \quad
E\in\R\setm(\sigma_1\mcup\sigma_2\mcup\sigma_*) \label{eq:E02-2}
\end{align}
\end{subequations}
for $\mu$-a.e. $k$, and the coordinates satisfy
\begin{subequations}\label{eq:c02}
\begin{align}
c_{\lambda\lambda^1\lambda00}(k)=&\frac{(-1)^{\lambda+1}U_\lambda(k)
(\lambda\Vert Y_\lambda\Vert0)}{
c\lambda(\lambda+1)+\omega(k)-E-\epsilon_\lambda(E,k)}c_{00}(k), \\
c_{\Lambda\lambda^2\Lambda00}(k)=&
\frac{(-1)^{\lambda+1}\sqrt{2}U_\lambda(k)(\lambda\lambda\Vert\lambda^2\Lambda)
(\Lambda\Vert Y_\lambda\Vert\lambda)}{\sqrt{2\lambda+1}
(c\Lambda(\Lambda+1)+2\omega(k)-E)}c_{\lambda\lambda^1\lambda00}(k)
\end{align}
\end{subequations}
for $E\in\R\setm(\sigma_2\mcup\sigma_*)$ and $\mu$-a.e. $k$.
\end{cor}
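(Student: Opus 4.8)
The plan is to specialize the system \eqref{eq:ceigen} of Theorem~\ref{thm:eigen} to $N=2$ and $L=M_L=0$, and then eliminate the two-phonon amplitudes from the bottom up. Setting $L=0$ forces $J=\Lambda$ in every block; a single phonon carries $\Lambda=\lambda$, a phonon pair carries even $\Lambda\in\{0,2,\ldots,2\lambda\}$, and since the matrix element \eqref{eq:matrices-c} is diagonal in $\lambda$, the phonon angular momentum $\lambda$ is conserved. Hence the only surviving coordinates are $c_{00}(k)$, $c_{\lambda\lambda^1\lambda00}(k)$, and $c_{\Lambda\lambda^2\Lambda00}(k)$, and \eqref{eq:ceigen} collapses to three coupled relations: the $n=0$ equation \eqref{eq:ceigen-a} (with $cL(L+1)=0$), the $n=1$ equation, and the $n=2$ equation, where in the last the $n=3$ term is dropped by the Remark following Theorem~\ref{thm:eigen}.

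First I would solve the $n=2$ equation. There $H(0)=1$ but the outgoing $\upsilon(\lambda^3\cdots)$ term vanishes, and the SCFP forces $\Lambda'=\lambda$ in $\tau(\lambda^1\Lambda\Lambda')$, so it is a single linear relation expressing $c_{\Lambda\lambda^2\Lambda00}(k)$ through $c_{\lambda\lambda^1\lambda00}(k)$. Evaluating $U_{\lambda0}(\Lambda\Lambda\lambda\lambda k)$ from \eqref{eq:UUU} with the $6j$-symbol $\sixj{\Lambda}{\Lambda}{0}{\lambda}{\lambda}{\lambda}=(-1)^\Lambda/\sqrt{(2\Lambda+1)(2\lambda+1)}$ (a $6j$ with one vanishing entry), together with $\tau(\lambda^1\Lambda\lambda)=(-1)^\Lambda\sqrt{2(2\Lambda+1)}\,(\lambda\lambda\Vert\lambda^2\Lambda)$ from \eqref{eq:tauupsilon} and \eqref{eq:cm1m2-b}, yields the second identity of \eqref{eq:c02}, valid whenever $c\Lambda(\Lambda+1)+2\omega(k)-E\neq0$, i.e. for $E\in\R\setm\sigma_2$.

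Next I would substitute this back into the $n=1$ equation. The feedback from the two-phonon sector is $\sum_\Lambda U_{\lambda0}(\lambda\lambda\Lambda\Lambda k)\upsilon(\lambda^2\lambda\Lambda)c_{\Lambda\lambda^2\Lambda00}(k)$; inserting the $n=2$ solution turns this into a scalar multiple of $c_{\lambda\lambda^1\lambda00}(k)$, and with $\upsilon=\ol\tau$, Remarks~\ref{rems:rems1}, and $(\lambda\lambda\Vert\lambda^2\Lambda)^2=(\lambda\lambda\Vert\lambda^2\Lambda)$, the sum over even $\Lambda$ assembles into $\epsilon_\lambda(E,k)$ of \eqref{eq:vareps}. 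The $n=1$ equation then reads $0=(-1)^\lambda U_\lambda(k)(\lambda\Vert Y_\lambda\Vert0)c_{00}(k)+(c\lambda(\lambda+1)+\omega(k)-E-\epsilon_\lambda(E,k))c_{\lambda\lambda^1\lambda00}(k)$, whose solution is the first identity of \eqref{eq:c02}, valid for $E\in\R\setm(\sigma_2\mcup\sigma_*)$ so that the shifted denominator is nonzero. Feeding this into \eqref{eq:ceigen-a}, and simplifying $(2\lambda+1)\abs{U_{\lambda0}(\lambda\lambda00k)}^2=U_\lambda(k)^2\abs{(\lambda\Vert Y_\lambda\Vert0)}^2$ by the analogous $6j$ evaluation, produces the self-consistency relation $E=-\Sigma_0^{(1,2)}(E,k)$, i.e. \eqref{eq:E02-1}. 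Finally, the elementary partial fraction $\frac{1}{B-\epsilon}=\frac1B+\frac{\epsilon}{B(B-\epsilon)}$ with $B=c\lambda(\lambda+1)+\omega(k)-E$ and $\epsilon=\epsilon_\lambda(E,k)$ splits $\Sigma_0^{(1,2)}$ into $\Sigma_0^{(1)}+\Sigma_0^{(2)}$ of \eqref{eq:selfk2}, giving \eqref{eq:E02-2}.

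I expect the main obstacle to be the angular-momentum bookkeeping rather than the linear algebra. One must evaluate the $6j$-symbols with a zero argument, carefully track the phases of the reduced matrix elements through Remarks~\ref{rems:rems1} and the $(-1)^\lambda$ attached to the $b_\lambda(k)^\sim$ channel in \eqref{eq:matrices-c}, and use the explicit SCFP and $\tau,\upsilon$ values, so that the two-phonon feedback condenses into \emph{exactly} the form \eqref{eq:vareps}, including its overall sign, rather than a superficially different expression. A secondary, purely bookkeeping point is to keep the excluded spectral sets $\sigma_1,\sigma_2,\sigma_*$ straight, since these are precisely the conditions under which each diagonal block is invertible and the finite elimination is justified for $\mu$-a.e. $k$.
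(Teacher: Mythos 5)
Your proposal is correct and follows essentially the same route as the paper: the paper's proof likewise specializes \eqref{eq:ceigen} to $L=0$, writes the $n=1$ and $n=2$ equations as the system \eqref{eq:cn3}, sets $c_{\Lambda'\lambda^3\Lambda'00}(k)\equiv0$, eliminates the two-phonon amplitude to obtain \eqref{eq:c02} and then \eqref{eq:E02} from \eqref{eq:ceigen-a}, with the split $\Sigma_0^{(1,2)}=\Sigma_0^{(1)}+\Sigma_0^{(2)}$ coming from exactly the partial-fraction identity you cite. Your explicit $6j$-symbol and SCFP evaluations are consistent with the reduced matrix elements already packaged in Lemma~\ref{lem:matrix} and \eqref{eq:tauupsilon}, so no new idea is needed beyond what you describe.
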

We close the present section by providing the proofs of the above statements.
\begin{proof}[Proof of Theorem~\ref{thm:eigen}]
Since $A_L$ is densely defined in $\H_L$, there exists a bounded decomposable
$B_L\in\B(\H_L)$ such that $A_L\subseteq B_L$. Hence we have $A_L(k)\subseteq B_L(k)$ for
$\mu$-a.e. $k$ (recall \eg \cite[Proposition~12.1.8(i)]{Schmudgen-2}), and the matrix
elements of $B_L$ with respect to the field \eqref{eq:hbasis} of orthonormal bases of 
$\H_L(k)$ are found from Lemma~\ref{lem:matrix}. Since $\B(\H_L)$ 
is the strong closure of the finite rank operators, we have
\[B_L(k)=\sum_{i,j}[B_L(k)]_{ij}\braket{f_j(k),\cdot}_{\H(k)}f_i(k)\]
($\mu$-a.e. $k$), where $[B_L(k)]_{ij}$ is the matrix element of $B_L(k)$ with respect to
the orthonormal basis $(f_i(k))$ of $\H_L(k)$; the infinite sum over indices $i,j$ is 
a strong limit of partial sums. Thus, for $(f_i)$ as in \eqref{eq:hbasis} 
and for $\psi_{LM_L}(k)\in\dom A_L(k)$
\begin{align}
A_L(k)\psi_{LM_L}(k)=&B_L(k)\psi_{LM_L}(k) \nonumber \\
=&\{[A(k)]_{LM_L,LM_L}\braket{w_{LM_L}\ot\e(k),\psi_{LM_L}(k)}_{\H(k)} \nonumber \\
&+\sum_{J,\Gamma}
[A(k)]_{LM_L,J\Gamma LM_L}\braket{h_{J\Gamma LM_L}(k),\psi_{LM_L}(k)}_{\H(k)}\} 
\nonumber \\
&\cdot w_{LM_L}\ot\e(k) \nonumber \\
&+\sum_{J,\Gamma}\{
[A(k)]_{J\Gamma LM_L,LM_L}\braket{w_{LM_L}\ot\e(k),\psi_{LM_L}(k)}_{\H(k)} 
\nonumber \\
&+\sum_{J^\prime,\Gamma^\prime}
[A(k)]_{J\Gamma LM_L,J^\prime\Gamma^\prime LM_L}
\braket{h_{J^\prime\Gamma^\prime LM_L}(k),\psi_{LM_L}(k)}_{\H(k)} \} \nonumber \\
&\cdot h_{J\Gamma LM_L}(k).
\label{eq:mx}
\end{align}
When $A_L(k)\psi_{LM_L}(k)=E_{LM_L}\psi_{LM_L}(k)$, using \eqref{eq:psiLabc} we thus have 
\eqref{eq:ceigen}.
\end{proof}
\begin{proof}[Proof of Corollary~\ref{cor:ceigen}]
Put $n=1$ in \eqref{eq:ceigen-b} and apply $c_{J\lambda^n\Lambda LM_L}(k)\equiv0$
for $n\geq2$ and $c_{J\lambda^1\Lambda LM_L}(k)=\delta_{\Lambda \lambda}
c_{J\lambda^1\lambda LM_L}(k)$ by \eqref{eq:cn}.
\end{proof}
\begin{proof}[Proof of Corollary~\ref{cor:ceigen2}]
First we note that, for $\mu$-a.e. $k$ and $z\in\C\setm(\sigma_1\mcup\sigma_2
\mcup\sigma_*)$, the relation
\[\Sigma_0^{(1,2)}(z,k)=\Sigma_0^{(1)}(z,k)+\Sigma_0^{(2)}(z,k)\]
follows directly from \eqref{eq:selfk} with $L=0$ and \eqref{eq:selfk2}.

When $L=0$, using $c_{J\lambda^1\Lambda LM_L}(k)=\delta_{\Lambda \lambda}
c_{J\lambda^1\lambda LM_L}(k)$ by \eqref{eq:cn},
relation~\eqref{eq:ceigen-b} with $n=1$ and $n=2$ reads
\begin{subequations}\label{eq:cn3}
\begin{align}
c_{\lambda\lambda^1\lambda00}(k)=&
\frac{(-1)^{\lambda+1}U_\lambda(k)(\lambda\Vert Y_\lambda\Vert0)}{
c\lambda(\lambda+1)+\omega(k)-E}c_{00}(k) \nonumber \\
&-\frac{\sqrt{2}U_\lambda(k)
\sum_\Lambda
(\lambda\lambda\Vert\lambda^2\Lambda)(\Lambda\Vert Y_\lambda\Vert \lambda)
}{\sqrt{2\lambda+1}
(c\lambda(\lambda+1)+\omega(k)-E)}
c_{\Lambda\lambda^2\Lambda00}(k) 
\intertext{for $E\in\R\setm\sigma_1$ and $\mu$-a.e. $k$, and}
c_{\Lambda\lambda^2\Lambda00}(k)=&
\frac{(-1)^{\lambda+1}\sqrt{2}U_\lambda(k)
(\lambda\lambda\Vert\lambda^2\Lambda)(\Lambda\Vert Y_\lambda\Vert \lambda)}{
\sqrt{2\lambda+1}(c\Lambda(\Lambda+1)+2\omega(k)-E)}c_{\lambda\lambda^1\lambda00}(k)
\nonumber \\
&+\frac{(-1)^{\lambda+\Lambda+1}U_\lambda(k)}{\sqrt{2\Lambda+1}
(c\Lambda(\Lambda+1)+2\omega(k)-E)} \nonumber \\
&\cdot\sum_{\Lambda^\prime}\frac{(\Lambda\Vert Y_\lambda\Vert\Lambda^\prime)
\upsilon(\lambda^3\Lambda\Lambda^\prime)}{2\Lambda^\prime+1}
c_{\Lambda^\prime\lambda^3\Lambda^\prime00}(k) 
\end{align}
\end{subequations}
for $E\in\R\setm\sigma_2$ and $\mu$-a.e. $k$. Since $N=2$, putting 
$c_{\Lambda^\prime\lambda^3\Lambda^\prime00}(k)\equiv0$ we deduce \eqref{eq:c02}.
Relations in \eqref{eq:E02} then follow from \eqref{eq:ceigen-a} and \eqref{eq:cn3},
since we have for $L=0$
\[Ec_{00}(k)=\sum_\lambda U_\lambda(k)(0\Vert Y_\lambda\Vert\lambda)
c_{\lambda\lambda^1\lambda00}(k)\]
for $\mu$-a.e. $k$.
\end{proof}
\section{Infimum of the spectrum with a.c. measure}\label{sec:num}
Let $\Theta_L$ be the numerical range of $A_L$. Then, $\Theta_L$ is the set 
\begin{equation}
\Theta_L\dfn\{\E[\psi_{LM_L}]\vrt \psi_{LM_L}\in\dom A_L\setm\{0\}\}
\label{eq:numL}
\end{equation}
of the real numbers
\begin{equation}
\E[\psi_{LM_L}]\dfn\norm{\psi_{LM_L}}_\H^{-2}
\braket{\psi_{LM_L},A_L\psi_{LM_L}}_\H, \quad \psi_{LM_L}\neq0.
\label{eq:Efunc}
\end{equation}
Note that $\psi_{LM_L}\neq0$ means that $\psi_{LM_L}(k)\neq0$ for $\mu$-a.e. $k$.
One regards $\E[\cdot]$ as a functional $\dom A_L\lto\R$ whose value is given by
\eqref{eq:Efunc}. We put
\begin{equation}
\E=\E_{LM_L}\dfn\inf\ol{\Theta_L}
\label{eq:infnumL}
\end{equation}
provided that $\E_{LM_L}>-\infty$ exists. By general principles, 
the spectrum $\sigma(A_L)$ of $A_L$ is contained in the closure $\ol{\Theta_L}$ and
\begin{equation}
\inf\sigma(A_L)=\E_{LM_L}.
\label{eq:infAL}
\end{equation}

Let us further define the functional $E[\cdot]\co\dom A_L\lto\R$ with the value
\begin{equation}
E[\psi_{LM_L}]\dfn\norm{\psi_{LM_L}}_\H^{-2}\int E_{LM_L}
\norm{\psi_{LM_L}(k)}_{\H(k)}^2\mu(dk)
\label{eq:thetaL-1}
\end{equation}
for $(E_{LM_L},\psi_{LM_L}(k))$ in the eigenspace $\Omega_{LM_L}(k)$. We denote by 
$\mathfrak{E}_L$ the set of all such \eqref{eq:thetaL-1}.
\begin{lem}\label{lem:infnumL}
$\E_{LM_L}=\inf\ol{\mathfrak{E}_L}$.
\end{lem}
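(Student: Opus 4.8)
The plan is to prove the two inequalities $\inf\ol{\mathfrak{E}_L}\ge\E_{LM_L}$ and $\inf\ol{\mathfrak{E}_L}\le\E_{LM_L}$ separately, the first being routine and the second resting on a fibrewise spectral analysis. For the routine inequality I would observe that each value in $\mathfrak{E}_L$ is already a value of the Rayleigh functional \eqref{eq:Efunc}. Indeed, a measurable field $k\lmap(E_{LM_L},\psi_{LM_L}(k))\in\Omega_{LM_L}(k)$ of fibrewise eigenpairs satisfies $A_L(k)\psi_{LM_L}(k)=E_{LM_L}\psi_{LM_L}(k)$ for $\mu$-a.e.\ $k$ (with the fibre eigenvalue $E_{LM_L}=E_{LM_L}(k)$ possibly depending on $k$), so $\braket{\psi_{LM_L}(k),A_L(k)\psi_{LM_L}(k)}_{\H(k)}=E_{LM_L}\norm{\psi_{LM_L}(k)}_{\H(k)}^2$; integrating in $k$ and dividing by $\norm{\psi_{LM_L}}_\H^2$ gives $E[\psi_{LM_L}]=\E[\psi_{LM_L}]$. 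Hence $\mathfrak{E}_L\subseteq\Theta_L$, so $\ol{\mathfrak{E}_L}\subseteq\ol{\Theta_L}$ and $\inf\ol{\mathfrak{E}_L}\ge\inf\ol{\Theta_L}=\E_{LM_L}$ by \eqref{eq:infnumL}.

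For the reverse inequality I would first invoke decomposability: writing $m(k)\dfn\inf\sigma(A_L(k))$, the elementary equivalence $A_L\ge t\iff A_L(k)\ge t$ for $\mu$-a.e.\ $k$, combined with \eqref{eq:infAL}, yields $\E_{LM_L}=\operatorname{ess\,inf}_k m(k)$. The decisive point is then that $m(k)$ is itself a fibre eigenvalue, i.e.\ $(m(k),\cdot)\in\Omega_{LM_L}(k)$ for $\mu$-a.e.\ $k$. I would extract this from the block structure of Lemma~\ref{lem:matrix}: since the phonon number $n$ is bounded by the finite $N$ and, for each $\lambda$, only finitely many admissible $(J,\Lambda)$ occur at each level $n\le N$, the interaction $W(k)$ couples the single $n=0$ state to the $\lambda$-towers in a banded fashion, each tower carrying diagonal energies $cJ(J+1)+n\omega(k)$ whose least value grows like $c\lambda^2$ as $\lambda\to\infty$ while the couplings $U_\lambda(k)$ decay. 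By the min--max principle the low-lying spectrum is thus confined to finitely many towers, a finite-dimensional reduction on which $\sigma(A_L(k))$ is discrete and $m(k)$ is attained by an eigenvector whose coordinates solve \eqref{eq:ceigen} and obey \eqref{eq:coord}.

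Granting this, I would finish by a concentration argument. Fix $\epsilon>0$; by the ess-inf identity the set $S_\epsilon\dfn\{k\mid m(k)<\E_{LM_L}+\epsilon\}$ has positive $\mu$-measure. Choosing a measurable field $k\lmap\phi_0(k)$ of normalized ground eigenvectors of $A_L(k)$ and setting $\psi_{LM_L}(k)=\alpha(k)\phi_0(k)$ with a strictly positive weight $\alpha$ mass-concentrated on $S_\epsilon$, one obtains a fibrewise eigenpair $(m(k),\psi_{LM_L}(k))$ with $\psi_{LM_L}\in\dom A_L\setm\{0\}$ and $E[\psi_{LM_L}]=\big(\int m(k)\alpha(k)^2\mu(dk)\big)/\big(\int\alpha(k)^2\mu(dk)\big)$. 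As the weight concentrates on $S_\epsilon$ this ratio approaches the $S_\epsilon$-average of $m$, which is $<\E_{LM_L}+\epsilon$; since every such value also lies in $[\E_{LM_L},\infty)$ by the first part, letting $\epsilon\downarrow0$ gives $\inf\ol{\mathfrak{E}_L}\le\E_{LM_L}$, and the lemma follows.

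I expect the middle step to be the main obstacle: verifying that $m(k)$ is genuinely attained as a fibre eigenvalue solving \eqref{eq:ceigen} --- rather than sitting at the bottom of a continuous or purely essential spectrum with no associated eigenvector --- together with the measurable selection of the ground eigenvectors $\phi_0(k)$. Both hinge on the finiteness built into the tower structure of Lemma~\ref{lem:matrix} (the bound $n\le N$ and the finite per-level multiplicities), which is precisely what keeps the infimum within reach of the eigenvalue system of Theorem~\ref{thm:eigen}.
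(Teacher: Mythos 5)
Your first inequality is correct, and it is essentially the observation (implicit in the paper) that $E[\psi_{LM_L}]=\E[\psi_{LM_L}]$ on fields of fibrewise eigenpairs, so that $\mathfrak{E}_L\subseteq\Theta_L$ and $\inf\ol{\mathfrak{E}_L}\geq\inf\ol{\Theta_L}$. For the reverse inequality you take a genuinely different route from the paper, and that is where the argument breaks down. The paper's proof is a direct variational computation: using \eqref{eq:mx}, the stationarity conditions of the Rayleigh functional $\E[\cdot]$ with respect to the coordinates $c_{LM_L}(k)$ and $c_{J\Gamma LM_L}(k)$ are exactly \eqref{eq:ceigen-a} and \eqref{eq:ceigen-b} with multiplier $E_{LM_L}=\E[\psi_{LM_L}]$ for $\mu$-a.e.\ $k$, so the critical values of $\E[\cdot]$ are precisely the elements of $\mathfrak{E}_L$ and no spectral analysis of the fibre operators $A_L(k)$ is needed.

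Your route instead passes through $\E_{LM_L}=\operatorname{ess\,inf}_k m(k)$ with $m(k)\dfn\inf\sigma(A_L(k))$ and requires that $m(k)$ be \emph{attained} as a fibre eigenvalue. The finite-dimensional reduction you invoke to secure this does not exist: the premise that only finitely many admissible $(J,\Lambda)$ occur at each level $n\leq N$ is false, since for each $n\geq1$ the index $\lambda$ runs over all of $\N_0$, and the unperturbed energies $cJ(J+1)+n\omega(k)$ do not depend on $\lambda$ at all. For instance, at $n=2$, $J=L$, $\Lambda=0$ one has infinitely many basis vectors $h_{L\lambda^20LM_L}(k)$, $\lambda\in\N_0$, all with the same diagonal energy $cL(L+1)+2\omega(k)$; this is exactly why the paper notes, after \eqref{eq:znull}, that $cJ(J+1)+n\omega(k)$ is an eigenvalue of infinite multiplicity of $A^0(k)$ and that $\sigma_1$ lies in the essential spectrum of $A^0$. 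Hence $\sigma(A_L(k))$ is not discrete, $W(k)$ is bounded but not compact under Hypothesis~\ref{hypo:1} alone, and there is no a priori reason why $m(k)$ should be an eigenvalue solving \eqref{eq:ceigen} rather than the bottom of essential spectrum with no associated eigenvector; the measurable selection $k\lmap\phi_0(k)$ inherits the same difficulty. You correctly flagged this as the main obstacle, but the proposed tower/min--max argument does not close it. To repair the step you would either have to restrict to the case $U_\lambda\equiv0$ for all large $\lambda$ (where $A_L(k)$ decouples into a finite interacting block plus a diagonal remainder whose infimum is an infinite-multiplicity eigenvalue), or abandon the fibrewise spectral analysis and argue, as the paper does, through the Euler--Lagrange equations of $\E[\cdot]$.
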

\begin{proof}
Let $\psi_{LM_L}\in\dom A_L$. Using \eqref{eq:mx}, the minimization of
$\E[\psi_{LM_L}]$ with respect to $c_{LM_L}(k)$ and $c_{J\Gamma LM_L}(k)$ leads to 
\eqref{eq:ceigen-a} and \eqref{eq:ceigen-b}, respectively, with 
\[E_{LM_L}=\E[\psi_{LM_L}]\]
for $\mu$-a.e. $k$. Using \eqref{eq:numL}, 
\eqref{eq:infnumL}, and \eqref{eq:thetaL-1}, we thus have the result as claimed.
\end{proof}
\begin{rem}
For $\psi_{LM_L}$ as in \eqref{eq:thetaL-1}, Theorem~\ref{thm:eigen} says that 
$\psi_{LM_L}\neq0$ implies $c_{LM_L}(k)\neq0$ for $\mu$-a.e. $k$.
\end{rem}
The next Theorem~\ref{thm:infnumL} allows one to estimate $\E$ for a.c. $\mu$.
\begin{thm}\label{thm:infnumL}
Assume that $\mu\ll dk$, with the Radon--Nikodym derivative supported on $\ol{\Rp}$. 
Then
\begin{subequations}\label{eq:infeigen}
\begin{align}
\E_{LM_L}=&\min\{cL(L+1)-\Sigma_{LM_L},cL(L+1)\}, \quad \text{where} \\
\Sigma_{LM_L}\dfn&\sup\int\sum_{J,\lambda}(-1)^{\lambda+1}\sqrt{2\lambda+1}
U_{\lambda L}(L0J\lambda k)\frac{c_{J\lambda^1\lambda LM_L}(k)}{c_{LM_L}(k)}dk.
\end{align}
\end{subequations}
The supremum is taken over the points of the eigenspace $\Omega_{LM_L}(k)$ for a.e. $k$,
and the coordinates satisfy in addition
\begin{equation}
1=\abs{c_{LM_L}(k)}^2+\sum_{J,\Gamma}\abs{c_{J\Gamma LM_L}(k)}^2\quad\text{for a.e.
$k$.}
\label{eq:infcnorm}
\end{equation}
\end{thm}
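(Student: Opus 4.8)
The plan is to build directly on Lemma~\ref{lem:infnumL}, which already recasts $\E_{LM_L}=\inf\ol{\Theta_L}=\inf\sigma(A_L)$ (see \eqref{eq:infAL}) as the minimization of the Rayleigh functional \eqref{eq:Efunc}, and whose proof pins the stationary points to the fiberwise relations \eqref{eq:ceigen}. I would therefore carry out the minimization fiber by fiber: for $\mu$-a.e.\ fixed $k$, minimize $\braket{\psi_{LM_L}(k),A_L(k)\psi_{LM_L}(k)}_{\H(k)}$ over unit vectors of $\H_L(k)$, reading the matrix entries from Lemma~\ref{lem:matrix}. Imposing the fiberwise normalization \eqref{eq:infcnorm} forces $\norm{\psi_{LM_L}(k)}_{\H(k)}=1$ for a.e.\ $k$; since $\mu$ is a probability measure this is consistent with $\norm{\psi_{LM_L}}_\H=1$ and merely fixes the representative.

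The heart of the fiber computation is the elimination of the excited coordinates. At a point of the eigenspace with $c_{LM_L}(k)\neq0$ (which holds at every minimizer by the Remark after Lemma~\ref{lem:infnumL}), relation \eqref{eq:ceigen-a} gives
\[E_{LM_L}(k)=cL(L+1)-\sum_{J,\lambda}(-1)^{\lambda+1}\sqrt{2\lambda+1}\,U_{\lambda L}(L0J\lambda k)\,\frac{c_{J\lambda^1\lambda LM_L}(k)}{c_{LM_L}(k)}.\]
Substituting the higher relations \eqref{eq:ceigen-b} recursively, exactly as in Corollaries~\ref{cor:ceigen} and \ref{cor:ceigen2}, expresses each ratio through $E_{LM_L}(k)$ and $k$ alone and turns the sum into a fiber self-energy of the form \eqref{eq:selfk}. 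Below the excited thresholds $cJ(J+1)+n\omega(k)$ every denominator is positive, so the lowest root satisfies $E_{\min}(k)=cL(L+1)-\Sigma_{LM_L}(k)$ with $\Sigma_{LM_L}(k)\geq0$; if instead that self-energy is non-positive at $cL(L+1)$, the form is minimized by the phonon-free vector ($c_{J\Gamma LM_L}(k)=0$) and $E_{\min}(k)=cL(L+1)$. The two alternatives assemble into the $\min\{\,\cdot\,,\,\cdot\,\}$ of \eqref{eq:infeigen}.

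To pass from the fibers to $A_L$ and to obtain $\phi$-independence, I would invoke the standard direct-integral identity that $\inf\sigma(A_L)$ equals the $\mu$-essential infimum of $k\lmap\inf\sigma(A_L(k))=E_{\min}(k)$. The hypothesis $\supp\phi=\ol{\Rp}$ is then decisive: since an absolutely continuous measure of full support charges every nonempty relatively open subset of $\ol{\Rp}$, the essential infimum of the map $E_{\min}$ coincides with its topological infimum over $\ol{\Rp}$, a quantity insensitive to the particular density $\phi$. Equivalently, for each $k_0$ one concentrates a normalized fiber minimizer in a $\mu$-positive neighbourhood of $k_0$ so that \eqref{eq:thetaL-1} tends to $E_{\min}(k_0)$; optimizing over $k_0$ and over the fiber eigenpairs produces $\Sigma_{LM_L}=\sup\int(\cdots)\,dk$ and hence \eqref{eq:infeigen}.

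The hard part will be twofold. First, the elimination is clean only while the self-consistent denominators stay away from zero, so I must guarantee that $E_{\min}(k)$ lies strictly below the fiber spectrum $\sigma_1(k)$ (and, for $N\geq2$, below $\sigma_2(k)$ and $\sigma_*(k)$); where this fails the stationarity argument of Lemma~\ref{lem:infnumL} degenerates and one must fall back on the phonon-free comparison vector to secure the $cL(L+1)$ branch. Second, the passage from the $\mu$-essential infimum to the topological infimum needs enough regularity of $k\lmap E_{\min}(k)$ (continuity, or at least openness of its sublevel sets) for the full-support argument to bite, and the final identification of the pointwise optimum with the integral expression for $\Sigma_{LM_L}$ is only \emph{formal}, in the sense flagged after Corollary~\ref{cor:ceigen2}: matching the fiber self-energy \eqref{eq:selfk} with the physical self-energy \eqref{eq:Self-SL} requires the extra hypothesis on $\mu$ noted there.
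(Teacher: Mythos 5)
Your proposal shares the paper's starting point (Lemma~\ref{lem:infnumL} and the elimination of the excited coordinates via \eqref{eq:ceigen-a}, \eqref{eq:ceigen-b}), but its central mechanism is different and, as written, it does not reach the formula that is actually claimed. You reduce everything to the fiberwise ground-state energy $E_{\min}(k)=\inf\sigma(A_L(k))$ and then invoke $\inf\sigma(A_L)=\mu\text{-}\operatorname*{ess\,inf}_k E_{\min}(k)$, upgraded to a topological infimum by full support of $\phi$. That route terminates in a \emph{pointwise} supremum over $k$ of the fiber self-energy, i.e.\ $\E=cL(L+1)-\sup_k\Sigma(k)$, whereas \eqref{eq:infeigen} asserts $\Sigma_{LM_L}=\sup\int(\cdots)\,dk$, a supremum of a Lebesgue \emph{integral} over $\Rp$. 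Your closing concession that "the final identification of the pointwise optimum with the integral expression is only formal" gives away precisely the content of the theorem: the integral expression, independent of $\phi$, is what has to be proved, and concentrating fiber minimizers near a point $k_0$ produces $\inf_{k_0}E_{\min}(k_0)$, not $\sup\int(\cdots)\,dk$. The concentration argument is the right tool for the essential-infimum identity but is a non sequitur for the claimed formula.

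The idea you are missing is the paper's substitution $\Psi_{LM_L}(k)=\sqrt{\phi(k)}\,\psi_{LM_L}(k)$, which converts the $\mu$-integral \eqref{eq:thetaL-1} into the Lebesgue integral \eqref{eq:thetaL-2} and, together with the fiber normalization \eqref{eq:infcnorm} and $\norm{\psi_{LM_L}}_\H=1$, yields the sandwich $0<\norm{\Psi_{LM_L}(k)}_{\H(k)}^2<1$ a.e. Writing $E[\Psi_{LM_L}]=cL(L+1)-\int\Sigma[\Psi_{LM_L}(k)]\norm{\Psi_{LM_L}(k)}_{\H(k)}^2dk$ and comparing it with the \emph{unweighted} quantity $E_*[\Psi_{LM_L}]=cL(L+1)-\int\Sigma[\Psi_{LM_L}(k)]dk$, the paper shows that both $cL(L+1)$ and $E_*[\Psi_{LM_L}]$ lie in $\ol{\mathfrak{E}_L}$, and that $E[\Psi_{LM_L}]>E_*[\Psi_{LM_L}]$ when the fiber self-energy is positive a.e.\ while $cL(L+1)\leq E[\Psi_{LM_L}]\leq E_*[\Psi_{LM_L}]$ when it is nonpositive a.e. This dichotomy is what produces $\min\{cL(L+1)-\sup\Sigma_*,\,cL(L+1)\}$ with $\Sigma_*$ the unweighted integral, and it is where the hypothesis $\supp\phi=\ol{\Rp}$ (hence $\phi>0$ a.e., so $\mu$-null and Lebesgue-null sets coincide and the weight is strictly between $0$ and $1$) actually enters. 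Without this comparison your argument cannot close the gap between the essential infimum of fiber spectra and the integral expression for $\Sigma_{LM_L}$.
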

The following Corollaries~\ref{cor:infnumL} and \ref{cor:infnumL2} are useful for
evaluating the contribution of single-phonon and two-phonon (for $L=0$) excitations.
\begin{cor}\label{cor:infnumL}
Assume that $\mu\ll dk$, with the Radon--Nikodym derivative supported on $\ol{\Rp}$. 
Then, $A_L\geq\E$, where the infimum $\E=\E_{LM_L}$ satisfies 
\begin{subequations}\label{eq:infeigen2}
\begin{align}
\E=&\min\{cL(L+1)-\Sigma_{L}^{(1)}(\E)-\Sigma_{LM_L}^\prime,cL(L+1)\}, 
\;\text{with} \\
\Sigma_{L}^{(1)}(\E)\dfn&\pv\Sigma_{L}^{(1)}(\E,k)dk, \quad
\text{$\Sigma_{L}^{(1)}(\cdot,k)$ as in \eqref{eq:selfk},} \label{eq:infeigen2-1} 
\intertext{provided that the principal value exists, and}
\Sigma_{LM_L}^\prime\dfn&\sup\pv\sum_{\lambda,\Lambda}
(-1)^\lambda\sqrt{2(2\lambda+1)(2\Lambda+1)}(\lambda\lambda\Vert\lambda^2\Lambda) \nonumber \\
&\cdot
\sum_J\frac{c_{J\lambda^2\Lambda LM_L}(k)}{c_{LM_L}(k)}\sum_{J^\prime}
\frac{U_{\lambda L}(L0J^\prime\lambda k)U_{\lambda L}(J^\prime\lambda J\Lambda k)}{
cJ^\prime(J^\prime+1)+\omega(k)-E}dk \label{eq:infeigen2-2}
\end{align}
\end{subequations}
where $E=E_{LM_L}$ is the eigenvalue of $A_L$.
The supremum is taken over the points of the eigenspace $\Omega_{LM_L}(k)$ for a.e. $k$,
with the coordinates satisfying in addition \eqref{eq:infcnorm}.
\end{cor}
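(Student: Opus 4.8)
The plan is to obtain Corollary~\ref{cor:infnumL} directly from Theorem~\ref{thm:infnumL}, the only real work being to expand $\Sigma_{LM_L}$ into $\Sigma_L^{(1)}(\E)+\Sigma_{LM_L}'$; the operator bound $A_L\geq\E$ is already \eqref{eq:infAL}. The starting point is that the single-phonon coordinate $c_{J\lambda^1\lambda LM_L}(k)$ entering $\Sigma_{LM_L}$ is not free: it is pinned by \eqref{eq:ceigen-b} at $n=1$. Since $H(n-2)=H(-1)=0$ there, that equation couples only the zero-, one-, and two-phonon coordinates, and first I would solve it for the single-phonon coordinate, getting $c_{J\lambda^1\lambda LM_L}(k)=\bigl[(-1)^{\lambda+1}\sqrt{2\lambda+1}\,U_{\lambda L}(J\lambda L0k)\,c_{LM_L}(k)-\sum_{J',\Lambda'}U_{\lambda L}(J\lambda J'\Lambda'k)\,\upsilon(\lambda^2\lambda\Lambda')\,c_{J'\lambda^2\Lambda'LM_L}(k)\bigr]/(cJ(J+1)+\omega(k)-E)$.

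Substituting this ratio into the definition of $\Sigma_{LM_L}$ splits the integrand into two contributions. In the one proportional to $c_{LM_L}(k)$ the two factors $(-1)^{\lambda+1}\sqrt{2\lambda+1}$ multiply to $2\lambda+1$, while $U_{\lambda L}(L0J\lambda k)\,U_{\lambda L}(J\lambda L0k)=\abs{U_{\lambda L}(J\lambda L0k)}^2$ by Remarks~\ref{rems:rems1}(2); summing over $J,\lambda$ reproduces $\Sigma_L^{(1)}(E,k)$ of \eqref{eq:selfk} verbatim. Because this contribution depends on the eigenvector only through its eigenvalue $E$, at the bottom of the spectrum $E=\E$ it is the fixed term $\pv\Sigma_L^{(1)}(\E,k)\,dk=\Sigma_L^{(1)}(\E)$ of \eqref{eq:infeigen2-1}, untouched by the supremum.

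The second, genuinely two-phonon, contribution carries the coefficient bookkeeping that I expect to be the most delicate algebraic step. After relabelling the dummy indices so that $J$ and $\Lambda$ are the impurity and total momenta of the two-phonon state and $J'$ is the intermediate single-phonon momentum, I would insert the explicit symmetrization coefficient. Using \eqref{eq:tauupsilon}, the conjugation symmetry $(\lambda^{n-1}(\Lambda')\lambda\Vert\lambda^n\Lambda)\equiv\ol{(\lambda^n\Lambda\Vert\lambda^{n-1}(\Lambda')\lambda)}$, and the defining value $(\lambda^1(\Lambda')\lambda\Vert\lambda^2\Lambda)=\delta_{\Lambda'\lambda}(\lambda\lambda\Vert\lambda^2\Lambda)$, one finds $\upsilon(\lambda^2\lambda\Lambda)=(-1)^\Lambda\sqrt{2(2\Lambda+1)}\,(\lambda\lambda\Vert\lambda^2\Lambda)=\sqrt{2(2\Lambda+1)}\,(\lambda\lambda\Vert\lambda^2\Lambda)$, the last step because $(\lambda\lambda\Vert\lambda^2\Lambda)$ vanishes unless $\Lambda$ is even. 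Combining this with the residual $(-1)^{\lambda+1}$ and the overall sign from the ratio turns the contribution into exactly the integrand of $\Sigma_{LM_L}'$ in \eqref{eq:infeigen2-2}; taking the supremum over $\Omega_{LM_L}(k)$ then gives $\Sigma_{LM_L}=\Sigma_L^{(1)}(\E)+\Sigma_{LM_L}'$, and feeding this into the $\min\{\,\cdot\,\}$ of Theorem~\ref{thm:infnumL} yields \eqref{eq:infeigen2}.

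The hard part, beyond the coefficient algebra, is the principal-value prescription. At $E=\E$ the denominators $cJ(J+1)+\omega(k)-\E$, and likewise $cJ'(J'+1)+\omega(k)-\E$, vanish on the set where $\E\in\sigma_1(k)$; under the hypothesis $\mu\ll dk$ with $\supp\phi=\ol{\Rp}$ and $\omega$ continuous this set is $dk$-negligible, yet the singularities are real, so the ordinary integral need not converge and must be read as the Cauchy principal value $\pv$. I would check that the ratios $c_{J\lambda^1\lambda LM_L}(k)/c_{LM_L}(k)$ are defined for $dk$-a.e.\ $k$, that the normalization \eqref{eq:infcnorm} keeps the integrands controlled off the singular set, and that the $\pv$ exists and commutes with the supremum over the eigenspace; the interchange with the sums, finite by \eqref{eq:c00}, is then routine.
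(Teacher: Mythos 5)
Your proposal is correct and follows essentially the same route as the paper: it starts from Theorem~\ref{thm:infnumL} and \eqref{eq:infAL}, solves \eqref{eq:ceigen-b} at $n=1$ for $c_{J\lambda^1\lambda LM_L}(k)/c_{LM_L}(k)$, splits $\Sigma_{LM_L}$ into the single-phonon self-energy plus the two-phonon correction $\Sigma_{LM_L}^\prime$, and uses $E\geq\E$ together with the principal-value prescription to identify the first term with $\pv\Sigma_L^{(1)}(\E,k)\,dk$. Your coefficient bookkeeping (in particular $\upsilon(\lambda^2\lambda\Lambda)=\sqrt{2(2\Lambda+1)}\,(\lambda\lambda\Vert\lambda^2\Lambda)$ and the use of Remarks~\ref{rems:rems1}(2)) checks out and is in fact more explicit than the paper's own terse proof.
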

\begin{cor}\label{cor:infnumL2}
Assume that $\mu\ll dk$, with the Radon--Nikodym derivative supported on $\ol{\Rp}$. 
Then, $A_0\geq\E$, where the infimum $\E=\E_{00}$ satisfies 
\begin{subequations}\label{eq:infeigen3}
\begin{align}
\E=&\min\{-\Sigma_{0}^{(1)}(\E)-\Sigma_{0}^{(2)}(\E)-\Sigma_{00}^{\prime\prime},0\}, 
\;\text{with} \\
\Sigma_{0}^{(j)}(\E)\dfn&\pv\Sigma_{0}^{(j)}(\E,k)dk,\quad j\in\{1,2\}
\end{align}
provided that the principal value exists, where $\Sigma_{0}^{(1)}(\cdot,k)$ and 
$\Sigma_{0}^{(2)}(\cdot,k)$ are as in \eqref{eq:selfk} and \eqref{eq:selfk2-2}, and
\begin{align}
\Sigma_{00}^{\prime\prime}\dfn&\sup\pv
\sum_\lambda\frac{(-1)^{\lambda+1}\sqrt{2}U_\lambda(k)^3(0\Vert Y_\lambda\Vert\lambda)}{
\sqrt{2\lambda+1}(c\lambda(\lambda+1)+\omega(k)-E-\epsilon_\lambda(E,k))} \nonumber \\
&\cdot\sum_\Lambda\frac{c_{\Lambda\lambda^3\Lambda00}(k)}{\sqrt{2\Lambda+1}c_{00}(k)}
\sum_{\Lambda^\prime}\frac{(\lambda\lambda\Vert\lambda^2\Lambda^\prime)
(\Lambda^\prime\Vert Y_\lambda\Vert\lambda)(\Lambda^\prime\Vert Y_\lambda\Vert\Lambda)
\upsilon(\lambda^3\Lambda^\prime\Lambda)}{\sqrt{2\Lambda^\prime+1}
(c\Lambda^\prime(\Lambda^\prime+1)+2\omega(k)-E)}dk
\end{align}
\end{subequations}
where $E=E_{00}$ is the eigenvalue of $A_0$.
The supremum is taken over the points of the eigenspace $\Omega_{00}(k)$ for a.e. $k$,
with the coordinates satisfying in addition \eqref{eq:infcnorm} with $L=0$.
\end{cor}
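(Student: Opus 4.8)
The plan is to obtain the statement as the $L=0$ specialization of Theorem~\ref{thm:infnumL}, feeding into it the eigenspace relations already recorded for the two-phonon sector in Corollary~\ref{cor:ceigen2}. First I would set $L=M_L=0$ in \eqref{eq:infeigen}. Since $cL(L+1)=0$, the theorem collapses to $\E=\min\{-\Sigma_{00},0\}$, so the whole task is to evaluate $\Sigma_{00}=\sup\int\sum_{J,\lambda}(-1)^{\lambda+1}\sqrt{2\lambda+1}\,U_{\lambda0}(00J\lambda k)\,c_{J\lambda^1\lambda00}(k)/c_{00}(k)\,dk$.

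Next I would simplify the $L=0$ coupling. By \eqref{eq:UUU} the relevant $6j$-symbol is $\sixj{0}{0}{0}{\lambda}{J}{\lambda}$, which is nonzero only for $J=\lambda$, where it equals $(2\lambda+1)^{-1/2}$; hence $U_{\lambda0}(00\lambda\lambda k)=(-1)^\lambda(2\lambda+1)^{-1/2}U_\lambda(k)(0\Vert Y_\lambda\Vert\lambda)$ and the sum over $J$ collapses. Using $(0\Vert Y_\lambda\Vert\lambda)=(-1)^\lambda\ol{(\lambda\Vert Y_\lambda\Vert0)}$ from Remarks~\ref{rems:rems1}(1), the integrand reduces to $-U_\lambda(k)(0\Vert Y_\lambda\Vert\lambda)\,c_{\lambda\lambda^1\lambda00}(k)/c_{00}(k)$.

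The core step is to insert the ratio $c_{\lambda\lambda^1\lambda00}(k)/c_{00}(k)$ from the eigenvalue relations \eqref{eq:cn3} (the $L=0$ form of \eqref{eq:ceigen-b} for $n=1,2$). Eliminating $c_{\Lambda\lambda^2\Lambda00}(k)$ between the $n=1$ and $n=2$ equations, the resulting $n=1\leftrightarrow n=2$ feedback dresses the bare denominator: with $a=c\lambda(\lambda+1)+\omega(k)-E$ and $\epsilon=\epsilon_\lambda(E,k)$ as in \eqref{eq:vareps}, the identity $(a-\epsilon)^{-1}=a^{-1}+\epsilon\,a^{-1}(a-\epsilon)^{-1}$ is exactly what turns the leading contribution into $\Sigma_0^{(1,2)}(E,k)=\Sigma_0^{(1)}(E,k)+\Sigma_0^{(2)}(E,k)$, just as in the proof of Corollary~\ref{cor:ceigen2}; the residual coupling of the $n=2$ coordinate to $c_{\Lambda\lambda^3\Lambda00}(k)$, carried by the SCFP $\upsilon(\lambda^3\Lambda^\prime\Lambda)$, is what survives as $\Sigma_{00}^{\prime\prime}$. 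For a strict two-phonon truncation $c_{\Lambda\lambda^3\Lambda00}(k)\equiv0$ and this term drops.

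Finally I would pass from $\sup\int\ldots dk$ to a Cauchy principal value. Under the hypothesis $\mu(dk)=\phi(k)\,dk$ with $\supp\phi=\ol{\Rp}$, the denominators $c\lambda(\lambda+1)+\omega(k)-E$ and $c\Lambda(\Lambda+1)+2\omega(k)-E$ vanish on subsets of $\Rp$ of positive Lebesgue measure as soon as $E$ meets $\sigma_1$ or $\sigma_2$, so the $k$-integrals of \eqref{eq:selfk} and \eqref{eq:selfk2-2} exist only in the principal-value sense $\Sigma_0^{(j)}(\E)=\pv\Sigma_0^{(j)}(\E,k)\,dk$; identifying the eigenvalue $E$ with the bottom $\E$ via Lemma~\ref{lem:infnumL} and collecting the three pieces gives $\Sigma_{00}=\Sigma_0^{(1)}(\E)+\Sigma_0^{(2)}(\E)+\Sigma_{00}^{\prime\prime}$, which is \eqref{eq:infeigen3}. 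The hard part will be precisely this last step: justifying convergence of the singular integrals as principal values and showing that the supremum over the eigenspace $\Omega_{00}(k)$ is attained at $E=\E$ even though $\E$ need not lie in $\Theta_0$ --- it is only guaranteed to sit at the bottom of $\ol{\Theta_0}$. The dichotomy in $\min\{\cdot,0\}$ simply records whether the dressed self-energy actually depresses the bottom below the free value $0$ (the case $\Sigma_{00}\geq0$) or not.
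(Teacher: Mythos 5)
Your proposal follows the paper's own route: it specializes Theorem~\ref{thm:infnumL} to $L=0$, eliminates $c_{\Lambda\lambda^2\Lambda00}(k)$ from the system \eqref{eq:cn3} to dress the $n=1$ denominator into $c\lambda(\lambda+1)+\omega(k)-E-\epsilon_\lambda(E,k)$ with the residual $n=3$ coupling becoming $\Sigma_{00}^{\prime\prime}$, and then uses $E\geq\E$ a.e.\ (the analogue of \eqref{eq:dhgf}) together with the principal-value prescription excluding the singular sets $\sigma_1\mcup\sigma_2\mcup\sigma_*$ to arrive at \eqref{eq:infeigen3}. This is essentially the paper's argument, and you correctly flag that the delicate point is the convergence of the singular integrals and the passage from the supremum over $\Omega_{00}(k)$ to the value at $\E$, which the paper itself treats only briefly.
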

When single-phonon excitations contribute notably ($N=1$), 
Corollary~\ref{cor:infnumL} approximates to
\begin{equation}
\E=cL(L+1)-\Sigma_{L}^{(1)}(\E) \quad \text{for} \quad \Sigma_{L}^{(1)}(\E)\geq0.
\label{eq:SLEnergy}
\end{equation}
In particular, relation \eqref{eq:SLEnergy} holds true for $\E<0$. In this case
$\pv$ in \eqref{eq:infeigen2-1} is just $\int$. If we further assume
\eqref{eq:Ysubm}, we see that \eqref{eq:SLEnergy} is exactly \eqref{eq:E-SL},
\eqref{eq:Self-SL}. That is, Corollary~\ref{cor:infnumL} reduces to the equation for
the energy first obtained in \cite{Schm-Lem-a} by minimizing the energy functional based 
on an expansion in single bath excitations. The coincidence should not be considered as 
an unexpected result if one recalls the proof of Lemma~\ref{lem:infnumL}.

By contrast, when two-phonon excitations contribute notably ($N=2$),
Corollary~\ref{cor:infnumL2} approximates to
\begin{subequations}\label{eq:SLEnergy2}
\begin{align}
\E=&-\Sigma_{0}^{(1,2)}(\E) \\
=&-\Sigma_{0}^{(1)}(\E)-\Sigma_{0}^{(2)}(\E)
\end{align}
\end{subequations}
for $\Sigma_{0}^{(1,2)}(\E)\geq0$, where we put
\[\Sigma_{0}^{(1,2)}(\E)\dfn\pv\Sigma_{0}^{(1,2)}(\E,k)dk\]
with $\Sigma_{0}^{(1,2)}(\cdot,k)$ as in \eqref{eq:selfk2-1}. If we assume
\eqref{eq:Ysubm}, we see that $\Sigma_{0}^{(1,2)}(\E)$ is exactly \eqref{eq:Self-N2L0}.

We now close the present section by giving the proofs of Theorem~\ref{thm:infnumL}
and Corollaries~\ref{cor:infnumL} and \ref{cor:infnumL2}.
\begin{proof}[Proof of Theorem~\ref{thm:infnumL}]
Since $\mu(dk)=\phi(k)dk$, with $\supp\phi=\ol{\Rp}$ and $\phi>0$ a.e. on $\Rp$,
\[\abs{\sigma}=0\Longleftrightarrow\mu(\sigma)=0 \;\text{and hence}\;
\abs{\sigma}>0\Longleftrightarrow\mu(\sigma)>0\]
for $\sigma\in\F$. Here $\abs{\cdot}$ is a standard Lebesgue (length) measure. Thus,
the relations that hold for $\mu$-a.e. $k$, also hold for a.e. $k$, and vice verse.
In particular, $E_{LM_L}$ is an eigenvalue of $A_L$ iff $E_{LM_L}$ is an eigenvalue of
$A_L(k)$ for a.e. $k$. If, further, $\psi_{LM_L}(k)$ is the corresponding eigenvector of 
$A_L(k)$ such that $\norm{\psi_{LM_L}}_\H=1$, then \eqref{eq:thetaL-1} coincides
with
\begin{subequations}\label{eq:thetaL-2}
\begin{align}
E[\Psi_{LM_L}]\dfn&\int E_{LM_L}\norm{\Psi_{LM_L}(k)}_{\H(k)}^2dk,
\quad\text{where} \\
\Psi_{LM_L}(k)\dfn&\sqrt{\phi(k)}\psi_{LM_L}(k)\quad a.e.
\end{align}
\end{subequations}
The normalization of the vector field $\psi_{LM_L}$ implies \eqref{eq:infcnorm} and
\begin{equation}
0<\norm{\Psi_{LM_L}(k)}_{\H(k)}^2<1\quad a.e.
\label{eq:thetaL-3}
\end{equation}
By Theorem~\ref{thm:eigen}, for a.e. $k$
\begin{align*}
E_{LM_L}=&cL(L+1)-\Sigma[\Psi_{LM_L}(k)], \quad\text{where} \\
\Sigma[\Psi_{LM_L}(k)]\dfn&\sum_{J,\lambda}(-1)^{\lambda+1}\sqrt{2\lambda+1}
U_{\lambda L}(L0J\lambda k)\frac{c_{J\lambda^1\lambda LM_L}(k)}{c_{LM_L}(k)}
\end{align*}
since $c_{LM_L}(k)\neq0$ for a.e. $k$. Substitute the latter in \eqref{eq:thetaL-2}
and get that 
\begin{subequations}\label{eq:thetaL-4}
\begin{align}
E[\Psi_{LM_L}]=&cL(L+1)-\Sigma[\Psi_{LM_L}], \quad\text{where} \\
\Sigma[\Psi_{LM_L}]\dfn&\int\Sigma[\Psi_{LM_L}(k)]\norm{\Psi_{LM_L}(k)}_{\H(k)}^2dk.
\end{align}
\end{subequations}

The closure $\ol{\mathfrak{E}_L}$ consists of real numbers $r_{LM_L}$ such that, for 
every $\epsilon>0$, there exists a $E[\Psi_{LM_L}]\in\mathfrak{E}_L$ such that
$\abs{r_{LM_L}-E[\Psi_{LM_L}]}<\epsilon$. Put
\begin{subequations}\label{eq:thetaL-5}
\begin{align}
E_*[\Psi_{LM_L}]\dfn&cL(L+1)-\Sigma_*[\Psi_{LM_L}], \quad\text{where} \\
\Sigma_*[\Psi_{LM_L}]\dfn&\int\Sigma[\Psi_{LM_L}(k)]dk.
\end{align}
\end{subequations}
It follows from \eqref{eq:thetaL-3}, \eqref{eq:thetaL-4}, \eqref{eq:thetaL-5} that
\[cL(L+1),E_*[\Psi_{LM_L}]\in\ol{\mathfrak{E}_L}.\]

Assume that $\Sigma[\Psi_{LM_L}(k)]>0$ a.e. Then $E[\Psi_{LM_L}]>E_*[\Psi_{LM_L}]$
and $\inf\ol{\mathfrak{E}_L}$ is given by 
\begin{align*}
\inf E_*[\Psi_{LM_L}]=&cL(L+1)+\inf(-\Sigma_*[\Psi_{LM_L}]) \\
=&cL(L+1)-\Sigma_{LM_L}, \quad \Sigma_{LM_L}\dfn \sup\Sigma_*[\Psi_{LM_L}]
\end{align*}
where the infimum (resp. supremum) is taken over the points of 
$\Omega_{LM_L}(k)$ for a.e. $k$, and with the coordinates satisfying \eqref{eq:infcnorm}.

Assume that $\Sigma[\Psi_{LM_L}(k)]\leq0$ a.e. Then 
$cL(L+1)\leq E[\Psi_{LM_L}]\leq E_*[\Psi_{LM_L}]$ and hence $\inf\ol{\mathfrak{E}_L}=
cL(L+1)$.
Since $\inf\ol{\mathfrak{E}_L}$ is unique and $\inf\ol{\mathfrak{E}_L}=\E_{LM_L}$ by 
Lemma~\ref{lem:infnumL}, 
one deduces \eqref{eq:infeigen}. This completes the proof.
\end{proof}
\begin{rem}
It follows from the above proof that $\E<cL(L+1)$ is an exterior point of $\Theta_L$. In this
case $\E$ is not an eigenvalue of $A_L$. However, $\E$ is an eigenvalue of the discontinuous
part of $A_L$, \ie $\E$ belongs to the closure of the point spectrum of $A_L$.
\end{rem}
\begin{proof}[Proof of Corollary~\ref{cor:infnumL}]
According to \eqref{eq:infAL}, $A_{L}\geq\E$, and $\E$ is found from 
Theorem~\ref{thm:infnumL}.
Thus, using \eqref{eq:infeigen} and \eqref{eq:ceigen-b} and \eqref{eq:selfk}
\[\Sigma_{LM_L}=\sup\pv\Sigma_{L}^{(1)}(E,k)dk+\Sigma_{LM_L}^\prime.\]
We have the Cauchy principal value because, according to the definition \eqref{eq:selfk}
and \eqref{eq:znull}, we have to exclude the neighborhood of $k$ such that $\E\in \sigma_1(k)$ 
a.e. Now $E\geq\E$ a.e. implies
\begin{equation}
\sup\pv\Sigma_{L}^{(1)}(E,k)dk=\pv\Sigma_{L}^{(1)}(\E,k)dk
\label{eq:dhgf}
\end{equation}
and this shows \eqref{eq:infeigen2}.
\end{proof}
\begin{proof}[Proof of Corollary~\ref{cor:infnumL2}]
Eliminate 
$c_{\Lambda\lambda^2\Lambda00}(k)$ from the system \eqref{eq:cn3},
then substitute the obtained $c_{\lambda\lambda^1\lambda00}(k)$ in \eqref{eq:infeigen}
with $L=0$, and get that
\[\Sigma_{00}=\Sigma_{0}^{(1)}(\E)+\sup\pv\Sigma_{0}^{(2)}(E,k)dk
+\Sigma_{00}^{\prime\prime}\]
where we also use \eqref{eq:dhgf}, since $E\geq\E$ a.e. For the same reason,
the second integral on the right equals $\Sigma_{0}^{(2)}(\E)$, and we have
\eqref{eq:infeigen3}. Note that the Cauchy principal value arises because, according to the 
definition \eqref{eq:selfk2-2}, we have to exclude the neighborhoods of the $k$'s such that 
$\E\in \sigma_1(k)\mcup\sigma_2(k)\mcup\sigma_*(k)$ a.e.
\end{proof}
\section{Summary and concluding remarks}
In the present paper our goal was to propose a systematic and self-consistent way 
of dealing with higher-order phonon excitations induced by the impurity-boson interaction
potential. To achieve the goal, we found the direct integral description of angulon
most convenient, since in this representation we were able to work independently of
the initially unknown occupation numbers of phonon states. Subsequently, the
angular momentum algebra were less involved than it seemed to be from the beginning. Our 
results agree
with those obtained from the minimization of energy functional when the many-body
quantum state is based on an expansion in single-phonon excitations. We note that,
within the framework of single bath excitations, the variational approach coincides
with the Green function formalism, as defined by Feynman diagrams\footnote{Private
communication with M. Lemeshko.}. This indicates that the variational energy, and hence
our obtained infimum of the spectrum, is a "true" lowest energy.

Most challenging tasks arise when one deals with higher-order 
phonon excitations. Theorems~\ref{thm:eigen} and \ref{thm:infnumL} allow one to calculate 
the energies when finitely many phonon excitations are considered. The solutions of
equations in Theorem~\ref{thm:eigen} seem to be complicated in general, but in 
special cases they are elegant enough. As an example, we have found that
the two-phonon excitations cannot be ignored for a molecule in superfluid $^4$He. When the
impurity-bath interaction excites three and more phonons at the same time, one needs in
addition the values of the SCFPs as given in Tab.~\ref{tab:t1}; the other values are obtained
from the formulas derived in Sec.~\ref{sec:scfps-prop}.

We believe that the results reported in the paper can be further developed. Of course,
finding the solutions of Theorems~\ref{thm:eigen} would be enough to accomplish the
major part of the task, but there are less complicated, yet physically relevant, 
pieces of work to do as well. For example, one would like to say more about the spectral 
parts of the angulon operator, as modeled in the present paper. In this respect the resolvent
is of special interest. The analysis of the resolvent allows one to extend the present 
results of the eigenspace to the entire spectrum, including resonances.
\section*{Acknowledgments}
The author acknowledges productive discussions with Mikhail Lemeshko.


\end{document}